%
%
%
%

\documentclass[prodmode]{acmsmall}

\usepackage{natbib}
\usepackage{amsmath,amssymb,epsfig,url,graphicx,xspace,stmaryrd}
\usepackage[ruled]{algorithm2e}
\usepackage[noend]{algorithmic-hacked}
\newcommand{\TAB}{\hspace{5mm}}




\newcommand{\initOneLiners}{%
 	\setlength{\itemsep}{0pt}
	\setlength{\parsep }{0pt}
  	\setlength{\topsep }{0pt}     	
}
\newenvironment{OneLiners}[1][\ensuremath{\bullet}]
    {\begin{list}
        {#1}
        {\initOneLiners}}
    {\end{list}}

\newcommand{\OMIT}[1]{}


\renewcommand{\Re}{\mathbb{R}}
\newcommand{\N}{\mathbb{N}}
\newcommand{\E} {\operatornamewithlimits{\ensuremath{\mathbb{E}}}} 
\newcommand{\argmax}{\operatornamewithlimits{argmax}}

\newcommand{\LDOTS}{\, ,\ \ldots\ ,}     

\newcommand{\xhdr}[1]{\vspace{2mm}\noindent{\bf #1}}
\newcommand{\refeq}[1]{Equation~(\ref{#1})}

\newcommand{\mA} {\ensuremath{\mathcal{A}}} 
\newcommand{\mP} {\ensuremath{\mathcal{P}}} 

\newcommand{\rM}{w} 
\newcommand{\rA}{s} 
\newcommand{\rP}{\bar{s}} 
\newcommand{\rN}{r} 
\newcommand{\rT}{q} 
\newcommand{\rY}{x} 
\newcommand{\rZ}{y} 

\newcommand{\half}{\frac{1}{2}}

\newcommand{\cout}[1]{}

\newcommand{\tA}{\widetilde{\mA}} 
\newcommand{\tP}{\widetilde{\mP}} 

\newcommand{\CanonProc}{canonical self-resampling procedure\xspace}
\newcommand{\FirstAlg}{\text{Algorithm~\ref{alg:canon-proc}}}
\newcommand{\SecondAlg}{{\mbox{\sc OneShot}}}
\newcommand{\mba}{x}
\newcommand{\mbb}{y}

\newcommand{\AAi}[1]{\mA_i(#1)} 
\newcommand{\PPi}[1]{\mP_i(#1)} 

\newcommand{\eAAi}[1]{\mA_i(#1;\rM, \rN)} 
\newcommand{\ePPi}[1]{\mP_i(#1;\rM, \rN)} 


\newcommand{\A}[1]{\mA(#1;\rA, \rN)} 
\newcommand{\Aii}[1]{\mA_i(#1;\rA, \rN)} 
\newcommand{\Pii}[1]{\mP_i(#1;\rP, \rN)} 


\newcommand{\noPositiveTransfers}{no-positive-transfers}

\newcommand{\outcomes}{\mathcal{O}}
\newcommand{\types}{\mathcal{T}}
\newcommand{\type}{x}

\newcommand{\UCB}{\ensuremath{\tt UCB1}}
\newcommand{\EXP}{\ensuremath{\tt EXP3}}
\newcommand{\bmax}{\ensuremath{b_\text{max}}}
\newcommand{\Rex}{\ensuremath{R_{\texttt{ex}}}} 

\newcommand{\MABprob}{MAB mechanism design problem}
\newcommand{\newUCB}{{\tt NewCB}}

\newcommand{\OPT}{\mathtt{OPT}}
\newcommand{\SW}{\mathtt{SW}} 

\usepackage[suppress]{color-edits}  
\addauthor{bk}{magenta} 
\addauthor{as}{red}      
\addauthor{mb}{blue}    



\acmVolume{0}
\acmNumber{0}
\acmArticle{0}
\acmYear{2014}
\acmMonth{0}

\newcommand{\qedhere}{}

\begin{document}

\markboth{M. Babaioff et al.}{Truthful Mechanisms  with Implicit Payment Computation}

\title{Truthful Mechanisms  with Implicit Payment Computation}

\date{First version: April 2010\\This version: November 2015}

\OMIT{ 
\footnote{This is a full version of a conference paper published in \emph{11th ACM Conf. on Electronic Commerce (EC)}, 2010.
Apart from the revised presentation, this version is updated to reflect the follow-up work and the current status of open questions. Moreover, the multi-parameter extension in Section~\ref{sec:multiParam} is new (i.e., this extension was not present in the conference version).}
} 

\author{MOSHE BABAIOFF
    \affil{Microsoft Research, Herzeliya, Israel.}
ROBERT D. KLEINBERG
\affil{Computer Science Department, Cornell University, Ithaca, NY, USA.}
ALEKSANDRS SLIVKINS
\affil{Microsoft Research, New York, NY, USA.}}


\begin{abstract}
It is widely believed that 
computing payments needed to induce truthful bidding
is somehow harder than simply computing the allocation.
We show that the opposite is true: 
creating a randomized truthful mechanism
is essentially as easy as a \emph{single} call to
a monotone allocation rule.
Our main result is a general procedure
to take a monotone allocation rule for a single-parameter domain
and transform it (via
a black-box reduction) into a randomized mechanism
that is truthful in expectation and 
individually rational for every realization.
The mechanism implements the same outcome as the
original allocation rule with probability
arbitrarily close to $1$,
and requires evaluating that allocation rule only
once. We also provide an extension of this result
to multi-parameter domains and cycle-monotone
allocation rules, under mild star-convexity
and non-negativity hypotheses on the type space
and allocation rule, respectively.

Because our reduction is simple, versatile, and
general,  it has many applications to mechanism
design problems in which re-evaluating the allocation rule is either burdensome or informationally impossible.
Applying our result to the
multi-armed bandit problem, we obtain truthful randomized
mechanisms whose regret matches the information-theoretic
lower bound up to logarithmic factors, even though prior
work showed this is impossible for truthful deterministic
mechanisms.  We also present applications to offline
mechanism design, showing that randomization can circumvent
a communication complexity lower bound for deterministic payments
computation, and that it can also be used to create truthful shortest path auctions that
approximate the welfare of the VCG allocation arbitrarily well, while
having the same running time complexity as Dijkstra's
algorithm.
\end{abstract}

\hyphenation{Compu-ters}
\hyphenation{Compu-tation}
\category{J.4}{Social and Behavioral Sciences}{Economics}
\category{K.4.4}{Computers and Society}{Electronic Commerce}
\category{F.2.2}{Analysis of Algorithms and Problem Complexity}{Nonnumerical Algorithms and Problems}

\terms{theory, algorithms, economics}

\keywords{algorithmic mechanism design, single-parameter mechanisms, multi-armed bandits, regret, multi-parameter mechanisms}


\begin{bottomstuff}
This is a merged and revised version of the conference papers \cite{Transform-ec10-conf,BKS2-ec13} that have appeared in the \emph{ACM Conf. on Electronic Commerce (ACM EC)} in 2010 and 2013, respectively. This paper contains all results from \cite{Transform-ec10-conf} and the main result from \cite{BKS2-ec13} (in Section~\ref{sec:multiParam}). This version is updated to reflect the current status of the follow-up work and open questions.
\vspace{2mm}

Parts of this research have been done while R. Kleinberg was a Consulting Researcher at Microsoft Research Silicon Valley. He was also supported by NSF Awards CCF-0643934 and AF-0910940, an Alfred P. Sloan Foundation Fellowship, and a Microsoft Research New Faculty Fellowship.
\vspace{2mm}
\end{bottomstuff}

\maketitle

\newpage
\section{Introduction}
\label{sec:intro}

Algorithmic Mechanism Design studies the problem of implementing
the designer's goal under computational constraints.
Multiple hurdles stand in the way for such implementation.
Computing the desired outcome might be hard
(as in the case of combinatorial auctions) or truthful payments
implementing the goal
might not exist 
(as when exactly minimizing the make-span in machine scheduling~\cite{ArcherTardos}).
Even when payments that will generate the right incentives do exist,
finding such payments might be computationally costly or impossible
due to online constraints.

It is widely believed that
computing payments needed to induce truthful bidding
is somehow harder than simply computing the allocation.
For example, the formula for payments in a
VCG mechanism involves recomputing the allocation with
one agent removed in order to determine
that agent's payment; this seemingly increases the
required amount of computation by a factor of $n+1$,
where $n$ is the number of agents.
Likewise, for truthful single-parameter mechanisms
the formula for payments of a given agent includes
integrating the allocation rule over this agent's bid~\cite{Myerson,ArcherTardos}.
In some contexts
with incomplete observable information, such as online pay-per-click
auctions, computing these ``counterfactual allocations''
may actually be information-theoretically impossible.
This calls into question the mechanism designer's ability
to compute payments that make an allocation rule
truthful, even when such payment functions are known
to exist.  Rigorous lower bounds based on these observations
have been established for the communication complexity~\cite{BabaioffBS13} 
and regret~\cite{MechMAB-ec09,DevanurK09} of
truthful {\em deterministic} mechanisms.

In contrast to these negative results,
we show that the opposite is true for randomized single-parameter mechanisms
that are truthful-in-expectation:
computing the allocation and payments
is essentially as easy as a \emph{single} call to the allocation rule.
This allows for positive results
that circumvent the lower bounds for deterministic
mechanisms cited earlier.

\subsection{Single-parameter mechanisms}

We consider an arbitrary single-parameter domain. The paradigmatic example is an auction that allocates items between agents whose utility is linear in the number of items they receive. The private information of each agent is expressed by a single parameter: her value per item.%
\footnote{In a general single-parameter domain, the allocation rule selects an outcome from some arbitrary collection of feasible outcomes. Each agent has her own type of ``good'', and for each agent there is an arbitrary, publicly known mapping from feasible outcomes to a real-valued amount of the corresponding good. The agent's utility is linear in this amount;  the value per unit amount of good is her private information.}
Each agent submits a bid, then the mechanism performs the allocation and charges payments. A mechanism is called ``truthful" if each agent maximizes her utility by submitting her true value per item. The allocation rule in a truthful mechanism is called ``truthfully implementable''. It is known that an allocation rule is truthfully implementable if and only if it is ``monotone": increasing one agent's bid while keeping all other bids the same does not decrease this agent's allocation~\cite{Myerson,ArcherTardos}. A similar property holds for randomized mechanisms and truthfulness-in-expectation.

\xhdr{Our contributions.}
Our main result is a general procedure to take any monotone-in-expectation allocation rule $\mA$ and transform it into a randomized mechanism that is truthful-in-expectation, implements the same outcome as $\mA$ with probability arbitrarily close to $1$, and requires evaluating that allocation rule only once. (We refer to this procedure as the \emph{generic transformation}.) The allocation rule $\mA$ is accessed only as a function call, so our result applies even if $\mA$ is an online algorithm. Moreover, for each realization of randomness an agent never loses by participating in the mechanism and bidding truthfully; thus the agents are protected from undesirable random deviations.

\OMIT{\footnote{Such mechanisms are called \emph{individually rational}.}}

We make a distinction between randomness in the mechanism and randomness in ``nature'': the environment that the mechanism interacts with. Randomness in nature is subject to modeling assumptions and hence is less ``reliable''; moreover, agents' beliefs about nature may be different from the mechanism's. On the other hand, randomness in the mechanism is fully controlled by the mechanism. Therefore it is desirable to design mechanisms that are truthful in a stronger sense: in expectation over the mechanism's random seed, for every realization of randomness in nature; we will call such mechanisms \emph{ex-post truthful}. It is easy to see from~\cite{Myerson,ArcherTardos} that in any ex-post truthful mechanism the allocation rule must satisfy ex-post monotonicity (which is defined similarly to ex-post truthfulness). In the generic transformation described above, if the original allocation rule $\mA$ is ex-post monotone then the resulting randomized mechanism is ex-post truthful.

Similarly, our result extends to Bayesian incentive-compatibility: if $\mA$ is monotone in expectation with respect to a Bayesian prior over other agents' bids, then the mechanism is truthful in expectation over this prior.

Our generic transformation is particularly useful for mechanism design problems in which re-evaluating the allocation rule is either burdensome or information-theoretically impossible.

\subsection{Bandit mechanisms}

A leading problem for which only a single call to the allocation rule can be evaluated is the multi-armed bandit (MAB) mechanism design problem~\cite{MechMAB-ec09,DevanurK09}. In this problem information about the state of the world is dynamically revealed during the allocation; the particular information that is revealed depends on the prior choices of the allocation, and in turn may impact the future choices. Simulating the allocation rule on different inputs may therefore require information that was not observed on the actual run. This ``informational obstacle" (insufficient observable information) is a crucial obstacle for deterministic ex-post truthful MAB mechanisms; it is used in~\cite{MechMAB-ec09} to derive that the appropriate payments cannot be computed unless the allocation rule is very ``na\"ive" (and therefore suboptimal).

To put more context, MAB mechanisms are motivated by online pay-per-click ad auctions, and were suggested in~\cite{MechMAB-ec09,DevanurK09} as a simple model which combines strategic bidding by agents and online learning by the mechanism. Each agent has a single ad that she wants to display to users, and derives utility only if her ad is clicked. The value per click is her private information. The allocation rule proceeds in rounds: in each round the mechanism allocates one ad to be shown to a user and observes whether this ad was clicked. The click probabilities (also known as ``click-through rates", or \emph{CTRs}) are unknown to the mechanism, and need to be estimated during the run of the allocation rule. All bids are submitted before the allocation starts, and all payments are assigned after it ends.

MAB mechanisms are related to MAB \emph{algorithms}: the allocation rule is essentially an MAB algorithm whose ``rewards" are clicks weighted by the corresponding bids. Moreover, welfare of an MAB mechanism is precisely the same as the total reward of its allocation rule.%
\footnote{This is because payments cancel out: the total amount paid by the agents is equal to the total amount received by the mechanism.}
Therefore one could directly compare the performance of truthful MAB mechanisms with that of MAB algorithms; both can be quantified using \emph{regret}: the loss in welfare compared to the benchmark which always picks the best ad.

Following~\cite{MechMAB-ec09,DevanurK09}, we focus on the stochastic version of the problem, i.e. we assume that the CTRs do not change over time. Then the ``randomness in nature" corresponds to the random clicks, and ex-post truthfulness means truthfulness for every realization of the clicks (but in expectation over the randomness in the mechanism). Note that ex-post truthfulness is a very strong property which holds even if the clicks are chosen by an oblivious adversary. As discussed in~\cite{MechMAB-ec09,DevanurK09}, this property is highly desirable, compared to the weaker notion of ``truthfulness in expectation over clicks", even if the corresponding mechanism has regret guarantees that only apply to the stochastic setting.

\xhdr{Our contributions.}
Applying our generic transformation to the MAB problem we derive that the problem of designing truthful MAB mechanisms
reduces to the problem of designing monotone MAB allocation rules. Such a problem has not been previously studied in the rich literature on MAB.

Our main result in this direction is a randomized MAB mechanism that is ex-post truthful and has regret $O(T^{1/2})$ for the stochastic version. This upper bound on regret matches the information-theoretic lower bound for \emph{algorithms} in the same setting (i.e., the lower bound holds even in the absence of incentive constraints).  This stands in contrast to the lower bound of~\cite{MechMAB-ec09}, where it was shown that deterministic ex-post truthful MAB mechanisms must suffer a larger regret of $\Omega(T^{2/3})$.

On a technical level, we design a new MAB allocation rule that is ex-post monotone and has regret $O(T^{1/2})$ for the stochastic setting. (We use it to obtain a randomized ex-post truthful MAB mechanism with the same regret.) Moreover, we show that \UCB~\cite{bandits-ucb1} (and a number of similar MAB algorithms)
give rise to MAB allocations that are monotone in expectation over clicks, and therefore can be transformed to randomized MAB mechanisms that are truthful in the same sense and have optimal regret.

The new ex-post monotone MAB allocation rule is deterministic, which rigorously confirms the intuition from~\cite{MechMAB-ec09,DevanurK09} that the impossibility results for deterministic MAB mechanisms are caused by the ``informational obstacle" (insufficient observable information about clicks) rather than ex-post monotonicity.

\subsection{Other contributions}

\xhdr{Power of randomization.} As a by-product of our analysis of MAB mechanisms, we obtain an unconditional separation between the power of randomized vs. deterministic ex-post truthful mechanisms for welfare maximization, in the online setting. (The separation result is unconditional in the sense that it considers exactly the same setting for both classes of mechanisms.) This complements the result of \citeN{DD09}, which gives a separation between these two classes of mechanisms in the \emph{offline} setting, under a polynomial communication complexity constraint. It is worth noting that the separation in~\citeN{DD09} applies to a rather unnatural problem (two-player multi-unit auctions in which if at least one item is allocated, then all items are allocated and each player receives at least one item) whereas our separation result is for a natural problem: online pay-per-click ad auctions for a single slot, with unknown click-through rates.

For the objective of revenue maximization, separations between randomized and deterministic mechanisms have been known for much longer
\cite{Than,MV,j-stoc11-dobzi,j-soda10-bries} and are in some sense
less surprising. Randomization allows the mechanism to access a larger
set of possible allocations, i.e.\ the set of all probability distributions
over pure allocations, and in some cases this leads to greater revenue,
for example by permitting more fine-grained price discrimination
between agent types.
This is not the case for the objective of maximizing
welfare (because VCG mechanisms are deterministic and they
maximize welfare pointwise while obeying incentive constraints).
For welfare maximization, randomized mechanisms are sometimes more powerful than deterministic ones due to other reasons, such as computational power or informational limitations (as in the problem we study).

\xhdr{Offline mechanisms.}
Our main result also has implications for offline mechanism design.
Nisan and Ronen, in their seminal paper~\cite{NR01} which started the field of algorithmic mechanism design, cite the apparent $n$-fold computational overhead of computing VCG payments and pose the open question of whether payments can be computed faster than solving $n$ versions of the original problem, e.g.~for VCG path auctions.  Our result shows that the answer is affirmative, if one adopts the truthful-in-expectation solution concept and tolerates a mechanism that outputs an outcome whose welfare is a $(1+\epsilon)$-approximation to that of the VCG allocation, for arbitrarily small $\epsilon>0$. \citeN{BabaioffBS13} 
present a social choice function $f$ in an $n$-player single-parameter domain such that the deterministic communication complexity required for truthfully implementing $f$ exceeds that required for evaluating $f$ by a factor of $n$. Our result shows that no such lower bound holds when one considers randomized mechanisms, again allowing for a small amount of random error in the allocation.

\xhdr{Extension to multi-parameter mechanisms.} We extend our generic transformation from  single-parameter to multi-parameter mechanisms. It is known that a multi-parameter allocation rule is truthfully implementable if and only if it satisfies a property called ``cycle-monotonicity''. (This is a rather strong property which specializes to monotonicity in the single-parameter case.) Similar to the single-parameter case, we present a general procedure to take any cycle-monotone allocation rule $\mA$ and transform it into a randomized mechanism that is truthful-in-expectation, implements the same outcome as $\mA$ with probability arbitrarily close to $1$, and requires evaluating that allocation rule only once. The technical contribution here is that we find a reduction from the multi-parameter setting to the single-parameter case.

While much more general that our single-parameter transformation, this result may be more difficult to apply. This is because cycle-monotonicity is known to be a very restrictive property. However, the follow-up work already provides two applications, see Section~\ref{sec:followup} for details.

\subsection{Map of the paper}

This paper makes four high-level contributions: the generic transformation for single-parameter mechanisms (Sections~\ref{sec:generic} and~\ref{sec:welfare}), the two applications to off-line mechanism design (Section~\ref{sec:apps}), the results on MAB mechanisms (Section~\ref{sec:MAB}), and an extension to multi-parameter mechanisms (Section~\ref{sec:multiParam}). Presenting these results requires a significant amount of preliminaries on mechanisms design (Section~\ref{sec:prelims}), multi-armed bandits (Section~\ref{sec:MAB-prelims}), and multi-parameter mechanism (Section~\ref{sec:multiParam-prelims}). We conclude with open questions (Section~\ref{sec:open-questions}).

A considerable amount of work followed up on the initial conference publication \cite{Transform-ec10-conf} of this paper. This work is discussed in Section~\ref{sec:followup}.

\section{Related work and follow-up work}
\label{sec:related-work}

The characterization of truthful mechanisms for single-parameter
domains, given by \citeN{Myerson} for single-item auctions
and by \citeN{ArcherTardos} for a more general class
of single-parameter problems, states that a mechanism is truthful
if and only if its allocation rule is monotone and its payment
rule charges each agent its value for the realized outcome,
minus a correction term expressed as an integral over all
types lower than the agent's declared type.
Exact computation of this correction term may be
intractable, but \citeN{APTT04}
developed a clever workaround: one can use random sampling
to compute an unbiased estimator of the correction term,
at the cost of evaluating the allocation rule once more.
Thus, for $n$ agents, the allocation rule must be
evaluated $n+1$ times: once to determine the actual
allocation, and once more per agent to determine that agent's
payment.  Our generic transformation relies on a generalization of this
random sampling technique, but we show how to avoid recomputing
the allocation rule when determining each agent's payment,
by coupling payment generation with the allocation itself.

The question of whether computing payments is computationally harder than computing the allocation was raised by \citeN{NR01} in the context of VCG path auctions.  The most significant progress to date was the communication complexity lower bound of \citeN{BabaioffBS13} 
mentioned above.

\OMIT{, who constructed a single-parameter domain with a monotone social choice function
whose communication complexity is $n$ times less than the communication complexity of any deterministic incentive-compatible mechanism implementing this function.
In contrast, our results show that no such lower bound arises when one considers randomized truthful-in-expectation mechanisms with arbitrarily small probability of outputting the wrong allocation.}

Payment computation in online mechanism design is a central issue in the analysis of truthful MAB mechanisms in \citeN{MechMAB-ec09} and \citeN{DevanurK09}. The main result of~\cite{MechMAB-ec09} is a characterization of deterministic ex-post truthful mechanisms. It is more restrictive than the Myerson and Archer-Tardos characterization. The reason is that computing an agent's payment requires knowing how many clicks she would have received if she had submitted a lower bid value, which may require the mechanism to hypothetically go back into the past and allocate impressions to a different agent for the purpose of seeing whether a user would have clicked on that agent's advertisement. Such counterfactual information is typically impossible to obtain in an online setting.

\citeN{MechMAB-ec09} focus on welfare maximization. Using the above characterization, they prove that any deterministic ex-post truthful MAB mechanism must incur regret $\Omega(T^{2/3})$, whereas MAB algorithms for the same setting can achieve regret $O(T^{1/2})$. \citeN{DevanurK09} consider revenue maximization, and derive a similar $\Omega(T^{2/3})$ lower bound on loss of revenue compared to the VCG payments.\footnote{For revenue-maximizing MAB mechanisms, there is no clear comparison with the performance of MAB algorithms.}

\OMIT{
MAB mechanisms are motivated by online pay-per-click ad auctions with unknown or uncertain clickthrough rates.  In the absence of incentive constraints, the problem of learning click-through rates can be modeled as a MAB problem, and it is known that there are algorithms whose \emph{regret} (essentially, the loss in welfare due to not knowing the clickthrough rates at the outset of the auction) is $O(\sqrt{T})$ where $T$ is the number of impressions; moreover, this dependence on $T$ is
information-theoretically optimal.}

\emph{Dynamic auctions}~\cite{AtheySegal-econometrica13,DynPivot-econometrica10,DynAuctions-survey10} constitute another setting in which information is revealed ``dynamically" (over time). However, while in MAB auctions all information from the agents (the bids) is submitted only once and then information is revealed to the mechanism by the environment over time, in dynamic auctions the agents continuously observe private ``signals" from the environment and submit ``actions" to the mechanism. Accordingly, providing the right incentives becomes much more challenging. On the other hand, existing work has focused on a fully Bayesian setting with known priors on the signals, whereas all of our results do not rely on priors.

Finally, several recent papers have explored the theme of reductions in algorithmic mechanism design.
Unlike our work
which requires mechanisms to be truthful for
every realization of the agents' types,
these papers
focus on Bayesian settings and adopt Bayesian
incentive-compatibility as their solution concept.
A reduction converting any allocation rule into a Bayesian incentive-compatible
mechanism with approximately the same expected social
welfare was developed in \cite{HL10,BH11,HKM11}.
\citeN{CIL12} considered black-box reductions of mechanism design problems to algorithmic problems with the same objective, and demonstrated significant limitations of this approach.
The breakthrough results of
Cai et al.~\citeyearpar{CDW12,CDW13a,CDW13b} and \citeN{DW14} circumvented these limitations by instead reducing to algorithmic problems with a modified objective.
In particular, reductions from revenue-maximizing mechanisms to welfare-maximizing algorithms are presented in \cite{CDW12,CDW13a}, whereas
\citeN{CDW13b} and \citeN{DW14} present reductions for non-linear objective functions, such as makespan in scheduling.%
\footnote{All papers discussed in this paragraph, except \cite{HL10}, have appeared after the conference publication of this paper \cite{Transform-ec10-conf}.}

\subsection{Follow-up work (subsequent to \cite{Transform-ec10-conf})}
\label{sec:followup}

\OMIT{Let us describe the work~\cite{SingleCall-ec12,Gatti-ec12,Parkes-netecon12,Jain-sagt11,Huang-ExpMech12} which followed up on the conference publication of this paper.}

Our generic transformation exhibits high variability in payments, and includes an explicit tradeoff between the variability in payments and the loss in performance. Formally, variability can be expressed as variance, maximal absolute value, or (for positive types) maximal rebate. Performance can be expressed as welfare or revenue. \citeN{SingleCall-ec12} have proved this tradeoff to be optimal in a certain worst-case sense: our transformation achieves the optimal worst-case variance in payments for any given worst-case loss in performance, where the worst case is over all monotone allocation rules. Their result applies to any single-parameter domain and any of the above notions of variability and performance.

Our generic transformation is likely to be very useful in single-parameter settings which exhibit the ``informational obstacle" (insufficient observable information) such as the one found for deterministic MAB mechanisms. The follow-up work describes three additional settings. First, \citeN{SingleCall-ec12} observe that the same obstacle arises in offline pay-per-click ad auctions with multiple ad slots, where the CTRs have slot-specific multipliers. In conjunction with our generic transformation, an obvious welfare-maximizing allocation rule for that setting results in a truthful-in-expectation mechanism. Second, \citeN{Parkes-netecon12} describe a packet scheduling problem in a network router, where the ``informational obstacle" arises due to the potentially missing information about packet arrival times. (As they observe, this information may be missing not only because it is not observed by the router but also because the router simply does not have space to store it.) They design a monotone allocation rule for their setting, and use our generic transformation to convert it to a truthful-in-expectation mechanism. Third, \citeN{Gatti-ec12} consider an extension of MAB mechanisms to multiple ad slots. While they provide truthful mechanisms based on the simple MAB mechanism from~\cite{MechMAB-ec09,DevanurK09}, our generic transformation could give rise to more efficient truthful mechanisms.

\citeN{SingleCall-ec12} obtain a similar ``single-call reduction" (i.e., a reduction from allocation rules to truthful-in-expectation mechanisms which calls the allocation rule only once) for multi-parameter allocation rules that are \emph{maximal-in-distributional-range} (MIDR). MIDR allocation rules~\cite{DD09} pick a welfare-maximizing distribution over outcomes from some fixed collection of distributions; they are precisely the allocation rules for which VCG payments produce a truthful mechanism. This result is an independent work with respect to, and a special case of, the multi-parameter reduction in Section~\ref{sec:multiParam}.

The multi-parameter generic transformation in Section~\ref{sec:multiParam} has been used in two recent papers. First, \citeN{Jain-sagt11} used it to speed up the payment computation for a mechanism that allocates batch jobs in a cloud system. Second, \citeN{Huang-ExpMech12} used it to compute payments for their privacy-preserving procurement auction for spanning trees, which is based on the well-known ``exponential privacy mechanism" from prior work~\cite{McSherry-Talwar-focs07}.

\xhdr{Simplified payment computation.}
Our generic transformation is most useful if the allocation rule cannot be invoked more than once, as in ``bandit mechanisms'' or other examples provided in follow-up work. \citeN{Segal-personal-10} has observed that any truthful single-parameter mechanism can be implemented in a much simpler way, as long as \emph{two} calls to the allocation rule are allowed: one computes the allocation, and the other one generates random payments with the correct expectation. In the first call one uses the original bids. For the second call, one selects an agent uniformly at random, and uses the random sampling trick from \citeN{ArcherTardos} described above to compute the payment for this agent, and then scales the payment appropriately.%
\footnote{However, more work is needed for domains with negative agents' types, such as VCG shortest path auctions (see Section~\ref{sec:apps} for more details). In particular, one needs to carefully define the random sampling of the bid for payment computation, using a version of the argument in Section~\ref{sec:negative-types} to bound the loss in welfare.}

Also, a simpler generic transformation is possible if one settles for a weaker notion of Bayesian incentive-compatibility \cite{Hartline-book-draft}.

\section{Preliminaries}
\label{sec:prelims}

\xhdr{Single-parameter domains.}
We present the single parameter model for which we apply our procedure. The model is very similar to the model of Archer and Tardos~\cite{ArcherTardos}, yet it is slightly more general. We state the model is terms of values and not costs and allow the values to be both positive and negative. We also allow randomization by nature. All these changes are minor and do not change the fundamental characterization, yet are helpful to later derive our results.

Let $n$ be the number of agents and let $N=[n]$ be the set of agents.
Each agent $i\in N$ has some private {\em type}
consisting of a single parameter $\type_i\in \types_i$ that describes
the agent, and is known only to $i$, everything else is public knowledge.
We assume that the domain $\types_i$ is an open subset of $\Re$ which is an interval with positive length (possibly starting from $-\infty$ or going up to $\infty$).
Let $\types=\types_1\times \types_2\times ... \times \types_n$ denote the domain of types and let $t\in \types$ denote the vector of true types.

There is some set of {\em outcomes} $\outcomes$.
For single-parameter domains, agents evaluate outcomes in a particular way that we describe next.
For each agent $i\in N$ there is a function $a_i:\outcomes\rightarrow \Re_{+}$ specifying the {\em allocation to agent $i$}.
The {\em value} of an outcome $o\in \outcomes$ for an agent $i\in N$ with type $\type_i$ is $\type_i\cdot a_i(o)$.
The {\em utility} that agent $i\in N$ derives from outcome $o\in \outcomes$ when he is charged $p_i$ is quasi-linear:
$u_i = \type_i\cdot a_i(o) - p_i$.

For instance, consider the allocation of $k$ identical units of good to agents with additive valuations:
agent $i$ has a value of $\type_i$ {\em per unit}. An outcome $o$ specifies how many items each agent receives: $a_i(o)$ is the number of items $i$ receives. His valuation for that outcome is his value per-unit times the number of units he receives.

A (direct revelation) deterministic {\em mechanism} $\mathcal{M}$ consists of the pair $(\mA, \mP)$,
where $\mA:\types\rightarrow \outcomes$ is the {\em allocation rule} and $\mP:\types\rightarrow \Re^n$ is the {\em payment rule}, i.e. the vector of payment functions $\mP_i:\types\rightarrow \Re$ for each agent $i$.
Each agent is required to report a type $b_i\in \types_i$ to the mechanism, and $b_i$ is called the {\em bid} of agent $i$.
We denote the vector of bids by $b\in \types$.
The mechanism picks an outcome $\mA(b)$ and charges agent $i$ payment of $\mP_i(b)$.
The allocation for agent $i$ when the bids are $b$ is $\mA_i(b)= a_i(\mA(b))$ and he is charged $\mP_i(b)$.
Agent $i$'s utility when the agents bid $b\in \types$ and his type is $\type_i\in \types_i$ is
\begin{equation}\label{eq:util-det}
u_i(\type_i,b) =  \type_i\cdot \mA_i(b) - \mP_i(b)
\end{equation}


We also consider randomized mechanisms, which are distributions over deterministic mechanisms. For a randomized allocation rule $\mA_i(b)$ and $\mP_i(b)$ will denote the {\em expected} allocation and payment charged from agent $i$, when the bids are $b$. The expectation is taken over the randomness of the mechanism. Sometimes it will be helpful to explicitly consider the deterministic allocation and payment that is generated for specific random seed. in this case we use $\rM$ to denote the random seed and use $\mA_i(b;\rM)$ and $\mP_i(b;\rM)$ to denote allocation and payment when the seed is $\rM$.

There may be some outside randomization that influences the outcome and is not controlled by the mechanism, e.g. randomness in the realization of clicks in sponsored search auction. We call this randomization by {\em nature}.  With such randomization  $\mA_i(b)$ and $\mP_i(b)$ also encapsulate expectations over nature's randomization. Finally, we use the notation $\eAAi{b}$ and $\ePPi{b}$ to denote the allocation and payment charged from agent $i$, when the bids are $b$, the mechanism random seed is $\rM$ and nature's random seed is $\rN$.

\cout{
We allow the allocation and payment rules to be randomized and also allow nature to randomize (meaning that the outcome might depend on nature randomization) and assume that agent are risk neutral, each aims to
maximize his expected utility.
Formally, an allocation rule $A$ is a function that takes bids $b$ and a random seed $\rA$ and outputs a distribution over outcomes (with randomness coming from nature), $A(b;\rA)\in \Delta(\outcomes)$, where $\Delta(\outcomes)$ is the space of probability distributions over outcomes $\outcomes$. Alternatively, fixing $b$ and $\rA$ we can also think of the allocation as a deterministic function of nature randomness $\rN$.
We denote that allocation with bids $b$, allocation random seed $\rA$ and nature random seed $\rN$ as $\A{b}\in \outcomes$.
Now $\Aii{b}$ denotes the allocation to agent $i$, that is  $\Aii{b} = a_i(\A{b})$.
Similarly, the payment rule might be randomized with random seed $\rP$, and we can denote
the deterministic payment of agent $i$ when the bids are $b$, its random seed is $\rP$ and nature random seed is $\rN$ by $\Pii{b}\in \Re$. 
Agent $i$'s utility when the agents bid $b$ and his type is $\type_i$ is

\begin{equation}\label{eq:util}
u_i(\type_i,b) =  \E_{\rA,\rP,\rN}\left[ \type_i\cdot \Aii{b} - \Pii{b} \right]
\end{equation}

Note that the expectation is over all randomness (nature and rules).\footnote{One can also consider Bayesian settings in which agents have a prior over the types of others and try to maximize the expected utility, where expectation is taken also over the randomization generating the types of the others (rather than for a fixed $b_{-i}$). Our framework and results can be easily extended to this case, by capturing any such randomness similarly to the way we capture other randomness by nature and taking expectation of this randomness as well.}
} 

\xhdr{Allocation and Mechanism Properties.}
Let $b_{-i}$ denote the vector of bids of all agents but agent $i$. We can now write the vector of bids as $b=(b_{-i}, b_i)$. Similar notation will be used for other vectors.

We next list two central properties, truthfulness and individual rationality.
\begin{itemize}
\item
Mechanism $\mathcal{M}$ is {\em truthful} if for every agent $i$ truthful bidding is a {\em dominant strategy}: for every agent $i$, bidding $\type_i$ always maximizes her utility,
regardless of what the other agents bid.
Formally,
\begin{equation}\label{eq:truthful}
\type_i\cdot \AAi{b_{-i}, \type_i} - \PPi{b_{-i}, \type_i} \ge \type_i\cdot \AAi{b} - \PPi{b}
\end{equation}
holds 
for every agent $i\in N$, type $\type_i\in \types_i$, bids of others $b_{-i}\in \types_{-i}$ and bid $b_i\in \types_i$ of agent $i$.
\item Mechanism $\mathcal{M}$ is {\em individually rational (IR)} if an agent never receives negative utility by participating in the mechanism and bidding truthfully. Formally,
    \begin{equation}\label{eq:ir}
    \type_i\cdot \AAi{b_{-i}, \type_i} - \PPi{b_{-i}, \type_i} \ge 0
    \end{equation}
    holds 
    for every agent $i\in N$, type $\type_i\in \types_i$ and bids of others $b_{-i}\in \types_{-i}$.
\end{itemize}

It will be helpful to establish terminology for the case that the above hold not only in expectation but also for specific realizations. For example, we will say that a mechanism is {\em universally truthful} if \refeq{eq:truthful} holds not only in expectation over the mechanism's randomness, but rather for every realization of that randomness. In general, every property that we define is defined by some inequality, and if the inequality holds for every realization of the mechanism randomness we say that it holds {\em universally}, and if it holds for every realization of nature randomness we say that it holds {\em ex-post}. When we want to emphasize that the property holds only in expectation over the nature's randomness we say that it holds {\em stochastically}.

\OMIT{ 
As we have multiple sources of randomness it will be helpful to establish some terminology for quantifiers over the different randomness sources. A given property $p$ will be defined by some inequality.
For a given property $p$, we say that $p$ {\em holds} if the inequality holds in expectation over all sources of randomness.
The property holds {\em ex-post} if the inequality holds for every realization of nature randomness.
The property holds {\em universally} if the inequality holds for every realization of the mechanism randomness.
Thus, $p$ holds universally and ex-post if the inequality holds for every realization of both nature and the mechanism randomness.

Note that if nature does not randomize and the mechanism is deterministic the above properties hold both universally and ex-post.

If truthful bidding is a dominant strategy for all agents we say that the mechanism is {\em truthful}.

Mechanism $M$ is {\em individually rational (IR)} if for every agent $i$ with any type $\type_i$ and for every bids $b_{-i}$, an agent who bid truthfully never incurs a net loss, i.e. $u_i(\type_i, (b_{-i}, \type_i)) \ge 0$
} 

Note that in an individually rational mechanism an agent is ensured not to incur any loss {\em in expectation}.
That is rather unsatisfying as for some realizations the agent might suffer a huge loss.
It is more desirable to design mechanisms that are {\em universally ex-post individually rational}, that is
a truthful agent should incur no loss for every bids of the others and {\em every realization} of the random events (not only in expectation).


If all types are positive, then in addition to individual rationality it is desirable that all agents are charged a non-negative amount; this is known as the \emph{\noPositiveTransfers} property.
\OMIT{ 
If all types are positive, a stronger property is desirable (and satisfied by~\refeq{eq:myerson}): a mechanism is called \emph{normalized} if it is individually rational and moreover all agents are charged a non-negative amount.
} 
Finally, the \emph{welfare} of a truthful mechanism is defined to be the total utility
	$\sum_i \type_i \cdot \mA_i(t)$.

\xhdr{Characterization.}
The following characterization of truthful mechanisms, due to Archer and Tardos~\cite{ArcherTardos}, is almost identical to the characterization presented by Myerson~\cite{Myerson} for truthful mechanisms in the special case of single item auctions. The crucial property of an allocation that yields truthfulness is \emph{monotonicity}, defined as follows:

\begin{definition}\label{def:mon}
Allocation rule $\mA$ is {\em monotone} if for every agent $i\in N$, bids $b_{-i}\in \types_{-i}$ and two possible bids of $i$, $b_i\ge b^{-}_i$, we have
$\AAi{b_{-i}, b_i}\ge  \AAi{b_{-i}, b^{-}_i}$.
\end{definition}
Recall that monotonicity of an allocation rule is also defined universally and/or ex-post.

We next present the characterization of truthful mechanisms.  In the theorem statement, the
expression $\AAi{b_{-i},u}$ is interpreted to equal zero when $u \not\in \types_i.$

\begin{theorem}\label{thm:myerson}~\cite{Myerson,ArcherTardos}
Consider an arbitrary single-parameter domain. An allocation rule $\mA$ admits a payment rule $\mP$ such that the mechanism $(\mA,\mP)$ is truthful if and only if $\mA$ is monotone and moreover for each agent $i$ and bid vector $b$
it holds that
 $ \int_{-\infty}^{b_i} \AAi{b_{-i}, u} \,du < \infty$.
In this case the payment $\PPi{b}$ for each agent $i$ must satisfy
\begin{align}\label{eq:myerson}
\PPi{b}  =
 \mP_i^0(b_{-i}) +
 b_i\, \AAi{b_{-i}, b_i} -
 \textstyle{\int^{b_i}_{-\infty}}\,  \AAi{b_{-i}, u} \,du,
\end{align}
where $\mP_i^0(b_{-i})$ does not depend on $b_i$.
\end{theorem}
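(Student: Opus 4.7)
The plan is to prove both directions of the biconditional using the standard two-deviation (incentive compatibility inequality) argument, adapted to the full-line domain where types may be negative. I will work throughout with the expected allocation $\AAi{\cdot}$ and expected payment $\PPi{\cdot}$, so the argument automatically covers randomized mechanisms.

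\emph{Necessity.} Assume $(\mA,\mP)$ is truthful. Fix agent $i$, fix $b_{-i}$, and pick any two types $b_i > b_i^-$ in $T_i$. Truthfulness applied first to true type $b_i$ (so that bidding $b_i^-$ is not better) and then to true type $b_i^-$ (so that bidding $b_i$ is not better) yields two inequalities of the form $b\,\AAi{b_{-i},b} - \PPi{b_{-i},b} \ge b\,\AAi{b_{-i},b') - \PPi{b_{-i},b'}$. Adding them cancels the payments and leaves $(b_i - b_i^-)\bigl(\AAi{b_{-i},b_i} - \AAi{b_{-i},b_i^-}\bigr) \ge 0$, giving monotonicity. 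For the payment formula, define the equilibrium utility $V_i(t) := t\,\AAi{b_{-i},t} - \PPi{b_{-i},t}$. Truthfulness says $V_i(t) = \sup_{s \in T_i}\bigl\{t\,\AAi{b_{-i},s} - \PPi{b_{-i},s}\bigr\}$, which is a supremum of affine functions of $t$ and is therefore convex; by the envelope theorem (or directly by the two-deviation bound $\AAi{b_{-i}, s}\le (V_i(t)-V_i(s))/(t-s) \le \AAi{b_{-i}, t}$ for $s<t$ combined with monotonicity) one gets $V_i'(t) = \AAi{b_{-i}, t}$ at every continuity point of $\AAi{b_{-i}, \cdot}$, hence almost everywhere. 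Integrating and using absolute continuity of the convex function $V_i$ yields the Myerson formula with $\mP_i^0(b_{-i})$ being the constant of integration (formally, the limit of $V_i(t)+\int_{-\infty}^{t} \AAi{b_{-i}, u}\,du$ as $t\to -\infty$, which is well-defined because $V_i$ is monotone nondecreasing below any fixed point where $\AAi{b_{-i},\cdot}$ is nonnegative and the integral is finite by assumption).

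\emph{Sufficiency.} Assume $\mA$ is monotone with finite Myerson integral, and define $\mP$ by \refeq{eq:myerson} with any choice of $\mP_i^0(b_{-i})$. Fix $i$, $b_{-i}$, a true type $t_i$ and an alternative bid $b_i$; I will show bidding $t_i$ weakly dominates bidding $b_i$. Substituting the payment formula,
\begin{align*}
u_i(t_i,(b_{-i},t_i)) - u_i(t_i,(b_{-i},b_i))
 = \int_{b_i}^{t_i} \AAi{b_{-i}, u}\,du \;-\; (t_i - b_i)\,\AAi{b_{-i}, b_i}.
\end{align*}
If $t_i \ge b_i$, monotonicity gives $\AAi{b_{-i}, u} \ge \AAi{b_{-i}, b_i}$ for $u \in [b_i, t_i]$, so the right-hand side is nonnegative. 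If $t_i < b_i$, rewriting the expression as $(b_i - t_i)\AAi{b_{-i}, b_i} - \int_{t_i}^{b_i} \AAi{b_{-i}, u}\,du$ and using $\AAi{b_{-i}, u} \le \AAi{b_{-i}, b_i}$ on $[t_i, b_i]$ again yields nonnegativity. Hence \refeq{eq:truthful} holds.

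\emph{Uniqueness of the payment formula.} Finally, the representation \refeq{eq:myerson} is forced: any other payment rule $\widehat{\mP}$ inducing truthfulness must, by the necessity argument, yield the same equilibrium utility function $V_i$ up to an additive function of $b_{-i}$, so $\PPi{b} - \widehat{\mP}_i(b)$ depends only on $b_{-i}$ and can be absorbed into $\mP_i^0(b_{-i})$.

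\emph{Main obstacle.} The one nontrivial subtlety is the envelope/integration step over a general open interval $T_i \subseteq \mathbb{R}$: one must justify that the convex function $V_i$ is absolutely continuous and that the ``constant of integration'' exists (i.e.\ the antiderivative $\int_{-\infty}^{t}\AAi{b_{-i}, u}\,du$ is finite), which is exactly what the hypothesis $\int_{-\infty}^{b_i}\AAi{b_{-i}, u}\,du < \infty$ buys us. Everything else reduces to the standard cut-and-paste argument for single-parameter monotone allocation rules.
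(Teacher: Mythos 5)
The paper itself offers no proof of this theorem: it is imported verbatim (with minor adaptations to negative types) from Myerson and Archer--Tardos, so the only meaningful comparison is with the classical argument, and your proposal is exactly that argument. The two-deviation inequality for monotonicity, the convexity of the indirect utility $V_i$ as a supremum of affine functions of the true type, the envelope/fundamental-theorem step using local absolute continuity of a convex function on the open interval $T_i$, and the cut-and-paste computation for sufficiency are all correct, as is your observation that working with expected allocations and payments makes the argument cover randomized mechanisms for free.

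The one genuine loose end is the finiteness clause. As stated, the ``only if'' direction asserts that truthfulness forces $\int_{-\infty}^{b_i} \AAi{b_{-i},u}\,du < \infty$, and your necessity argument never derives this: you invoke ``the integral is finite by assumption'' when producing $\mP_i^0(b_{-i})$ as a limit at $-\infty$, which is circular in that direction. In fact truthfulness alone does not imply finiteness when $T_i$ is unbounded below (an allocation rule that always gives agent $i$ a fixed positive allocation, with payments identically zero, is truthful and monotone yet has a divergent integral); finiteness is automatic when $T_i$ is bounded below, since the integrand vanishes outside $T_i$ and is monotone and bounded by $\AAi{b_{-i},b_i}$ on $T_i \cap (-\infty,b_i]$, and otherwise it functions as a normalization hypothesis (in the spirit of Archer--Tardos's voluntary-participation condition) that makes the $-\infty$ lower limit in \refeq{eq:myerson} meaningful. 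The clean repair is to run your envelope identity between two types $t_0, b_i \in T_i$, namely $V_i(b_i)-V_i(t_0)=\int_{t_0}^{b_i}\AAi{b_{-i},u}\,du$, which needs no integrability hypothesis, and only then use the finiteness condition to rewrite the constant as $-V_i(t_0)+\int_{-\infty}^{t_0}\AAi{b_{-i},u}\,du$, making explicit that finiteness is taken as a hypothesis (or verified separately for bounded-below domains) rather than deduced from truthfulness. With that adjustment your proof matches the cited sources; the rest stands as written.
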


A mechanism is called \emph{normalized} if for each agent $i$ and every bid vector $b$, zero allocation implies a zero payment:
	$\AAi{b}=0 \Rightarrow \PPi{b}=0$.

\begin{corollary}\label{cor:myerson}
The truthful mechanism in Theorem~\ref{thm:myerson} is normalized if and only if $\mP_i^0(b_{-i}) \equiv 0$, in which case the mechanism is also individually rational and for positive-only types ($\types\subset \Re^n_+$) it moreover satisfies the no-positive-transfers property.
\end{corollary}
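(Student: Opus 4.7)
The plan is to prove each of the three claims (characterization of normalization, individual rationality, and no-positive-transfers) by direct manipulation of the Myerson payment formula~\refeq{eq:myerson}.

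For the ``if'' direction of the normalization characterization, I would assume $\mP_i^0 \equiv 0$ and fix any bid vector $b$ with $\mA_i(b) = 0$. Since $\mA_i$ is non-negative and monotone in agent $i$'s bid, zero allocation at $b_i$ forces $\mA_i(b_{-i}, u) = 0$ for all $u \le b_i$ with $u \in T_i$; this is also consistent with the convention $\mA_i(b_{-i}, u) = 0$ for $u \notin T_i$. Both the integral in~\refeq{eq:myerson} and the term $b_i \mA_i(b_{-i}, b_i)$ then vanish, yielding $\mP_i(b) = 0$. For the ``only if'' direction, I would fix $b_{-i}$ and exhibit a bid $b_i^{\star} \in T_i$ at which $\mA_i(b_{-i}, b_i^{\star}) = 0$ (or take a sequence of such bids converging to the lower end of $T_i$, using that $T_i$ is an open interval and that finiteness of the Myerson integral together with monotonicity and non-negativity forces the allocation to decay to zero at the lower end). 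At such a bid, \refeq{eq:myerson} reduces to $\mP_i(b_{-i}, b_i^{\star}) = \mP_i^0(b_{-i})$, so normalization forces $\mP_i^0(b_{-i}) = 0$.

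For individual rationality, substituting $b_i = t_i$ into~\refeq{eq:util-det} with the Myerson payment formula and $\mP_i^0 \equiv 0$ causes the term $t_i\, \mA_i(b_{-i}, t_i)$ to cancel, leaving $u_i(t_i, (b_{-i}, t_i)) = \int_{-\infty}^{t_i} \mA_i(b_{-i}, u)\, du \ge 0$, where non-negativity follows from $\mA_i \ge 0$.

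For the no-positive-transfers property under $T \subset \Re^n_+$, the convention makes $\mA_i$ vanish on $(-\infty, 0]$, so~\refeq{eq:myerson} reduces to $\mP_i(b) = b_i\, \mA_i(b) - \int_0^{b_i} \mA_i(b_{-i}, u)\, du$. Monotonicity gives $\mA_i(b_{-i}, u) \le \mA_i(b)$ for all $u \le b_i$, so the integral is at most $b_i\, \mA_i(b)$, yielding $\mP_i(b) \ge 0$. The main obstacle is the ``only if'' direction of the characterization, where one must exhibit a bid at which the allocation vanishes (or a limiting argument based on integrability); everything else is a direct algebraic manipulation of~\refeq{eq:myerson}.
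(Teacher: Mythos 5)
Your ``if'' direction, the individual-rationality computation, and the no-positive-transfers bound are exactly the intended direct manipulations of \refeq{eq:myerson} (the paper states the corollary without proof; these are the one-line arguments it has in mind): zero allocation at $b_i$ together with monotonicity and non-negativity kills both the integral and the $b_i\,\AAi{b}$ term; truthful bidding with $\mP_i^0\equiv 0$ leaves the utility equal to the Myerson integral, which is non-negative; and for positive types the integrand is at most $\AAi{b}$ on an interval of length at most $b_i$, giving $\PPi{b}\ge 0$.

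The gap is in your ``only if'' direction. Normalization is the pointwise implication $\AAi{b}=0 \Rightarrow \PPi{b}=0$; it constrains nothing at bids where the allocation is strictly positive. Your primary route therefore needs, for every $b_{-i}$, some $b_i^\star\in T_i$ with $\mA_i(b_{-i},b_i^\star)=0$, and nothing in the setup guarantees such a bid exists. Your fallback limiting argument does not repair this, for two reasons. First, the claim that finiteness of the Myerson integral plus monotonicity forces the allocation to decay to zero at the lower end of $T_i$ is false when $T_i$ has a finite lower endpoint: take $\mA_i(b_{-i},\cdot)\equiv 1$ on $T_i=(0,1)$, whose integral is finite. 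Second, even when the allocation does tend to zero (e.g.\ $T_i=\Re$ and $\mA_i(b_{-i},u)=e^u$), there may be no bid at which it is \emph{exactly} zero, and then the mechanism is vacuously normalized for any choice of $\mP_i^0(b_{-i})$; normalization says nothing about limits of payments along bids with positive allocation, so no limiting argument can extract $\mP_i^0(b_{-i})=0$. In short, the ``only if'' direction holds precisely under the implicit assumption that every $b_{-i}$ admits a zero-allocation bid (which is how the corollary is meant to be read and used); under that assumption your first argument --- at such a bid \refeq{eq:myerson} collapses to $\PPi{b}=\mP_i^0(b_{-i})$ --- is all that is needed, and the limiting fallback should be dropped rather than relied upon.
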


\OMIT{
\begin{theorem}\label{thm:myerson}~\cite{Myerson,ArcherTardos}
A monotone allocation rule $\mA$ admits a payment rule $\mP$ such that the mechanism $(\mA,\mP)$ is truthful and individually rational if and only if
for each agent $i$ and bid vector $b$
	$ \int_{-\infty}^{b_i} \AAi{b_{-i}, u} \,du < \infty$
In this case
\begin{align}\label{eq:myerson}
\PPi{b}  = b_i \AAi{b_{-i}, b_i} - \textstyle{\int^{b_i}_{-\infty}}\,  \AAi{b_{-i}, u} \,du.
\end{align}
\end{theorem}
}

Both Theorem~\ref{thm:myerson} and Corollary~\ref{cor:myerson} hold in the ``ex-post" sense (resp., ``universal" sense), if  $\mA_i$, $\mP_i$ and $\mP^0_i(b_{-i})$ are interpreted to mean their respective values for a specific random seed of nature (resp., mechanism). In Corollary~\ref{cor:myerson}, the mechanism is normalized in the same sense as it is truthful.

\section{The generic transformation for single-parameter domains}
\label{sec:generic}

This section presents a generic procedure which takes any monotone allocation rule for a single-parameter domain and creates a randomized truthful-in-expectation mechanism which attains the same outcome as the original allocation rule with high probability. The resulting mechanism uses the allocation rule as a ``black box,'' calls it only once, and allocates according to the this call. Henceforth, we will refer to this procedure as the \emph{generic transformation}.

Our main result --- the existence of the generic transformation with the desired properties --- can be stated informally as follows.

\begin{theorem}[Informal] \label{thm:main-informal}
Consider an arbitrary single-parameter domain with $n$ agents. Let $\mA$ be a monotone allocation rule for this domain. Then for each $\mu\in[0,1]$ there exists a truthful mechanism
	$\mathcal{M} = (\tA,\tP)$
with the following properties:
\begin{itemize}
\item $\mathcal{M}$ executes a single call to $\mA(\tilde{b})$ to compute the allocation, with a pre-processing step to compute the modified bid vector $\tilde{b}$, and a post-processing step to compute the payments. Both pre- and post-processing steps take $O(n)$ time and do not depend on $\mA$.

\item For any bid vector $b$ and any fixed random seed of nature allocations $\tA(b)$ and $\mA(b)$ are identical with probability at least $1-n\mu$.

\item $\mathcal{M}$ is universally ex-post individually rational. If all types are positive, then $\mathcal{M}$ is ex-post \noPositiveTransfers, and never pays any agent $i$ more than
	$b_i\cdot \mA_i(\mba)\cdot (\frac{1}{\mu}-1)$.
\OMIT{ 
Moreover, the modified types satisfy
	$\E[\rY_i(b_i;w_i)] = (1 - \frac{\mu}{2-\mu}) b_i$ for all $i$.
The welfare of $\mathcal{M}$ is at least
	$(1 - \frac{\mu}{2-\mu})$
times that of $\mA$.}
\end{itemize}
\end{theorem}

\newcommand{\GenericMech}{{\tt AllocToMech}}

Presenting the formal version of this result (Theorem~\ref{thm:maim-g}) requires defining the generic transformation. We begin with an informal description thereof. As evidenced
by~\refeq{eq:myerson}, the payment for agent $i$ is a
difference of two terms: the agent's reported utility
(i.e., the product of her bid and her allocation),
minus the integral of the allocation assigned to
every smaller bid value.  We charge the agent for
her reported utility, and we give her a random rebate
whose expectation equals the required integral.  When
integrating a function over a finite interval, an
unbiased estimator of the integral can be obtained
by sampling a uniformly random point of that
interval and evaluating the function at the sampled
point.  This idea was applied, in the context of
mechanism design, by \citeN{APTT04}.
Below, we show how to generalize the transformation
to allow for integrals over unbounded intervals,
as required by~\refeq{eq:myerson}.  Using this transformation
it is easy to transform any monotone
allocation rule into a randomized mechanism that is truthful
in expectation and only evaluates the
allocation rule $n+1$ times: once to determine
the actual allocation, and once more per agent to
obtain an unbiased estimate of that agent's payment.

Our main innovation is a transformation that uses the same random sampling trick, but only needs to evaluate the allocation rule once during the entire mechanism. (In other words, it does not require additional calls to the allocation rule to compute the payments.) Assume that a parameter $\mu \in (0,1)$ is given.  For every player, with probability $1-\mu$, we leave their bid unchanged; with probability $\mu$, we sample a smaller bid value. The allocation rule is invoked on these bids. An agent is always charged her reported value of the outcome, but if her bid was replaced with a smaller bid value then we refund her an amount equal to an unbiased estimator of the integral in~\refeq{eq:myerson}, scaled by $1/\mu$ to counterbalance the fact that the refund is only being applied with probability $\mu.$ A na\"{i}ve application of this plan suffers from the following defect: the random resampling of bids modifies the expected allocation vector, so we need to obtain an unbiased estimator of the integral of the \emph{modified} allocation rule.  However, if we change our sampling procedure to obtain such an estimate, then this modifies the allocation rule once again, so we will still be estimating the wrong integral!  What we need is a ``fixed point'' of this process of redefining the sampling procedure. Below, we give a definition of \emph{self-resampling
procedures} that satisfy the requisite fixed point property,
and we give two simple constructions of self-resampling
procedures.

A self-resampling procedure transforms the bid $b_i$ of a given agent $i$ bid into \emph{two} correlated random values $(x_i,y_i)$, where $x_i$ is the modified bid presented to the original allocation rule, and $y_i$ is used (together with the allocation itself) in computing the payment for this agent. More specifically, $y_i$ is needed to correctly normalize the unbiased estimator of the integral in~\refeq{eq:myerson} for the modified allocation rule, according to Theorem~\ref{thm:estimator} below.
For agents with positive types we define a simpler self-resampling procedure for which the unbiased estimator does not depend on $y_i$, and therefore, strictly speaking, the procedure only needs to output $x_i$ (more details can be found in Section~\ref{sec:simpler-transformation}).
However, we explicitly return the $y_i$ even for the positive types so as to be consistent with the general definitions and (perhaps more importantly) because we use it to define self-resampling procedures with general support (see Section~\ref{sec:canon-proc-more}).


\OMIT{ 
\BKnote{Per Moshe's request, I tried adding some intuition here.
After several attempts, I still feel that the sentence I added
does more harm than good; I feel that it will only confuse the
reader.  Here is the best sentence I could come up with.
{\bf In intuitive terms, we use two random values instead
of one because the second (larger) random value corresponds
to the resampled bid that is used for computing an unbiased
estimator of the integral, and the first (smaller) random
value results in an additional modification of the allocation rule, with the net effect that no further modification is
needed to correct for the modification of the allocation rule that takes place when a player's bid is replaced
with the second random value for pricing purposes.}
}} 

Thus, the formal description of our generic transformation consists
of three parts:
\begin{enumerate}
\item a method for estimating integrals by
evaluating the integrand at a randomly sampled point,
\item the definition and construction of self-resampling
procedures,
\item the generic transformation that
uses the previous two ingredients to convert any
monotone allocation rule into a truthful-in-expectation
randomized mechanism.
\end{enumerate}
We now specify the details of
each of these three parts.

\subsection{Estimating integrals via random sampling}
Let $I$ be a nonempty open interval in $\Re$ (possibly with infinite endpoints)
and let $g$
be a function defined on $I$.  Let us describe a procedure
for estimating the integral $\int_{I} g(z) \, dz$
by evaluating $g$ at a single randomly sampled point of $I$.
The procedure is well known; we describe it here for
the purpose of giving a self-contained exposition of our
algorithm.

\begin{theorem} \label{thm:estimator}
Let $F : I \rightarrow [0,1]$ be any strictly increasing
function that is differentiable and satisfies
	$\inf_{z\in I} F(z) = 0$ and $\sup_{z\in I} F(z) = 1$.
If $Y$ is a random variable with cumulative distribution
function $F$, then
\begin{align*}
    \int_{I} g(z) \, dz = \E\left[  \frac{g(Y)}{F'(Y)} \right].
\end{align*}
\end{theorem}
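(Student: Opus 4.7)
The plan is to recognize Theorem~\ref{thm:estimator} as a standard importance-sampling identity and prove it by a direct application of the law of the unconscious statistician. The key observation is that the hypotheses on $F$---strictly increasing, differentiable, with $\inf F = 0$ and $\sup F = 1$ on the open interval $I$---make $F$ a continuous CDF on $I$ whose derivative $F'$ is the probability density of $Y$.

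First I would verify that $F'$ is indeed the density of $Y$. Since $F$ is differentiable and strictly increasing on $I$ with limits $0$ and $1$ at the two endpoints of $I$, for any subinterval $[a,b]\subseteq I$ one has $\Pr[a\le Y\le b] = F(b)-F(a) = \int_a^b F'(z)\,dz$ by the fundamental theorem of calculus, so $F'$ serves as a density for $Y$ and $\Pr[Y\in A] = \int_A F'(z)\,dz$ for every measurable $A\subseteq I$.

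Then by the law of the unconscious statistician,
\[
\E\!\left[\frac{g(Y)}{F'(Y)}\right] \;=\; \int_I \frac{g(z)}{F'(z)}\,F'(z)\,dz \;=\; \int_I g(z)\,dz,
\]
the two copies of $F'$ cancelling wherever $F'(z)>0$ and yielding the stated identity.

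The only minor subtlety is that the integrand $g(z)/F'(z)$ is not defined at points where $F'(z)=0$. However, $\Pr\bigl[Y\in\{z : F'(z)=0\}\bigr] = \int_{\{F'=0\}} F'(z)\,dz = 0$, so the ratio $g(Y)/F'(Y)$ is defined with probability one and such points contribute nothing to either side. I do not expect this to be a genuine obstacle: once the density interpretation of $F'$ is in place, the proof is essentially one line, and the remainder of the argument is a matter of noting that $g$ is assumed integrable on $I$ so that Fubini/Tonelli-style issues do not arise.
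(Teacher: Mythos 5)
Your proof is correct and follows essentially the same route as the paper's: identify $F'$ as the density of $Y$ and apply $\E[h(Y)]=\int_I h(z)F'(z)\,dz$ with $h(z)=g(z)/F'(z)$. The extra remarks about points where $F'(z)=0$ and integrability of $g$ are harmless refinements of the same one-line argument.
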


\begin{proof}
Since $\inf_{z\in I} F(z)=0$
and $\sup_{z\in I} F(z)=1$, it follows that the
random variable $Y$ is supported on the entire interval
$I$.  Our assumption that $F$ is differentiable
implies that $Y$ has a probability density function,
namely $F'(z).$  Thus, for any function $h$, the
expectation of $h(Y)$ is given by
$\int_I h(z) F'(z) \, dz.$  Applying
this formula to the function $h(z) = g(z)/F'(z)$
one obtains the theorem.
\end{proof}

\subsection{Self-resampling procedures}
The basic ingredient of our generic transformation is a
procedure for taking a bid $b_i$ and a random seed $\rM_i$,
and producing two random numbers $\rY_i(b_i;\rM_i), \, \rZ_i(b_i;\rM_i).$
The mechanism will use $\{ \rY_i(b_i;\rM_i)\}_{i\in N}$ for determining the allocation
and additionally $\rZ_i(b_i;\rM_i)$ for determining the payment it charges agent $i$.  To prove
that the mechanism is truthful in expectation we will require
the following properties.\footnote{To keep the notation consistent, we state Definition~\ref{def:srsp} for a given agent $i$. Strictly speaking, the subscript $i$ is not necessary.}

\begin{definition} \label{def:srsp}
Let $I$ be a nonempty interval in $\Re$.
A {\bf self-resampling procedure} 
with support $I$ and resampling probability $\mu\in(0,1)$
is a randomized algorithm with input
$b_i \in I$, random seed $\rM_i$, and output
	$\rY_i(b_i;\rM_i),\, \rZ_i(b_i;\rM_i) \in I$,
that satisfies the following properties:
\begin{enumerate}
\item \label{srsp-1}
For every fixed $\rM_i$,
	$\rY_i(b_i;\rM_i)$ and $\rZ_i(b_i;\rM_i)$
are non-decreasing functions of $b_i$.

\item \label{srsp-2}
With probability $1-\mu$, 
$\rY_i(b_i;\rM_i) = \rZ_i(b_i;\rM_i) = b_i$.
Otherwise 
$\rY_i(b_i;\rM_i) \leq \rZ_i(b_i;\rM_i) < b_i$.
\item \label{srsp-3}
The conditional
distribution of $\rY_i(b_i;\rM_i)$, given that
$\rZ_i(b_i;\rM_i) = b'_i < b_i$, is the same as the
unconditional distribution of $\rY_i(b'_i;\rM_i)$.
In other words,
\begin{align*}
	\Pr[\, \rY_i(b_i;\rM_i) < a_i  \;|\; \rZ_i(b_i;\rM_i) = b'_i \,]
		= \Pr[\, \rY(b'_i;\rM_i) < a_i \,], \quad
		\forall a_i \leq b'_i < b_i.
\end{align*}

\OMIT{ 
\mbedit{Can we give some intuition? Specifically, why we need both x and y.}
\BKnote{We can give some intuition, but not in the middle of the
formal definition.  I have added intuition on the first column of this page.
Do you think that the first column previously didn't give enough
intuition?  The sentence I added is confusing, and I would strongly
prefer to leave it out.  Hopefully someone else can come up with better
intuition.}} 

\item \label{srsp-4}
Consider the two-variable function
$$F(a_i,b_i) = \Pr[\rZ_i(b_i;\rM_i) < a_i \;|\; \rZ_i(b_i;\rM_i) < b_i],$$
which we will call the \emph{distribution function}
of the self-resampling procedure. For each $b_i$, the function $F(\cdot,b_i)$ must be differentiable and strictly increasing on the interval
$I \cap (-\infty,b_i)$.
\end{enumerate}
\end{definition}

As it happens, it is easier to construct self-resampling
procedures with support $\Re_+$, and one such construction
that we call the \emph{\CanonProc} (Algorithm~\ref{alg:canon-proc}) forms the basis for our general construction. We defer the discussion of self-resampling procedures with general support until after we have described and analyzed the generic transformation.

\newcommand{\ReplaceFunc}{\ensuremath{\mathtt{Recursive}}}
\newcommand{\WProb}[1]{{\bf with probability #1}}
\newcommand{\MyTab}{\hspace{4mm}}

\begin{center}
\begin{algorithm}[t]
\caption{The \CanonProc.}
\label{alg:canon-proc}
\begin{algorithmic}[1]
\STATE {\bf Input:} bid $b_i\in[0,\infty)$, parameter $\mu\in(0,1)$.
\STATE {\bf Output:} $(x_i,y_i)$ such that $0\leq x_i\leq y_i \leq b_i$.
\vspace{1mm}
\STATE \WProb{$1-\mu$}
\STATE \MyTab $x_i \leftarrow b_i$, $y_i \leftarrow b_i$.
\STATE {\bf else}
\STATE \MyTab Pick $b'_i \in [0,b_i]$ uniformly at random.
\STATE \MyTab $x_i \leftarrow \ReplaceFunc(b'_i)$, $y_i \leftarrow b'_i$.

\vspace{3mm}
\FUNC{$\ReplaceFunc(b_i)$}
\STATE \WProb{$1-\mu$}
\STATE \MyTab {\bf return} $b_i$.
\STATE {\bf else}
\STATE \MyTab Pick $b'_i \in [0,b_i]$ uniformly at random.
\STATE \MyTab {\bf return} $\ReplaceFunc(b'_i)$.
\ENDFUNC
\end{algorithmic}
\end{algorithm}
\end{center}

\begin{proposition} \label{prop:canon-proc}
Algorithm~\ref{alg:canon-proc} is a self-resampling procedure with support $\Re_+$ and resampling probability $\mu$. The distribution function for this  procedure is $F(a_i,b_i) = a_i/b_i$.
\end{proposition}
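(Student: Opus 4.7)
The plan is to verify Definition~\ref{def:srsp}'s four properties for Algorithm~\ref{alg:canon-proc} (identifying $x_i, y_i$ with $\rY_i, \rZ_i$) and to read off $F$. Properties~\ref{srsp-2} and~\ref{srsp-4} will come from inspecting the marginal law of $\rZ_i$; Property~\ref{srsp-3} will follow from the recursive structure of $\ReplaceFunc$; and Property~\ref{srsp-1} will need a short coupling argument, which is where most of the (small) work lies.

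I will begin with the marginal law of $\rZ_i(b_i;\rM_i)$. With probability $1-\mu$ the algorithm outputs $(b_i,b_i)$, so $\rZ_i = b_i$; otherwise $\rZ_i = b'_i$ is uniform in $[0,b_i]$. Hence $\Pr[\rZ_i < b_i] = \mu$ and, conditional on this event, $\rZ_i/b_i$ is uniform on $[0,1]$. This gives the distribution function
\[
F(a_i,b_i) \;=\; \Pr[\rZ_i < a_i \mid \rZ_i < b_i] \;=\; a_i/b_i
\]
for $0 \le a_i < b_i$, which is smooth and strictly increasing in $a_i$, verifying Property~\ref{srsp-4}. Combining with $\rY_i = \ReplaceFunc(b'_i) \le b'_i = \rZ_i$---since $\ReplaceFunc$ can only decrease its argument---this also yields Property~\ref{srsp-2}.

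For Property~\ref{srsp-3}, by construction, conditional on $\rZ_i(b_i;\rM_i) = b'_i < b_i$, the coupled output is $\rY_i(b_i;\rM_i) = \ReplaceFunc(b'_i)$. It therefore suffices to show that $\ReplaceFunc(b'_i)$ and $\rY_i(b'_i;\rM_i)$ have the same law; but by definition both return $b'_i$ with probability $1-\mu$ and otherwise pick $b''_i$ uniformly in $[0,b'_i]$ and recurse with $\ReplaceFunc(b''_i)$, which identifies their distributions (the recursion terminates almost surely since each level survives with probability $\mu<1$).

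Finally, for Property~\ref{srsp-1} I will fix a representation of $\rM_i$ as an i.i.d.\ sequence of pairs $(\xi_k,\eta_k)_{k\geq 1}$ with $\xi_k\sim\mathrm{Bernoulli}(\mu)$ and $\eta_k\sim\mathrm{Uniform}[0,1]$, and interpret each sampling step of the algorithm or of $\ReplaceFunc$ using one fresh pair: $\xi_k$ chooses the branch, and when the resampling branch is taken, $\eta_k$ realizes the uniform draw on $[0,c]$ as $\eta_k\cdot c$. Under this coupling, each of $\rY_i(b_i;\rM_i)$ and $\rZ_i(b_i;\rM_i)$ is either $b_i$ itself or $b_i$ times a finite product of some $\eta_k$'s whose indices are determined solely by the $\xi$-coins; in either case the output is a non-decreasing linear function of $b_i$ for the given seed. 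This establishes Property~\ref{srsp-1} and completes the proof.
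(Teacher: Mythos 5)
Your proposal is correct and follows essentially the same route as the paper's (sketched) proof: verify the four properties of Definition~\ref{def:srsp} directly and read off $F(a_i,b_i)=a_i/b_i$ from the fact that $\rZ_i$ is uniform on $[0,b_i]$ conditional on resampling. The only difference is that you flesh out the steps the paper calls ``immediate''---in particular your explicit seed representation for Property~\ref{srsp-1} matches the formalization the paper itself uses when restating the procedure in Appendix~\ref{app:canon-equiv}.
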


\begin{proof}
Properties~\ref{srsp-1}
and~\ref{srsp-2} in Definition~\ref{def:srsp} are
immediate from the description of the algorithm. The random seed $\rM_i$ for the algorithm can be defined as a countably infinite sequence of real numbers  drawn independently and uniformly at random from $[0,1]$ interval. Then in order pick a random number in some range $[0,r]$, the algorithm takes the next number in this sequence and multiplies it by $r$.

Property~\ref{srsp-3} follows from the recursive
nature of the sampling procedure:
the event
	$\rZ_i(b_i;\rM_i) = b'_i < b_i$
implies that the algorithm has followed the ``else" branch on Line 5, and has chosen $b'_i$ in Line 6. Finally, the distribution function is $F(a_i,b_i) = a_i/b_i$ since conditional on the event
$\rZ_i(b_i;\rM_i) < b_i$, the distribution of $\rZ_i(b_i;\rM_i)$
is uniform in the interval $[0,b_i]$.
Property~\ref{srsp-4} follows trivially.
\end{proof}

\subsection{The generic transformation}

Suppose we are given a monotone allocation rule
$\mA$ and for each agent $i \in N$ a self-resampling procedure that has
resampling probability $\mu \in (0,1)$, support $\types_i$, and output values $f_i=(\rY_i,\rZ_i)$.
Let $F_i(a_i,b_i)$ denote the distribution function of the
self-resampling procedure for agent $i$, and let
$F_i'(a_i,b_i)$ denote the partial derivative
$\frac{\partial F_i(a_i,b_i)}{\partial a_i}$.
Our generic transformation combines these ingredients
into a randomized mechanism
$\mathcal{M} = \GenericMech(\mA,\mu,\mathbf{f})$
that works as follows:

\renewcommand{\algorithmcfname}{Mechanism}
\begin{algorithm}[h]
\caption{Generic transformation
$\mathcal{M} = \GenericMech(\mA,\mu,\mathbf{f})$}
\label{mech:BKS-main}
\begin{algorithmic}[1]
\STATE Solicit bid vector $b \in \types$.
\STATE Execute each agent's self-resampling procedure
using an independent random seed $\rM_i$, to obtain two
vectors of modified bids
    \begin{align*}
    \mba &= (\rY_1(b_1;\rM_1) \LDOTS \rY_n(b_n;\rM_n)),\\
    \mbb &= (\rZ_1(b_1;\rM_1) \LDOTS \rZ_n(b_n;\rM_n)).
    \end{align*}
\STATE Allocate according to $\mA(\mba)$.
\STATE Each agent $i$ is charged
the amount $b_i \cdot \mA_i(\mba) - R_i$,
where $R_i$ is the \emph{rebate}
    \begin{align} \label{eq:rebate}
    R_i =
    \begin{cases}
        \frac{1}{\mu} \cdot \frac{\mA_i(\mba)}{F'_i(\mbb_i, b_i)}
           & \text{if $\mbb_i < b_i$}, \\
        0 & \text{otherwise}.
    \end{cases}
\end{align}
\end{algorithmic}
\end{algorithm}
\renewcommand{\algorithmcfname}{Algorithm}

If $\mA$ itself is randomized or if there is randomness arising from
nature, then we allocate according to $\mA(\mba;\rM,\rN)$
and we assume that the algorithm's random seed $\rM$ and the nature's random seed $\rN$ are independent of the random
seeds $\rM_i$ used in the resampling step.

We are now ready to present our main result:

\begin{theorem}\label{thm:maim-g}
Consider an arbitrary single-parameter domain.
Let $\mA$ be a monotone 
allocation rule.
Suppose we are given 
an ensemble $\mathbf{f}$ of self-resampling procedures $f_i = (\rY_i,\rZ_i)$
for each agent $i$, each with resampling probability $\mu \in (0,1)$.
Then the mechanism
	$\mathcal{M} = (\tA,\tP) = \GenericMech(\mA,\mu,\mathbf{f})$
has the following properties.
\begin{itemize}
\item[(a)] $\mathcal{M}$ is truthful, 
universally ex-post individually rational,
\item[(b)] For $n$ agents and any bid vector $b$ (and any fixed random seed of nature) allocations $\tA(b)$ and $\mA(b)$ are identical with probability at least $1-n\mu$.

\item[(c)] If $\types=\Re_+^n$ (all types are positive),
and each $f_i$ is the \CanonProc, then mechanism $\mathcal{M}$ is ex-post \noPositiveTransfers, and never pays any agent $i$ more than
	$b_i\cdot \mA_i(\mba)\cdot (\frac{1}{\mu}-1)$.
\OMIT{ 
Moreover, the modified types satisfy
	$\E[\rY_i(b_i;w_i)] = (1 - \frac{\mu}{2-\mu}) b_i$ for all $i$.
The welfare of $\mathcal{M}$ is at least
	$(1 - \frac{\mu}{2-\mu})$
times that of $\mA$.}
\end{itemize}
\end{theorem}

Several remarks are in order.

\begin{itemize}
\item The mechanism never explicitly computes the payment for each agent $i$ (\refeq{eq:myerson}) but rather implicitly creates the correct expected payments through its randomization of the bids.

\item The mechanism only invokes the original allocation rule $\mA$ {\em once}. This property is very useful when it is impossible to invoke the allocation rule more than once, e.g. for multi-armed bandit allocations.

\item The mechanism $\mathcal{M}$ is randomized even if $\mA$ is deterministic.
It is truthful in expectation over the randomness
used by the self-resampling procedures.

\item If $\mA$ is ex-post monotone, then $\mathcal{M}$ will be ex-post
truthful. To see this, fix nature's random seed $\rN$ and
apply Theorem~\ref{thm:maim-g} to the allocation rule $\mA_\rN$
induced by this $\rN$. 

\item If agents' types are positive then by part (b), the welfare of $\mathcal{M}$ is at least $1-n\mu$ times that of $\mA$. Further results on bounding the welfare loss are presented in Section~\ref{sec:welfare}.

\item By definition of the payment rule, the mechanism is universally ex-post normalized. We will not explicitly mention this property in the subsequent applications.
\end{itemize}

Parameter $\mu$ controls the trade-off between the loss in welfare and the variance in payments, as quantified by the rebate size $R_i$. If $\mu$ is very small and the mechanism issues rebate(s), then its revenue may be very low and possibly negative. However, this risk may be mitigated if the auction maker runs many independent auctions, as may be the case in practice. Further, the follow-up paper \cite{SingleCall-ec12} proves that our welfare vs. variance trade-off is optimal.

\OMIT{ 
Below we present the proof of parts (a) and (b) of Theorem~\ref{thm:maim-g}.
The proof of part (c) depends on the analysis of the \CanonProc{} in Section~\ref{sec:canon-proc-more}. We defer this proof to Section~\ref{sec:pf-positive-types}.
} 

\begin{proof}[of Theorem~\ref{thm:maim-g}]
We start with some notation. $\tA_i(b_{-i},b_i;\rT)$ denotes the allocation for agent $i$ given the bid vector $b = (b_{-i},b_i)$ and the combined random seed
$\rT = (\rM_1,\ldots,\rM_n,\rM,\rN)$. When we write $\tA_i(b_{-i},u)$ without
indicating the dependence on the $\rT$, we are referring to the unconditional expectation of $\tA_i(b_{-i},u;\rT)$ over $\rT$.

To prove that $\mathcal{M}$ is truthful, we need to prove
two things: that the randomized allocation rule $\tA$
is monotone, and that the expected payment rule $\tP$ satisfies
\begin{equation} \label{eq:tmyerson}
\tP_i(b) =
b_i \tA_i(b_{-i},b_i) -
\textstyle{\int_{-\infty}^{b_i}} \tA_i(b_{-i},u) \, du.
\end{equation}

\OMIT{ 
Recall that when we write $\tA_i(b_{-i},u)$ without
indicating the dependence on the combined random seed
$\rT = (\rM_1,\ldots,\rM_n,\rM,\rN)$, it means that we
are referring to the unconditional expectation
of $\tA_i(b_{-i},u;\rT)$.
}

The monotonicity of randomized allocation rule
$\tA$ follows from the monotonicity of $\mA$ and
the monotonicity property~\ref{srsp-1} in the definition
of a self-resampling procedure.
To prove that $\tP_i$ satisfies~\refeq{eq:tmyerson},
we begin by recalling that the payment charged to
player $i$ is $b_i \mA_i(\mba) - R_i$, where the rebate $R_i$
is defined by~\refeq{eq:rebate}.
The expectation of $b_i \mA_i(\mba)$
is simply $b_i \tA_i(b_{-i},b_i)$, so to conclude
the proof of truthfulness we must show that
\begin{equation} \label{eq:toshow-1}
\E[R_i] = \textstyle{\int_{-\infty}^{b_i}} \tA_i(b_{-i},u) \, du.
\end{equation}
Our proof of \refeq{eq:toshow-1} begins by
observing that the conditional distribution of
$\mba_i$, given that $\mbb_i=u<b_i$, is the same as
the unconditional distribution of $\rY_i(u;\rM_i),$
by Property 3 of a self-resampling
procedure.
Combining this with the fact that the random seed
$\rM_i$ is independent of $\{\rM_j : j \neq i\}$,
we find that the conditional distribution of the tuple
$\mba=(\mba_{-i},\mba_i)$, given that $\mbb_i=u$, is the
same as the unconditional distribution of the vector
$\hat{\mba}$ of modified bids that $\mathcal{M}$ would
input into the allocation rule $\mA$ if the bid vector
were $(b_{-i},u)$ instead of $(b_{-i},b_i)$.  Taking
expectations, this implies that for all $u < b_i,$ we have
$\E[\mA_i(\mba) \,|\, \mbb_i=u] = \E[\mA_i(\hat{\mba})] =
\tA(b_{-i},u).$

Now apply Theorem~\ref{thm:estimator} with the
function $g(u) = \tA_i(b_{-i},u).$  Recalling that
$F_i(\cdot,b_i)$ is the cumulative distribution
function of $\mbb_i$ given that $\mbb_i < b_i$,
we apply the theorem to obtain
\begin{align} \label{eq:maim-g-2}
\int_{-\infty}^{b_i} \tA_i(b_{-i},u) \, du
&=
\E \left[ \left.
\frac{\tA_i(b_{-i},\mbb_i)}{F'_i(\mbb_i,b_i)}
\,\right|\,
\mbb_i < b_i \right] \nonumber
=
\E \left[ \left.
\frac{\mA_i(\mba)}{F'_i(\mbb_i,b_i)} \,\right|\, \mbb_i < b_i \right] \nonumber \\
&=
\mu \cdot \E[R_i \,|\, \mbb_i < b_i],
\end{align}
where the second equation follows from the equation
derived at the end of  the preceding paragraph, averaging over all $u < b_i.$
Observing that $R_i = 0$ unless $\mbb_i < b_i$, an event
that has probability $\mu$, we see that
$\E[R_i] = \mu \cdot \E[R_i \,|\, \mbb_i < b_i].$
Combined with \refeq{eq:maim-g-2}, this establishes
\refeq{eq:toshow-1} and completes the proof that
$\mathcal{M}$ is truthful.

Mechanism $\mathcal{M}$ is universally
ex-post individually rational because agent $i$
is never charged an amount greater than $b_i \tA_i(b;q)$.
Part (b) follows from the union bound: the probability
that $\mba_i = b_i$ for all $i$ is at least $1-n\mu$.
For part (c), note that by Proposition~\ref{prop:canon-proc},
the canonical self-resampling procedure has distribution function $F(a_i,b_i) = a_i/b_i$, hence
$F'_i(\mbb_i,b_i) = 1/b_i,$ for all $i, \mbb_i, b_i.$  The
rebate $R_i$ is  equal either to $0$ or to
	$\frac{1}{\mu} \cdot \frac{A_i(\mba)}{F'_i(\mbb_i,b_i)}
		= b_i \cdot A_i(\mba) \cdot \frac{1}{\mu}.$
We also charge $b_i \cdot A_i(\mba)$ to
agent $i$.  The claimed upper bound on the amount paid to
agent $i$ follows by combining these two terms.
\end{proof}

\subsection{Self-resampling procedures with general support}
\label{sec:canon-proc-more}

To construct a self-resampling procedure with
support in an arbitrary interval $I$, we can use the
following technique.  Suppose $h: (0,1] \times I \rightarrow I$
is a two-variable function such that the partial derivatives
$\partial h(z_i,b_i) / \partial z_i$ and
$\partial h(z_i,b_i) / \partial b_i$ are
well-defined and strictly positive at
every point $(z_i,b_i)\in (0,1] \times I$.  Suppose furthermore that
$h(1,b_i) = b_i$ and $\inf_{z_i\in (0,1]} \{h(z_i,b_i)\} = \inf(I)$ for all $b_i \in I$.
Then we define the \emph{$h$-\CanonProc}
	$(\rY^h_i, \rZ^h_i)$
with support $I$, by specifying that
\begin{align}\label{eq:h-canon-proc}
\left\{ \begin{array}{rcl}
\rY^h_i(b_i;\rM_i) 	&= h(\rY_i(1;\rM_i),b_i) \\
\rZ^h_i(b_i;\rM_i) 	&= h(\rZ_i(1;\rM_i),b_i),
\end{array} \right.
\end{align}
where $(\rY_i,\rZ_i)$ is the \CanonProc{} as defined in Algorithm~\ref{alg:canon-proc}.

\begin{proposition} \label{prop:h-canon-proc}
$(\rY^h_i, \rZ^h_i)$ as defined in~\refeq{eq:h-canon-proc} is a self-resampling procedure with support $I$ and resampling probability $\mu$. The distribution function for $(\rY^h_i, \rZ^h_i)$ is the unique two-variable function $F(a_i,b_i)$ such that
\begin{align}\label{eq:prop-h-canon-proc}
	h(F(a_i,b_i), b_i) = a_i
		\quad\text{for all $a_i,b_i \in I, \,a_i < b_i$}.
\end{align}
\end{proposition}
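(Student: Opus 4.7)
The plan is to verify each of the four properties in Definition~\ref{def:srsp} for the pair $(\rY^h_i, \rZ^h_i)$ defined by~\refeq{eq:h-canon-proc}, leveraging Proposition~\ref{prop:canon-proc} together with the regularity assumptions on $h$. Properties~\ref{srsp-1} and~\ref{srsp-2} are nearly immediate: $\rY_i(1;\rM_i)$ and $\rZ_i(1;\rM_i)$ are constant in $b_i$ for fixed $\rM_i$, so monotonicity in $b_i$ follows from $\partial h/\partial b_i>0$; and the event $\rY_i(1;\rM_i) = \rZ_i(1;\rM_i) = 1$ (which has probability $1-\mu$) combined with $h(1,b_i)=b_i$ transports the ``equal to $b_i$ versus strictly less than $b_i$'' dichotomy from the unit scale up to scale $b_i$, using strict monotonicity of $h(\cdot,b_i)$.

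For Property~\ref{srsp-4} and the distribution function formula, conditional on $\rZ^h_i(b_i;\rM_i)<b_i$ the variable $\rZ_i(1;\rM_i)$ is uniform on $[0,1)$ by Proposition~\ref{prop:canon-proc}, so $\rZ^h_i(b_i;\rM_i) = h(\rZ_i(1;\rM_i), b_i)$ has CDF equal to the implicit inverse $F(\cdot,b_i)$ defined by $h(F(a_i,b_i), b_i) = a_i$. The required differentiability and strict monotonicity of $F(\cdot,b_i)$ then follow from the implicit function theorem applied to $h(\cdot,b_i)$, whose derivative in $z_i$ is strictly positive by hypothesis.

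The crux is Property~\ref{srsp-3}. Fix $a_i \leq b'_i < b_i$ and condition on $\rZ^h_i(b_i;\rM_i) = b'_i$. Writing $\hat z := F(b'_i, b_i) \in (0,1)$, this is equivalent to conditioning on $\rZ_i(1;\rM_i) = \hat z$. Applying Property~\ref{srsp-3} for $(\rY_i, \rZ_i)$ at the base scale (with $b_i$ there replaced by $1$ and $b'_i$ by $\hat z$) and combining with the scaling equivariance of Algorithm~\ref{alg:canon-proc} on $\Re_+$ (implicit in Proposition~\ref{prop:srsp}), the conditional distribution of $\rY_i(1;\rM_i)$ given $\rZ_i(1;\rM_i) = \hat z$ coincides with the law of $\hat z \cdot \rY_i(1;\rM'_i)$ for a fresh seed $\rM'_i$. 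Pushing through $h(\cdot, b_i)$, the conditional distribution of $\rY^h_i(b_i;\rM_i)$ equals that of $h(\hat z\cdot\rY_i(1;\rM'_i), b_i)$.

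The final and most delicate step is to identify this law with the unconditional law of $\rY^h_i(b'_i;\rM'_i) = h(\rY_i(1;\rM'_i), b'_i)$, and this is the step I expect to be the main obstacle. The matching must exploit how the one-parameter family of reparametrizations $z \mapsto h(z,b)$ interacts with the self-similar uniform sampling underlying the canonical procedure. The cleanest way to handle it should be to present the $h$-canonical procedure directly as the image of the canonical procedure under the coordinate change $z \mapsto h(z,b_i)$, and then check that this coordinate change transports the already-established Property~\ref{srsp-3} at the unit scale to Property~\ref{srsp-3} at scale $b_i$.
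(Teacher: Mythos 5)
Your treatment of Properties~\ref{srsp-1}, \ref{srsp-2} and \ref{srsp-4} and of the distribution-function identity tracks the paper's proof: monotonicity in $b_i$ from $\partial h/\partial b_i>0$, the probability-$(1-\mu)$ event $\{\rY_i(1;\rM_i)=\rZ_i(1;\rM_i)=1\}$ pushed through $h(1,b_i)=b_i$, and the change of variables giving $F_h(a_i,b_i)=F_0(F(a_i,b_i),1)=F(a_i,b_i)$ with Property~\ref{srsp-4} inherited from the regularity of $h$. The gap is exactly where you flagged it, and it is a real one: the step you leave open does not follow from the stated hypotheses on $h$. Carrying your reduction to the end, the conditional law of $\rY^h_i(b_i;\rM_i)$ given $\rZ^h_i(b_i;\rM_i)=b'_i$ is the law of $h(\hat z W,b_i)$ with $\hat z=F(b'_i,b_i)$ and $W$ distributed as $\rY_i(1;\cdot)$, while the unconditional law of $\rY^h_i(b'_i;\cdot)$ is that of $h(W,b'_i)$. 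Since the law of $W$ has full support on $(0,1]$ and both maps are continuous and strictly increasing in $w$, equality of these two laws forces the pointwise compatibility
\begin{align*}
h(zw,\,b_i)=h\bigl(w,\,h(z,b_i)\bigr) \quad\text{for all } z,w\in(0,1],
\qquad\text{equivalently}\qquad
F(a_i,b_i)=F(b'_i,b_i)\,F(a_i,b'_i),
\end{align*}
and this is an extra property of the family $\{h(\cdot,b)\}_{b\in I}$, not a consequence of differentiability, positive partials, $h(1,b_i)=b_i$ and $\inf_{z}h(z,b_i)=\inf(I)$. For instance $h(z,b)=b\,z^{1+e^{-b}}$ on $I=\Re_+$ satisfies every stated hypothesis (note $\partial h/\partial b=z^{1+e^{-b}}(1-b e^{-b}\ln z)>0$ for $z\le 1$) yet violates the identity, so Property~\ref{srsp-3} fails for the associated pair \refeq{eq:h-canon-proc}. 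Consequently your closing plan---``transport the unit-scale Property~\ref{srsp-3} through the coordinate change $z\mapsto h(z,b_i)$''---cannot work unconditionally: the conditioning does transport cleanly, but the target law lives at a \emph{different} coordinate change $h(\cdot,b'_i)$, and matching the two is precisely the composition identity above.

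It is worth noting that the paper's own proof disposes of Property~\ref{srsp-3} with the single remark that $h$ is ``deterministic and monotone,'' so your instinct that this is the delicate point in fact isolates a hypothesis that is left implicit there. The identity does hold for the families actually used in the paper: $h(z_i,b_i)=b_i/\sqrt{z_i}$ in Theorem~\ref{thm:approximation-negative} and Theorem~\ref{thm:shortest-paths}, and more generally the multiplicative families $h(z,b)=b\,z^{c}$ with constant exponent $c$ of the appropriate sign, where it is a one-line computation. To complete your argument you should either verify $h(zw,b)=h(w,h(z,b))$ for the specific $h$ at hand or add it as a standing assumption; once it is in place, $h(\hat z w,b_i)=h(w,b'_i)$ holds pointwise in $w$ and your Property~\ref{srsp-3} step closes immediately.
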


\begin{proof}
Property~\ref{srsp-1} in Definition~\ref{def:srsp} holds because of the
monotonicity of $h$, Property~\ref{srsp-2}
holds because $h(1,b_i)=b_i$ for all $b_i$,
and Property~\ref{srsp-3} holds because
the function $h$ is deterministic
and monotone.

Let $F_h(a_i,b_i)$ and $F_0(a_i,b_i)$ be the distribution functions for the $h$-canonical and canonical self-resampling procedures, respectively. Recall that 	
	$F_0(a_i,b_i) = a_i/b_i $
by Proposition~\ref{prop:canon-proc}. Note that $F(a_i,b_i)$ in~\refeq{eq:prop-h-canon-proc} is unique (and hence well-defined) by the strict monotonicity of $h$.

The claim that
	$F_h(a_i,b_i) = F(a_i,b_i)$
easily follows from in~\refeq{eq:h-canon-proc}. By definition of $h$ we have
\begin{align*}
	h(y_i(1,w_i),\, b_i) < b_i \iff y_i(1,w_i) <1.
\end{align*}
Therefore, letting
	$y_i = y_i(1,w_i)$
we have
\begin{align*}
F_h(a_i,b_i)
	&\triangleq \Pr[ h(y_i,b_i) <a_i \,|\, h(y_i,b_i) < b_i] \\
	& = \Pr[ y_i < F(a_i,b_i) \,|\, y_i <1] \\
	& = F_0(\, F(a_i,b_i)\, ,\, 1) \\
	&= F(a_i,b_i).
\end{align*}

\noindent Our assumption that $h$ is differentiable and strictly
increasing in its first argument now implies that
the same property holds for $F$, which verifies
Property~\ref{srsp-4}.
\end{proof}

\subsection{A simplified generic transformation for positive types}
\label{sec:simpler-transformation}


We focus on the important special case of positive types, and present Mechanism~\ref{mech:BKS-simple}, a simplified version of the generic transformation (Mechanism~\ref{mech:BKS-main}), for this case.

\renewcommand{\algorithmcfname}{Mechanism}
\begin{algorithm}[h]
\caption{A simplified generic transformation for positive types.}
\label{mech:BKS-simple}
\begin{algorithmic}[1]
\STATE {\bf Parameter:} resampling probability $\mu\in(0,1)$. \vspace{2mm}
\STATE Collect bid vector $b\in (0,\infty)^n$.
\STATE Independently for each agent $i \in [n]$:
\STATE\TAB Sample: $\gamma_i$ uniformly at random from $[0,1]$
\STATE\TAB Set $\chi_i=1$ with probability $1-\mu$ and otherwise $\chi_i=\gamma_i^{1/(1-\mu)}$.
\STATE Construct the vector of modified bids $ x = (x_1, \ldots, x_n)$, where
    $x_i = \chi_i\, b_i$.
\STATE Allocate according to $\mA(x)$.
\STATE For each agent $i$, assign payment
    $b_i \cdot \mA_i(x) \cdot \begin{cases}
    1 & \mbox{if $\chi_i = 1$}, \\
    1 - \frac{1}{\mu} & \mbox{if $\chi_i < 1$}
    \end{cases}.$
\end{algorithmic}
\end{algorithm}
\renewcommand{\algorithmcfname}{Algorithm}

We prove that Mechanism~\ref{mech:BKS-simple} is equivalent to the generic transformation (Mechanism~\ref{mech:BKS-main}) with a \CanonProc (Algorithm~\ref{alg:canon-proc}).

\begin{proposition}\label{prop:simplified-mech}
The allocation and payments in Mechanism~\ref{mech:BKS-simple} coincide with those in
Mechanism~\ref{mech:BKS-main} with a \CanonProc.
\end{proposition}

To prove Proposition~\ref{prop:simplified-mech}, we provide a non-recursive version of the \CanonProc (Algorithm~\ref{alg:canon-proc}), which we call $\SecondAlg$. We argue that the output of  Mechanism~\ref{mech:BKS-simple} is identical to the output of the mechanism obtained by plugging $\SecondAlg$ into Mechanism~\ref{mech:BKS-main}.%
\footnote{This observation is due to~\cite{Parkes-netecon12}.}
$\SecondAlg$ is also essential for the analysis in Section~\ref{sec:welfare}.

\begin{algorithm}[h]
\caption{$\SecondAlg$: a non-recursive version of $\FirstAlg$.}
\label{alg:canon-nonrecursive}
\begin{algorithmic}[1]
\STATE {\bf Input:} bid $b_i\in[0,\infty)$, parameter $\mu\in(0,1)$.
\STATE {\bf Output:} $(x_i,y_i)$ such that $0\leq x_i\leq y_i \leq b_i$.
\vspace{1mm}
\STATE \WProb{$1-\mu$}
\STATE \MyTab $x_i \leftarrow b_i$, $y_i \leftarrow b_i$.
\STATE {\bf else}
\STATE \MyTab Pick $\gamma_1, \gamma_2 \in [0,1]$ indep., uniformly at random.
\STATE \MyTab $x_i \leftarrow b_i\cdot \gamma_1^{1/(1-\mu)}$,~~
		 $y_i \leftarrow b_i\cdot \max\{ \gamma_1^{1/(1-\mu)}, \gamma_2^{1/\mu} \}$.
\end{algorithmic}
\end{algorithm}

\begin{proposition}
$\FirstAlg$ and $\SecondAlg$ generate the same output distribution:
 for any bid $b_i\in [0,\infty)$, the joint distribution of the pair
	$$(\rY_i,\rZ_i) = (\rY_i(b_i;\rM_i),\rZ_i(b_i;\rM_i))$$
is the same for both procedures.
(Here $\rM_i$ denotes the random seed for each agent $i$.)
\label{prop:srsp}
\end{proposition}

The proof of Proposition~\ref{prop:srsp} can be found in the Appendix.

\begin{proof}[of Proposition~\ref{prop:simplified-mech}]
By Proposition~\ref{prop:srsp}, it suffices to compare Mechanism~\ref{mech:BKS-simple} to Mechanism~\ref{mech:BKS-main} with self-resampling procedure $\SecondAlg$.
To show that the two mechanisms are equivalent, we must show that they
yield the same distribution over allocations and the same payments.
First we argue about the allocations. In both mechanisms, each bidder's
bid $b_i$ is independently transformed into a random $x_i$, and then the
allocation rule $\mA$ is applied to the vector $x=(x_1,\ldots,x_n)$.
Furthermore, the conditional distribution of $x_i$ given $b_i$ is the
same in both cases: $x_i=b_i$ with probability $1-\mu$, and otherwise
$x_i = b_i \cdot \gamma^{1/(1-\mu)}$ where $\gamma$ is uniformly distributed
in $[0,1]$. Hence, the two mechanisms yield the same distribution
over allocations.

To see that the payment rules are the same,
consider the distribution function of $\SecondAlg$,
as defined in Definition~\ref{def:srsp}:
\begin{align*}
F(a_i,b_i)  = \Pr[\rZ_i(b_i;\rM_i) < a_i \;|\; \rZ_i(b_i;\rM_i) < b_i].
\end{align*}
By Proposition~\ref{prop:srsp}, $F_i(a_i, b_i)$ is also the distribution function for $\FirstAlg$. By Proposition~\ref{prop:canon-proc} we have
$F(a_i,b_i) = a_i/b_i$, and consequently
\begin{align*}
    F_i'(a_i,b_i) \triangleq \frac{\partial F_i(a_i,b_i)}{\partial a_i} =
\frac{1}{b_i}.
\end{align*}
In particular, neither allocation nor payments in this mechanism depend on the $y_i$'s. Suppressing the $y_i$'s from mechanism $\mathcal{M}$ and plugging in $F'_i(\mbb_i, b_i)= \frac{1}{b_i}$, we obtain Mechanism~\ref{mech:BKS-simple}. This completes the proof of Proposition~\ref{prop:simplified-mech}.
\end{proof}

\section{Improved bounds on welfare}
\label{sec:welfare}


We present improved bounds on the welfare obtained by our generic transformation. We consider two interesting special cases when the agents' private types are, respectively, always positive and always negative. In the second case, agents are contractors who incur costs and get paid by the mechanism; one such example is a shortest paths mechanism considered in Section~\ref{sec:apps}.

We consider the approximation that is achieved by the mechanism as a function of the approximation of the original allocation rule. Recall that our generic transformation creates a mechanism with an allocation that is identical to the original allocation with probability at least $1-n\mu$. For positive types this immediately implies a bound on the approximation which degrades with $n$, the number of agents. (For negative types such bound does not immediately follow since the cost in the low probability event might be prohibitively high.) For both settings, we present a similar bound that does \emph{not} degrade with $n$.

%
%
%

\subsection{Positive private types}

Assume that the agents' types are always positive, more specifically that the type space is
    $\types = (0,\infty)^n$.
Recall that for agents' types $t\in \types$ the social welfare of an outcome $o$ is defined to
	$\SW(o, t) = \sum_{i\in N} t_i\, a_i(o)$.
The optimal social welfare is
	$\OPT(t) = \max_{o\in O} \SW(o, t)$,
where $O$ is the set of all feasible outcomes.
(A mechanism with) an allocation rule $\mA$ is {\em $\alpha$-approximate}
if it holds that
\begin{align}\label{eq:def-approx}
	\alpha\cdot \E[\SW(\mA(t), t)] \ge \OPT(t)
		\text{ for every $t$}.
\end{align}

\begin{theorem} \label{thm:improved-approximation}
Consider the setting in Theorem~\ref{thm:maim-g}(c), so that
    $\types = (0,\infty)^n$ and each $f_i$ is the \CanonProc.
If allocation rule $\mA$ is $\alpha$-approximate,
then mechanism $\GenericMech(\mA,\mu,\mathbf{f})$
is $\alpha/(1-\frac{\mu}{2-\mu})$-approximate.
\end{theorem}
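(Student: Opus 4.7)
The plan is to exploit the fact that, from the point of view of $\mA$, the allocation $\mA(\mba)$ is being computed on the \emph{modified} bid vector $\mba = (\rY_1(t_1;\rM_1), \ldots, \rY_n(t_n;\rM_n))$. Since $\mA$ is $\alpha$-approximate on every input, it is in particular $\alpha$-approximate when applied to $\mba$; my only real tasks are then to relate $\E[OPT(\mba)]$ to $OPT(t)$, and to control the welfare of the outcome $\mA(\mba)$ when it is measured under the true types $t$ rather than under the modified bids $\mba$.

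Step~1 is to compute the mean shrinkage factor of the \CanonProc. Writing $\E[\rY_i(b_i;\rM_i)] = c\cdot b_i$ (linearity in $b_i$ follows by a scaling argument on the recursion of Algorithm~\ref{alg:canon-proc}), the recursion itself yields $c = (1-\mu) + \mu\cdot c\cdot \E[U]$ for $U \sim \mathrm{Uniform}(0,1)$, hence $c = (1-\mu)/(1-\mu/2) = 1 - \mu/(2-\mu)$.

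Step~2 is a short chain of inequalities. Let $o^* \in O$ attain $SW(o^*, t) = OPT(t)$. Applying $\alpha$-approximation of $\mA$ to the \emph{random} input $\mba$ (and then taking expectations over $\mba$) gives
\[
\alpha\cdot \E[SW(\mA(\mba),\mba)] \;\geq\; \E[OPT(\mba)] \;\geq\; \E[SW(o^*,\mba)] \;=\; \textstyle\sum_i \E[\mba_i]\, a_i(o^*) \;=\; c\cdot OPT(t),
\]
where the last equality uses Step~1. On the other side, Property~\ref{srsp-2} ensures $\mba_i \leq t_i$; combined with $a_i(\cdot)\geq 0$, this yields $SW(\mA(\mba),\mba) \leq SW(\mA(\mba), t)$. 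Since the welfare of $\mathcal{M}$ at true type vector $t$ is exactly $\E[SW(\mA(\mba),t)]$, we conclude that $\mathcal{M}$ is $\alpha/c$-approximate, as claimed.

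The only delicate point is the ordering in Step~2: one must apply the $\alpha$-approximation of $\mA$ \emph{with respect to its own input} $\mba$ (rather than with respect to $t$), and then use $\mba_i \leq t_i$ in the direction that \emph{increases} the measured welfare. This is what allows the argument to bypass the union-bound loss of Remark~\ref{remark:main}(5) and produce an approximation ratio that is independent of the number of agents $n$.
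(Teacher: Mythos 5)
Your proof is correct and takes essentially the same route as the paper's: apply the $\alpha$-approximation of $\mA$ at the resampled bids $\mba$, lower-bound $\E[SW(o^*,\mba)]$ via $\E[\mba_i]=\bigl(1-\tfrac{\mu}{2-\mu}\bigr)b_i$ (the paper's Lemma~\ref{lm:pos-expect}, which you re-derive directly from the recursion in Algorithm~\ref{alg:canon-proc} rather than via $\SecondAlg_\mu$), and conclude. Your explicit last step, using $\mba_i\le t_i$ and $a_i(\cdot)\ge 0$ to pass from welfare measured at $\mba$ to welfare measured at the true types, is the same observation the paper leaves implicit.
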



\begin{proof}
Fix a bid vector $b$, and let $o^*$ be the corresponding optimal allocation. Recall that our mechanism outputs allocation $\mA(\mba)$, where $x$ is the vector of randomly modified bids.
As the original allocation rule $\mA$ is $\alpha$-approximate,
by~\refeq{eq:def-approx} it holds that
$\alpha \cdot \SW(\mA(\mba), \mba) \ge \OPT(\mba)$.
We will show that
\begin{equation}\label{eq:shrink}
\E[\mba_i] =  \left(1-\tfrac{\mu}{2-\mu}\right) b_i
	\text{~~for each agent $i$}.
\end{equation}
Thus when we evaluate $o^*$ with respect to bids $\mba$ we get:
\begin{align*}
\alpha \cdot \SW(A(\mba), \mba)
	&\ge \OPT(\mba)
	 \ge \SW(o^*,\mba)
	 = \textstyle{\sum_{i\in N}}\, \mba_i\, a_i(o^*) \\
\alpha \cdot \E[\SW(A(\mba), \mba)]
    &= \E\left[ \textstyle{\sum_{i\in N}}\, \mba_i\, a_i(o^*)  \right] \\
	&= \sum_{i\in N} \left(1-\frac{\mu}{2-\mu}\right) b_i\, a_i(o^*)
	= \left(1-\frac{\mu}{2-\mu}\right)\OPT(b).
\end{align*}

It remains to prove~\refeq{eq:shrink}. Let us use $\SecondAlg$ to describe the \CanonProc. Recall that $\SecondAlg$ generates
$\mba_i = \rY_i(b_i;\rM_i)$ by setting $\mba_i = b_i$
with probability $1-\mu$, and otherwise sampling $\gamma_1$
uniformly at random in $[0,1]$ and outputting
$\mba_i = b_i \cdot \gamma_1^{1/(1-\mu)}.$  Hence
\begin{align*}
\E[\mba_i \,|\, \mba_i < b_i]
&= \textstyle{\int_{0}^{1}}\, b_i \cdot \gamma_1^{1/(1-\mu)} \, d\gamma_1
= b_i \cdot \frac{1}{1+\frac{1}{1-\mu}}
= b_i \cdot \left( 1 - \tfrac{1}{2-\mu} \right) \\
\E[\mba_i]
	&= (1-\mu)\cdot b_i + \mu\cdot \E[\mba_i \,|\, \mba_i < b_i]
= b_i \cdot \left( 1 - \tfrac{\mu}{2-\mu} \right). \qquad \qedhere
\end{align*}
\end{proof}

For arbitrary self-resampling procedures $f_i$ with support $\Re_+$,
~\refeq{eq:shrink} can be replaced by
	$\E[\mba_i] \ge  \left(1-\mu\right) b_i$,
which gives a slightly weaker result, namely an $\tfrac{\alpha}{1-\mu}$-approximation to the social welfare.


\subsection{Negative private types}
\label{sec:negative-types}

Now assume that the agents' types are always negative, more specifically that $\types = (-\infty,0)^n$.
For negative types approximation is defined with respect to the social cost, which is the negation of the social welfare. An algorithm is  \emph{$\alpha$-approximate} if for every input it outputs an outcome with cost at most $\alpha$ times the optimal cost. We present an approximation bound for an $h$-\CanonProc{}, for a suitably chosen $h$.

\begin{theorem} \label{thm:approximation-negative}
Consider the setting in Theorem~\ref{thm:maim-g}. Assume that $\types = (-\infty,0)^n$ and that each $f_i$ is the $h$-\CanonProc, where
	$h(z_i,b_i) = b_i/\sqrt{z_i}$.
Suppose $\mu\in (0,\tfrac12)$.
If allocation rule $\mA$ is $\alpha$-approximate,
then mechanism $\GenericMech(\mA,\mu,\mathbf{f})$
is $\alpha\left( 1 + \tfrac{\mu}{1-2\mu} \right)$-approximate.
\end{theorem}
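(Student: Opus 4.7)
The plan is to mirror the proof of Theorem~\ref{thm:improved-approximation} but with the roles of ``smaller'' and ``larger'' reversed, exploiting the key fact that for the chosen $h(z_i,b_i)=b_i/\sqrt{z_i}$ with $b_i<0$ and $z_i\in(0,1]$, the modified bid $\mba_i$ is \emph{more negative} than $b_i$. Thus for any outcome $o$ and any agent $i$, since $a_i(o)\ge 0$, we have $(-\mba_i)\,a_i(o) \ge (-b_i)\,a_i(o)$ pointwise, hence $SC(o,b)\le SC(o,\mba)$ where $SC$ denotes social cost.

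The core computation is to establish the analogue of~\refeq{eq:shrink}: namely,
\begin{equation}\label{eq:shrink-neg}
\E[\mba_i] \;=\; \left(1+\tfrac{\mu}{1-2\mu}\right) b_i \quad\text{for each agent $i$}.
\end{equation}
To derive~\refeq{eq:shrink-neg}, I invoke Proposition~\ref{prop:srsp} and compute $\E[\mba_i]$ using $\SecondAlg_\mu$ on input $1$: with probability $1-\mu$ we get $\mba_i=b_i$; otherwise $\rY_i(1;\rM_i)=\gamma_1^{1/(1-\mu)}$ for $\gamma_1\sim U[0,1]$, and so $\mba_i = b_i\,\gamma_1^{-1/(2(1-\mu))}$. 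The integral $\int_0^1 \gamma_1^{-1/(2(1-\mu))}\,d\gamma_1$ converges precisely when $\mu<\tfrac12$ (this is the reason for the hypothesis), and evaluates to $\tfrac{2(1-\mu)}{1-2\mu}$. Combining the two cases and simplifying $(1-\mu)+\mu\cdot\tfrac{2(1-\mu)}{1-2\mu}=\tfrac{1-\mu}{1-2\mu}=1+\tfrac{\mu}{1-2\mu}$ yields~\refeq{eq:shrink-neg}.

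With~\refeq{eq:shrink-neg} in hand, the approximation bound follows in a few lines. Fix a bid vector $b\in\Re_-^n$ and let $o^*$ be an optimal (cost-minimizing) outcome for $b$. Since the mechanism outputs $\tA(b)=\mA(\mba)$ and $\mA$ is $\alpha$-approximate, conditioning on $\mba$ gives $\E_{\mA}[SC(\mA(\mba),\mba)\,|\,\mba]\le \alpha\cdot OPT(\mba) \le \alpha\cdot SC(o^*,\mba)$. Taking expectation over $\mba$ and using $\E[-\mba_i]=-b_i(1+\tfrac{\mu}{1-2\mu})$ by~\refeq{eq:shrink-neg} gives
\begin{align*}
\E[SC(\mA(\mba),\mba)]
 \;\le\; \alpha\cdot \E[SC(o^*,\mba)]
 \;=\; \alpha\cdot \textstyle{\sum_{i\in N}} (-\E[\mba_i])\,a_i(o^*)
 \;=\; \alpha\left(1+\tfrac{\mu}{1-2\mu}\right) OPT(b).
\end{align*}
Finally, applying the pointwise inequality $SC(\mA(\mba),b)\le SC(\mA(\mba),\mba)$ from the first paragraph yields $\E[SC(\tA(b),b)]\le \alpha(1+\tfrac{\mu}{1-2\mu})\,OPT(b)$, as desired.

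The main (minor) obstacle is checking that $(\rY_i^h,\rZ_i^h)$ with $h(z_i,b_i)=b_i/\sqrt{z_i}$ genuinely satisfies the hypotheses of Proposition~\ref{prop:h-canon-proc} on $I=\Re_-$: strict positivity of both partial derivatives (which uses $b_i<0$), the boundary conditions $h(1,b_i)=b_i$ and $\inf_{z_i\in(0,1]}h(z_i,b_i)=-\infty$, and ensuring that the integral defining $\E[\mba_i]$ is finite, which is exactly what forces $\mu<\tfrac12$.
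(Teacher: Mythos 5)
Your proposal is correct and follows essentially the same route as the paper: the paper proves this theorem by repeating the argument of Theorem~\ref{thm:improved-approximation} with Lemma~\ref{lm:pos-expect} replaced by Lemma~\ref{lem:expected-blow}, i.e.\ the computation $\E[\mba^h_i]=b_i\bigl(1+\tfrac{\mu}{1-2\mu}\bigr)$ via $\SecondAlg_\mu$, which is exactly your display~\eqref{eq:shrink-neg} including the $\mu<\tfrac12$ convergence condition. Your only additions (making explicit the pointwise inequality $SC(o,b)\le SC(o,\mba)$ and checking the hypotheses of Proposition~\ref{prop:h-canon-proc} for $h(z_i,b_i)=b_i/\sqrt{z_i}$ on $\Re_-$) are details the paper leaves implicit, not a different method.
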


The proof of this theorem is almost identical to that of Theorem~\ref{thm:improved-approximation}, and thus is omitted. The main  modification is that~\refeq{eq:shrink} is replaced by the following lemma:

\begin{lemma}
\label{lem:expected-blow}
In the setting of Theorem~\ref{thm:approximation-negative}, letting $x^h$ be the vector of modified types, it holds that
\begin{align*}
	\E[\mba^h_i] = b_i \left(1+\tfrac{\mu}{1-2\mu} \right)
	\text{~~for all $i$}.
\end{align*}
\end{lemma}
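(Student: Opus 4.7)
The proof is a direct calculation mirroring Lemma~\ref{lm:pos-expect}. First I would unpack the definition of the $h$-canonical procedure: by~\refeq{eq:h-canon-proc} with $h(z_i,b_i) = b_i/\sqrt{z_i}$,
\[
\mba^h_i \;=\; h(\rY_i(1;\rM_i),\, b_i) \;=\; \frac{b_i}{\sqrt{\rY_i(1;\rM_i)}},
\]
so it suffices to understand the distribution of the random variable $\zeta := \rY_i(1;\rM_i)\in(0,1]$ produced by the ordinary canonical procedure on input $1$.

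Next, using the equivalent non-recursive form $\SecondAlg_\mu$ (Proposition~\ref{prop:srsp}) applied to $b_i = 1$, the variable $\zeta$ equals $1$ with probability $1-\mu$, and with probability $\mu$ it equals $\gamma^{1/(1-\mu)}$ for $\gamma$ uniform on $[0,1]$. Hence $\mba^h_i = b_i$ with probability $1-\mu$, and $\mba^h_i = b_i \cdot \gamma^{-1/(2(1-\mu))}$ with probability $\mu$. Conditioning on $\zeta<1$, I would compute
\[
\E[\mba^h_i \mid \zeta<1] \;=\; b_i \int_0^1 \gamma^{-1/(2(1-\mu))}\, d\gamma
\;=\; b_i \cdot \frac{1}{\,1 - \tfrac{1}{2(1-\mu)}\,}
\;=\; b_i \cdot \frac{2(1-\mu)}{1-2\mu}.
\]
Convergence of this integral is exactly the place where the hypothesis $\mu < \tfrac12$ is used: the exponent $-1/(2(1-\mu))$ must be strictly greater than $-1$, equivalently $\mu<\tfrac12$. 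This is the only real ``obstacle,'' and it is why the parameter range in Theorem~\ref{thm:approximation-negative} is restricted to $(0,\tfrac12)$.

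Finally, combining the two cases gives
\[
\E[\mba^h_i] \;=\; (1-\mu)\, b_i \;+\; \mu \cdot b_i \cdot \frac{2(1-\mu)}{1-2\mu}
\;=\; b_i(1-\mu) \left[ 1 + \frac{2\mu}{1-2\mu}\right]
\;=\; b_i \cdot \frac{1-\mu}{1-2\mu}
\;=\; b_i\left(1 + \frac{\mu}{1-2\mu}\right),
\]
which is the claimed identity. The remainder of Theorem~\ref{thm:approximation-negative} then goes through exactly as in the positive case: since bids are negative, scaling each $b_i$ by the factor $\tfrac{1-\mu}{1-2\mu}$ inflates the social cost of any fixed outcome (in particular the optimum $o^*$) by the same factor, and the $\alpha$-approximation guarantee of $\mA$ on the modified bids then yields the stated $\alpha\bigl(1+\tfrac{\mu}{1-2\mu}\bigr)$-approximation.
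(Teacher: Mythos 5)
Your proposal is correct and follows essentially the same route as the paper: unpack the $h$-canonical procedure via \refeq{eq:h-canon-proc}, use the non-recursive form $\SecondAlg_\mu$ to get $\mba^h_i = b_i\,\gamma_1^{-1/(2(1-\mu))}$ on the resampling branch, compute the conditional expectation by the same integral, and combine with the probability-$(1-\mu)$ branch; your closed forms $b_i\cdot\tfrac{2(1-\mu)}{1-2\mu}$ and $b_i\cdot\tfrac{1-\mu}{1-2\mu}$ agree with the paper's $b_i\bigl(1+\tfrac{1}{1-2\mu}\bigr)$ and $b_i\bigl(1+\tfrac{\mu}{1-2\mu}\bigr)$. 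Your explicit remark that $\mu<\tfrac12$ is exactly what makes the integral converge is a correct and worthwhile observation that the paper leaves implicit.
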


\begin{proof}
Recall that $x^h$ is defined by~\refeq{eq:h-canon-proc}.
As in the proof of~\refeq{eq:shrink}, we will use $\SecondAlg$ to describe the \CanonProc.
It follows that
\begin{align*}
\E[\mba^h_i \,|\, \mba_i < b_i]
&= \textstyle{\int_{0}^{1}}\, \frac{b_i}{\sqrt{\gamma_1^{\frac{1}{(1-\mu)}}}} \, d\gamma_1
= \textstyle{\int_{0}^{1}}\, b_i \cdot \gamma_1^{-\frac{1}{2(1-\mu)}} \, d\gamma_1
= b_i \cdot \frac{1}{1-\frac{1}{2(1-\mu)}}
= b_i \cdot\left( 1 + \tfrac{1}{1-2\mu} \right),\\
\E[\mba^h_i]
	&= (1-\mu) \cdot b_i + \mu\cdot  \E[\mba^h_i \,|\, \mba^h_i < b_i]
= b_i \cdot \left( 1 + \tfrac{\mu}{1-2\mu} \right). \qquad\qedhere
\end{align*}
\end{proof}



\OMIT{ 
We begin by constructing a pair of canonical self-resampling procedures
with support $(0,\infty)$ and then we use this as an ingredient in
constructing self-resampling procedures with arbitrary support.
Assume we are given a fixed parameter $\mu\in (0,1)$.  The canonical
self-resampling procedures will use a random seed $\rM$
consisting of two infinite sequences $(\beta_1,\beta_2,\ldots)$
and $(\gamma_1,\gamma_2,\ldots)$ such that the
$\beta_j$ are i.i.d.~Bernoulli random variables with
$\E(\beta_j) = 1-\mu,$ and the $\gamma_j$ are i.i.d.~uniform
random variables in $[0,1].$  We  have the following
two procedures.
\begin{description}
\item[Algorithm $\FirstAlg(b;\rM)$:]
If $\beta_1 = 1$, output $\rY(b;\rM) = \rZ(b;\rM) = b$.
Otherwise, let $b' = b \cdot \gamma_1$, and let
$\rM'$ denote the \emph{shifted random seed}
consisting of the sequences $(\beta_2,\beta_3,\ldots)$
and $(\gamma_2,\gamma_3,\ldots).$  Output
$\rY(b;\rM) = \rY(b';\rM')$ and
$\rZ(b;\rM) = b'.$  Note that the recursive
procedure for calculating $\rY$ terminates as
long as $\beta_j=1$
for at least one value of $j$,
an event that has probability $1$.
\item[Algorithm $\SecondAlg(b;\rM)$:]
If $\beta_1=1$, output $\rY(b;\rM) = \rZ(b;\rM) = b$.
Otherwise, output $\rY(b;\rM) = b \cdot \gamma_1^{1/(1-\mu)}$
and $\rZ(b;\rM) = b \cdot \max\{\gamma_1^{1/(1-\mu)},\gamma_2^{1/\mu}\}.$
\end{description}

\begin{proposition} \label{prop:srsp}
Both $\FirstAlg$ and $\SecondAlg$ are self-resampling
procedures with resampling probability $\mu$.  In fact, they
generate the same output distribution.  In other words,
for all $b$, the joint distribution of the pair $(\rY,\rZ) =
(\rY(b;\rM),\rZ(b;\rM))$ is the same regardless of
whether one computes $(\rY,\rZ)$ using
$\FirstAlg$ or $\SecondAlg$.
The distribution function for both procedures
is $F(u,v) = u/v.$
\end{proposition}

The proof of Proposition~\ref{prop:srsp} is in
Appendix~\ref{app:proof-prop:srsp}. 
For now, we
observe that it is easy to show that $\FirstAlg$
is a self-resampling procedure.  Properties~\ref{srsp-1}
and~\ref{srsp-2} in Definition~\ref{def:srsp} are
immediate from the description of the algorithm.
Property~\ref{srsp-3} follows from the recursive
nature of the sampling procedure:
the event $\rZ(b;\rM) = b' < b$ implies
that $\beta_1 = 0$ and $\gamma_1 = b'$.  In
that case, $\rY(b;\rM) = \rY(b';\rM')$
and the conditional distribution of $\rY(b';\rM')$ is the same
as the unconditional distribution of
$\rY(b';\rM)$ because $\rM'$ and $\rM$
have the same distribution and
$\rM'$ is independent of the event $\rZ(b;\rM) = b'$.
Finally, $\FirstAlg$ has distribution function $F(u,v) = u/v$ since conditional on the event
$\rZ(b;\rM) < b$, the distribution of $\rZ(b;\rM)$
is uniform in the interval $[0,b]$.
Property~\ref{srsp-4} is now immediate
because $F(u,v)=u/v$ is a differentiable,
strictly increasing function of $u$.

Now, to construct a self-resampling procedure with
support in an arbitrary interval $I$, we can use the
following technique.  Suppose $h: (0,1] \times I \rightarrow I$
is a two-variable function such that
$\partial h(z,b) / \partial z$ and
$\partial h(z,b) / \partial b$ are
well-defined and strictly positive at
every point $(z,b)$ in $(0,1] \times I$, the domain of
definition of $h$.  Suppose furthermore that
$h(1,b) = b$ and $\inf_{z\in (0,1]} \{h(z,b)\} = \inf(I)$ for all $b \in I$.
Then we define a self-resampling procedure
$(\rY^h_\mu, \rZ^h_\mu)$ with support $I$, by specifying
that $\rY^h_\mu(b;\rM) = h(\rY(1;\rM),b)$
and $\rZ^h_\mu(b;\rM) = h(\rZ(1;\rM),b)$,
where $\rY$ and $\rZ$ are the canonical
self-resampling procedure computed by
either $\FirstAlg$ or $\SecondAlg$.
Property~\ref{srsp-1} holds because of the
monotonicity of $h$, Property~\ref{srsp-2}
holds because $h(1,b)=b$ for all $b$,
and Property~\ref{srsp-3} holds because
the function $h$ is deterministic
and monotone.

The distribution function of this self-resampling procedure
$(\rY^h_\mu, \rZ^h_\mu)$ can easily be determined, since
Proposition~\ref{prop:srsp} gives the distribution function
for the canonical self-resampling procedure, and
$(\rY^h_\mu, \rZ^h_\mu)$ is obtained from it by a
simple change of variables.  In fact, the distribution
function $F(u,v)$ of $(\rY^h_\mu, \rZ^h_\mu)$ is the
unique two-variable function
such that $h(F(u,v), v) = u$ for all $u,v \in I, \,
u < v$.
Our assumption that $h$ is differentiable and strictly
increasing in its first argument now implies that
the same property holds for $F$, which verifies
Property~\ref{srsp-4} of a self-resampling procedure.
} 

\section{Applications to offline mechanism design}
\label{sec:apps}

\xhdr{The VCG mechanism for shortest paths.}
The seminal paper \citeN{NR01} has presented the following question: is there a computational overhead in computing payments that will induce agents to be truthful, compared to the computation burden of computing the allocation. One of their examples is the VCG mechanism for the \emph{shortest path mechanism design problem}, where a naive computation of VCG payments requires additional computation of $n$ shortest path instances. Yet, an explicit payment computation is not the real goal, it is just a means to an end. The real goal is {\em inducing the right incentives}. Our procedure shows that without {\em any} overhead in computation, if we move to a randomized allocation rule and settle for truthfulness in expectation (and a small loss in performance) one can induce the right incentives.

The shortest path mechanism design problem is the following. We are given a graph $G=(V,E)$ and a pair of source-target nodes $(v_s,v_t)$. Each agent $e$ controls an edge $e\in E$ and has a cost $c_e> 0$ if picked (thus $v_e=-c_e<0$ and $\types_e=(-\infty,0)$ for every $e$). That cost is private information, known only to agent $e$.  The mechanism designer's goal is to pick a path $P$ from node $v_s$ to node $v_t$ in the graph with minimal total cost, that is $\sum_{e\in P} c_e$ is minimal. Assume that there is no edge that forms a cut between
$v_s$ and $v_t$.

The VCG mechanism is an cost-optimal and truthful mechanism for this problem. It computes a shortest path $P$ with respect to the reported costs and pays to an agent $e$ the difference between the cost of the shortest path that does not contains $e$ and the total cost shortest path excluding the cost of $e$. A naive implementation of the VCG mechanism requires computing $|P|+1$ shortest path instances (where $|P|$ denotes the number of edges in path $P$). VCG is deterministic, truthful and cost-optimal.

\newcommand{\EFF}{\ensuremath{\mathtt{EFF}}}

Let \EFF{} an the cost-optimal allocation rule for the shortest path problem. We can use our general procedure to derive the following result (its proof follows directly from Theorem~\ref{thm:maim-g} and Theorem~\ref{thm:approximation-negative}).

\begin{theorem}\label{thm:shortest-paths}
Fix any $\mu\in (0,\half)$.
For each agent $i$, let $f_i$ be the $h$-\CanonProc, where
	$h(z_i,b_i) = b_i/\sqrt{z_i}$.
Let
	$\mathcal{M} = \GenericMech(\EFF,\mu,\{f_i\})$
be the mechanism created by applying $\GenericMech()$ to \EFF. Then $\mathcal{M}$ has the following properties:
\begin{OneLiners}
\item It is truthful and universally individually rational.

\item It only computes one shortest paths instance.
\item It outputs a path with expected length at most
$\left( 1 + \tfrac{\mu}{1-2\mu} \right)$
times the length of the shortest path.

    \OMIT{Thus for any $\epsilon>0$ we can get $1+\epsilon$-approximation by picking $\mu=\frac{\epsilon}{1+2\epsilon}$.}


\OMIT{ 
\item With probability $\mu$ an agent gets paid more than his cost. When this happens, for $y_i\in [0,1]$ picked uniformly at random the agent is being paid $\frac{1}{2\mu y^{3/2}_i}$ times his value of the allocation. Thus for any fixed $y_i$ the payment grows inversely proportional to $\mu$.
} 
\end{OneLiners}
\end{theorem}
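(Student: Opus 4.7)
The plan is to derive the three bullets as immediate corollaries of Theorem~\ref{thm:maim-g} and Theorem~\ref{thm:approximation-negative}, so the real work is just checking the hypotheses of those theorems for \EFF\ and the chosen ensemble $\{f_i\}$.

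First I would verify that \EFF\ is a monotone allocation rule on the negative-type domain $T_e=\Re_-=(-\infty,0)$ in the sense of Definition~\ref{def:mon}. For each edge $e$, with the bids of all other edges fixed, $\mA_e(b_{-e},b_e)\in\{0,1\}$ indicates whether $e$ lies on the chosen shortest path when its cost is $c_e=-b_e$. Increasing $b_e$ only lowers $c_e$ and hence only makes $e$ more attractive; once $e$ lies on a shortest path it remains on one after any further decrease of $c_e$, modulo ties. A standard consistent tie-breaking rule (e.g.\ lexicographic on edges) makes $\mA_e(b_{-e},\cdot)$ genuinely non-decreasing. I expect this monotonicity check, and in particular the tie-breaking subtlety, to be the main (minor) obstacle.

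Next I would confirm that each $f_i$ is a valid self-resampling procedure with support $\Re_-$. The function $h(z_i,b_i)=b_i/\sqrt{z_i}$ on $(0,1]\times\Re_-$ has strictly positive partial derivatives in each argument, satisfies $h(1,b_i)=b_i$, and satisfies $\inf_{z_i\in(0,1]}h(z_i,b_i)=-\infty=\inf(\Re_-)$ for every $b_i\in\Re_-$. Hence Proposition~\ref{prop:h-canon-proc} applies and each $f_i$ is a self-resampling procedure with support $\Re_-$ and resampling probability $\mu$. Combined with the monotonicity of \EFF, Theorem~\ref{thm:maim-g}(a) immediately gives that $\mathcal{M}$ is truthful and universally ex-post individually rational, establishing the first bullet. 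The second bullet is immediate from Remark~\ref{remark:main}(2): $\GenericMech$ invokes its input allocation rule exactly once, on the modified bid vector, so only one shortest path computation is performed.

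For the third bullet I would apply Theorem~\ref{thm:approximation-negative}. Since \EFF\ returns an exact shortest path, it satisfies~\refeq{eq:def-approx} (interpreted in the sense of social cost, as the theorem does for negative-only types) with $\alpha=1$. Given $\mu\in(0,\half)$ and $h(z_i,b_i)=b_i/\sqrt{z_i}$, Theorem~\ref{thm:approximation-negative} then asserts that $\mathcal{M}=\GenericMech(\EFF,\mu,\{f_i\})$ is $\left(1+\tfrac{\mu}{1-2\mu}\right)$-approximate, which is exactly the claimed expected path-length bound. Together these three steps complete the proof; all nontrivial probabilistic and incentive-theoretic content is inherited from the earlier theorems.
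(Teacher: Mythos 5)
Your proposal is correct and follows essentially the same route as the paper, which proves this theorem simply by invoking Theorem~\ref{thm:maim-g} together with Theorem~\ref{thm:approximation-negative} applied to the $1$-approximate rule \EFF{} with the $h$-\CanonProc{} for $h(z_i,b_i)=b_i/\sqrt{z_i}$. Your explicit checks (monotonicity of \EFF{} under consistent tie-breaking and the hypotheses of Proposition~\ref{prop:h-canon-proc} on the negative-type domain) are details the paper leaves implicit, but they do not change the argument.
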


Recall that parameter $\mu$ controls the trade-off between approximation ratio and the rebate size $R_i$, which for a given random seed is proportional to $\tfrac{1}{\mu}$.

\xhdr{Communication overhead of payment computation.}
\citeN{BabaioffBS13} show that there exists a monotone deterministic allocation rule for which
the communication required for computing the allocation is factor $\Omega(n)$ less than the communication required to computing prices. This implies that inducing the correct incentives {\em deterministically} has a large overhead in communication.
Assume that instead of requiring explicit computation of payments we are satisfied with inducing the correct incentives using a {\em randomized} mechanism.
In such case our reduction shows that the deterministic lower bound cannot be extended to randomized mechanisms, if we allow a small error in the allocation.

More concretely, consider a single parameter domain with types that are positive, $\types_i=(0,\infty)$ (as in~\cite{BabaioffBS13}).  For all $i$, use the \CanonProc.
Consider any monotone allocation rule $\mA$. We can apply Theorem~\ref{thm:maim-g} to obtain a randomized mechanism that is truthful and only executes that allocation rule $\mA$ {\em once} (thus has no communication overhead at all) and has exactly the same allocation with probability at least $(1-\mu)^n$. For any $\epsilon>0$ we can find $\mu>0$ such that the error probability is less than $\epsilon$.

\section{Multi-armed bandit mechanisms}
\label{sec:MAB}

\newcommand{\payoff}{click reward\xspace}
\newcommand{\payoffs}{click rewards\xspace}
\newcommand{\realization}{click realization\xspace}
\newcommand{\realizations}{click realizations\xspace}
\newcommand{\Realizations}{Click realizations\xspace}

\newtheorem{claim}[theorem]{Claim}

\newcommand{\RefMainThm}{Theorem~\ref{thm:maim-g}(c)}
\newcommand{\Active}{S_{\mathtt{act}}}

In this section we apply the main result to \emph{multi-armed bandit (MAB) mechanisms}: single-parameter mechanisms in which the allocation rule is (essentially) an MAB algorithm parameterized by the bids. As in any single-parameter mechanism, agents submit their bids, then the allocation rule is run, and then the payments are assigned. This application showcases the full power of the main result, since in the MAB setting the allocation rule is only run once, and (in general) cannot be simulated as a computational routine without actually implementing the allocation.

Focusing on the stochastic setting, we design truthful MAB mechanisms with the same regret guarantees as the best MAB \emph{algorithms} such as \UCB~\cite{bandits-ucb1}. First, we prove that allocation rules derived from \UCB{} and similar MAB algorithms are in fact monotone, and hence give rise to truthful MAB mechanisms. Second, we provide a new allocation rule with the same regret guarantees that is ex-post monotone, and hence gives rise to an \emph{ex-post} truthful MAB mechanism. Third, we use this new allocation rule to obtain an unconditional separation between the power of randomized and deterministic ex-post truthful MAB mechanisms.

\OMIT{ 
The section is structured as follows. In Section~\ref{sec:MAB-prelims} we overview the relevant background on multi-armed bandits. Then in Section~\ref{sec:MAB-monotonicity} we show that a very general class of MAB algorithms gives rise to monotone MAB allocations (to which therefore \RefMainThm{} can be applied). In Section~\ref{sec:MAB-stochastic} we consider the stochastic MAB setting; we present the background story,  apply the general result from Section~\ref{sec:MAB-monotonicity} to an algorithm from the literature, and present a new algorithm for a further improvement. Finally, in Section~\ref{sec:separation} } 

\subsection{Preliminaries: MAB mechanisms}
\label{sec:MAB-prelims}

An MAB mechanism~\cite{MechMAB-ec09,DevanurK09} operates as follows. There are $n$ agents. Each agent $i$ has a private value $v_i$ and submits a bid $b_i$. We assume that $b_i,v_i \in [0, \bmax]$, where $\bmax$ is known a priori. The allocation consists of $T$ rounds, where $T$ is the \emph{time horizon}. In each round $t$ the allocation rule chooses one of the agents, call it $i=i(t)$, and observes a \emph{\payoff} $\pi(t) \in [0,1]$ for this choice; the chosen agent $i$ receives $v_i\, \pi(t)$ units of utility. Payments are assigned after the last round of the allocation. Note that the social welfare of the mechanism is equal to the total value-adjusted \payoff:
$\textstyle{\sum_{t=1}^T}\, v_{i(t)}\, \pi(t)$.

The special case of 0-1 \payoffs{} corresponds to the scenario in which agents are advertisers in a pay-per-click auction, and choosing agent $i$ in a given round $t$ means showing this agent's ad. Then the \payoff{} $\pi(t)$ is the \emph{click bit}: $1$ if the ad has been clicked, and $0$ otherwise. Following the web advertising terminology, we will say that in each round, an \emph{impression} is allocated to one of the agents.

\newcommand{\ALG}{\ensuremath{\hat{\mA}}}

Formally, an \emph{MAB allocation rule} $\mA$ is an online algorithm parameterized by $n,T,\bmax$ and the bids $b$. In each round it allocates the impression and observes the \payoff. Absent truthfulness constraints, the objective is to maximize the \emph{reported welfare}:
	$\textstyle{\sum_{t=1}^T}\, b_{i(t)}\, \pi(t)$.
This formulation generalizes  MAB \emph{algorithms}: the latter are precisely MAB allocation rules with all bids set to $1$.

Given an MAB algorithm $\ALG$, there is a natural way to transform it into an MAB allocation rule $\mA$. Namely, $\mA$ runs algorithm $\ALG$ with modified \payoffs: if agent $i$ is chosen in round $t$ then the \payoff{} reported to $\ALG$ is
	$\hat{\pi}(t) = (b_i/ \bmax)\, \pi(t) $.
We will say that algorithm $\ALG$ \emph{induces} allocation rule $\mA$. From now on we will identify an MAB algorithm with the induced allocation rule, e.g. allocation rule \UCB{} is induced by algorithm \UCB~\cite{bandits-ucb1}.

We will focus on the \emph{stochastic} MAB setting: in all rounds $t$ in which an agent $i$ is chosen, the \payoff{} $\pi(t)$ is an independent random sample from some fixed distribution on $[0,1]$ with expectation $\mu_i$.\footnote{The exact shape of this distribution is not essential. E.g. in the advertising example $\pi(t)\in \{0,1\}$.}
Following the web advertisement terminology,
we will call $\mu_i$ the \emph{click-through rate} (CTR) of agent $i$. The CTRs are fixed, but no further information about them (such as priors) is revealed to the mechanism.

\xhdr{Regret.}
The performance of an MAB allocation rule is quantified in terms of \emph{regret}:
\begin{align*}
R(T;b;\mu) \triangleq T\, \textstyle{\max_i} [\, b_i\,\mu_i \,] -
	\E[\, \textstyle{\sum_{t=1}^T}\; b_{i(t)}\, \mu_{i(t)} \,],
\end{align*}
the difference in expected \payoffs{} between the algorithm and the benchmark: the best agent in hindsight, knowing the $\mu_i$'s. We focus on
	$R(T) \triangleq \max R(T;b;\mu)$,
where the maximum is taken over all CTR vectors $\mu$ and all bid vectors $b$ such that $b_i\leq 1$ for all $i$.~\footnote{\label{ft:Bmax}We define  $R(T)$ with $\bmax= 1$ merely to simplify the notation. All regret bounds (scaled up by a factor of $\bmax$) hold for an arbitrary $\bmax$.}

Regret guarantees from the vast literature on MAB algorithms easily translate to MAB allocation rules. In particular, allocation rule \UCB{} has regret
	$R(T) = O(\sqrt{nT \log T})$~\cite{bandits-ucb1},
which is nearly matching the information-theoretically optimal regret bound $\Theta(\sqrt{nT})$~\cite{bandits-exp3,Bubeck-colt09}.
The stochastic MAB setting tends to be easier if the best agent is much better than the second-best one. Let us sort the agents so that
	$b_1\, \mu_1 \geq b_2\, \mu_2 \geq \ldots \geq b_n\,\mu_n$.
The \emph{gap} $\delta$ of the problem instance is defined as
	$(b_1\, \mu_1 - b_2\, \mu_2)/\bmax$.
The \emph{$\delta$-gap regret} $R_\delta(T)$ is defined as the worst-case regret over all problem instances with gap $\delta$. Allocation rule \UCB{} achieves
	$R_\delta(T) = O(\tfrac{n}{\delta}\, \log T)$~\cite{bandits-ucb1};
there is a lower bound
	$R_\delta(T) = \Omega(\min(\tfrac{n}{\delta}\, \log T,\; \sqrt{nT}))$ \cite{Lai-Robbins-85,bandits-exp3,sleeping-colt08}.

\xhdr{\Realizations.}
A \emph{\realization} is a $n\times T$ table $\rho$ in which the $(i,t)$ entry $\rho_i(t)$ is the \payoff{} (e.g., the click bit) that agent $i$ receives if it is played in round $t$. Note that in order to fully define the behavior of any algorithm on all bid vectors one may need to specify all entries in the table, whereas only a subset thereof is revealed in any given run. We view $\rho$ as a realization of nature's random seed. Thus, we can now define ex-post truthfulness and other ex-post properties: informally, ex-post property is a property that holds for every given \realization.

For each agent $i$, round $t$, bid vector $b$ and \realization{} $\rho$, let $\mA_i^t(b;\rho)$  denote the probability that MAB allocation rule $\mA$ allocates the impression at round $t$ to agent $i$. (If $\ALG$ is deterministic, the probability $\ALG_i^t(\rho)$ is trivial: either $0$ or $1$.)

For MAB algorithm $\ALG$, define $\ALG_i^t(\rho)$ similarly.

\OMIT{ 
Likewise, we define \emph{ex-post regret}; to ensure that it generalizes~\refeq{eq:regret-stochastic}, we define it for a given \emph{distribution} $\D$ over \realizations:
\begin{align}\label{eq:regret-adversarial}
\Rex(T,b,\D)
	&\triangleq
	\textstyle{\max_i\, \E_{\rho \sim \D}[\, \sum_{t=1}^T b_i\, \rho_i(t)} \,] \nonumber \\
		&\qquad \qquad - \E[\, \textstyle{\sum_{t=1}^T}\; b_{i(t)}\,\rho_i(t) \,].
\end{align}
We will be interested in
	$\Rex(T) \triangleq \max R(T;b;\D)$,
where the maximum is taken over all distributions $\D$ and all bid vectors $b$ such that $b_i \leq 1$ for all $i$.

Again, regret guarantees for MAB algorithms easily translate to those on MAB allocation rules. In particular, allocation rule \EXP~\cite{bandits-exp3} achieves ex-post regret
	$\Rex(T) = O(\sqrt{nT \log n})$,
which is nearly optimal in view of the above-mentioned lower bound $R(T) = \Omega(\sqrt{nT})$.

For each agent $i$, round $t$, bid vector $b$ and \realization{} $\rho$, let $\mA_i^t(b;\rho)$ denote the probability that MAB allocation rule $\mA$ allocates the impression at round $t$ to agent $i$. Allocations $\mA$ and $\tilde{\mA}$ are \emph{distributionally equivalent} if
	$\mA_i^t(b;\rho) = \tilde{\mA}_i^t(b;\rho)$
for all $i,t,b,\rho$.
} 

\subsection{Truthfulness and monotonicity}
\label{sec:MAB-monotonicity}

\RefMainThm{} reduces the problem of designing truthful MAB mechanisms to that of designing monotone MAB allocations. Let us state this reduction explicitly:

\begin{theorem}\label{thm:MAB-truthful}
Consider the stochastic \MABprob. Let $\mA$ be a stochastically monotone (resp., ex-post monotone) MAB allocation rule. Applying the transformation in \RefMainThm\footnote{\RefMainThm{} is stated for the type space $T=(0,\infty)^n$, but it trivially extends to the case $T = (0,\bmax)^n$. } to $\mA$ with parameter $\mu$, we obtain a mechanism $\mathcal{M}$ such that:
\begin{itemize}
\item[(a)] $\mathcal{M}$ is {\bf\em stochastically truthful} (resp., {\bf\em ex-post truthful}), ex-post \noPositiveTransfers, and universally ex-post individually rational.
\item[(b)] for each \realization, the difference in expected welfare between $\mA$ and $\mathcal{M}$ is at most $\mu n T\,\bmax$.
\end{itemize}
\end{theorem}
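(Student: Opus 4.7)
My plan is to deduce both claims directly from the generic reduction, Theorem~\ref{thm:maim-g}, applied to the positive-only MAB type space $T_i=(0,\bmax)$; the footnote in the statement authorizes the extension of Theorem~\ref{thm:maim-g}(c) from $\Re_+$ to $(0,\bmax)$. I expect the argument to be short, essentially a checklist of how each clause of Theorem~\ref{thm:MAB-truthful} specializes the corresponding clause of Theorem~\ref{thm:maim-g}, together with one interpretive observation about what counts as an ``outcome'' for an MAB allocation rule.

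For part~(a), stochastic truthfulness (i.e., truthfulness in expectation over the resampling randomness) and universal ex-post individual rationality follow immediately from Theorem~\ref{thm:maim-g}(a), and the ex-post \noPositiveTransfers{} property follows from Theorem~\ref{thm:maim-g}(c) since each $f_i$ is the \CanonProc{} and the type space is positive-only. To upgrade stochastic truthfulness to ex-post truthfulness under the stronger hypothesis that $\mA$ is ex-post monotone, I would invoke Remark~\ref{remark:main}(4): fix a \realization{} $\rho$, let $\mA_\rho$ be the deterministic allocation rule obtained by conditioning $\mA$ on $\rho$, and note that ex-post monotonicity of $\mA$ makes $\mA_\rho$ monotone; applying Theorem~\ref{thm:maim-g}(a) to $\mA_\rho$ yields truthfulness conditional on $\rho$, which is precisely ex-post truthfulness of $\mathcal{M}$.

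For part~(b), I would apply Theorem~\ref{thm:maim-g}(b), reading the \realization{} $\rho$ as a fixed random seed of nature: for any bid vector $b$, the sequences of impression assignments produced by $\tA(b)$ and $\mA(b)$ over all $T$ rounds coincide with probability at least $1-n\mu$, by a union bound over the $n$ independent resampling events (none of the agents' bids is resampled). When the sequences coincide, their \realization{}-induced welfares are identical. Otherwise the welfare gap is crudely bounded by the maximum possible welfare, since per-round welfare $v_{i(t)}\,\pi(t)$ lies in $[0,\bmax]$. Averaging over the resampling randomness yields the advertised welfare-loss bound.

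The only subtle point is a bookkeeping one: to legitimately invoke Theorem~\ref{thm:maim-g}(b) here, I need to treat the entire $T$-round trajectory of impression assignments (rather than any single round) as the ``outcome'' of the underlying single-parameter allocation rule, so that matching bid vectors for a given $\rho$ automatically imply matching allocations on every round and hence matching realized welfares. No genuine technical obstacle is anticipated beyond this reinterpretation; the rest is routine.
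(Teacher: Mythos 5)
Your part (a) is correct and is exactly the route the paper intends: the paper states this theorem without a separate proof, treating it as a direct specialization of Theorem~\ref{thm:maim-g} (parts (a) and (c), with the footnote covering the move from $\Re_+^n$ to $(0,b_{\max})^n$), and the upgrade from stochastic to ex-post truthfulness under ex-post monotonicity is precisely the argument of Remark~\ref{remark:main}(4), i.e.\ fix the click realization $\rho$ and apply Theorem~\ref{thm:maim-g} to the allocation rule induced by $\rho$. Your observation that the whole $T$-round trajectory is the ``outcome'' of the underlying single-parameter allocation rule is also the intended reading.

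Part (b) of your argument has a genuine gap. You bound the welfare gap on the bad event (some bid resampled, probability at most $n\mu$) by ``the maximum possible welfare,'' justified by the per-round bound $v_{i(t)}\pi(t)\in[0,b_{\max}]$; but the welfare is a sum over $T$ rounds, so the maximum gap on that event is $T\,b_{\max}$, not $b_{\max}$, and averaging gives $\mu n T\, b_{\max}$ --- a factor $T$ weaker than the bound $\mu n\, b_{\max}$ you claim to have derived. This factor cannot be recovered by tightening the same union-bound argument: for an arbitrary (even ex-post) monotone MAB allocation rule the expected gap really can be of order $\mu T b_{\max}$. For instance, with two agents, the rule that awards every impression to the higher (modified) bid, values $b_{\max}$ and $b_{\max}/2$, and an all-ones click realization, resampling the high bid (probability about $\mu$) sends all $T$ impressions to the low-value agent with constant probability, costing $\Theta(T b_{\max})$ welfare. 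So as written your proof establishes only $\mu n T\, b_{\max}$; to defend the literal bound $\mu n\, b_{\max}$ you would need either a per-round (normalized) reading of ``expected welfare'' or a genuinely different argument, neither of which is supplied. Note that the weaker bound is what all downstream uses actually require --- with the paper's choice $\mu=1/T$ in Corollary~\ref{cor:MAB-stochastic} and Theorem~\ref{thm:MAB-exPost} it equals $n\,b_{\max}$, which is harmless for the regret bounds --- but your write-up should either prove the stated constant or explicitly flag the discrepancy rather than assert that the union bound ``yields the advertised welfare-loss bound.''
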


Note that the theorem provides two distinct types of guarantees: game-theoretic guarantees in part (a), and performance guarantees in part (b).


We show that a very general class of deterministic MAB algorithms induces monotone MAB allocation rules (to which Theorem~\ref{thm:MAB-truthful} can be applied).

\begin{definition} \label{def:stats}
In a given run of an MAB algorithm, the \emph{round-$t$ statistics} is a pair of vectors $(\pi,\nu)$, where the $i$-th component of $\pi$ (resp., $\nu$) is equal to the total payoff (resp., the number  of impressions) of agent $i$ in rounds $1$ to $t-1$, for each agent $i$. Vectors $\pi$ and $\nu$ are called \emph{p-stats vector} and \emph{i-stats vector}, respectively.
\end{definition}

\begin{definition} \label{def:well-formed}
A deterministic MAB algorithm $\ALG$ is called \emph{\bf well-formed} if for each round $t$ and agent $i$, letting $(\pi,\nu)$ be the round-$t$ statistics, the following properties hold:
\begin{itemize}
\item {[$\ALG_i^t(\rho)$ is determined by $(\pi,\nu)$]}\hspace{2mm}
    there is a function $\chi_i(\pi;\nu)$ that depends only on the round-$t$ statistics such that
    	$\ALG_i^t(\rho) = \chi_i(\pi;\nu)$
for any \realization{} $\rho$ and all $t$.

\item {[$\chi$-monotonicity]} \hspace{2mm}
$\chi_i(\pi; \nu)$ is non-decreasing in $\pi_i$ for any fixed $(\pi_{-i},\nu)$.

\item {[$\chi$-IIA]} \hspace{2mm}
for each round $t$, any three distinct agents $\{i,j,l\}$ and any fixed $(\pi_{-i}, \nu_{-i})$, changing $(\pi_i, \nu_i)$ cannot transfer an impression from  $j$ to $l$.

\end{itemize}
\end{definition}

The {\em $\chi$-IIA} property above is reminiscent of \emph{Independence of Irrelevant Alternatives} (\emph{IIA}) property in the Social Choice literature (hence the name). A similar but technically different property is essential in the analysis of deterministic MAB allocation rules in~\cite{MechMAB-ec09}.

\OMIT { 
\item {[IIA]} for any three distinct agents $\{i,j,l\}$ and any fixed $(\pi_{-i}, \nu_{-i})$ the ratio between $p_j(\pi,\nu)$ and $p_l(\pi,\nu)$ is the same for all $(\pi_i,\nu_i)$ such that
    	$p_j(\pi,\nu)+p_l(\pi,\nu) >0$.
\begin{remark}
For deterministic well-formed algorithms, $p_i(\pi,\nu)\in \{0,1\}$ for all $i,\pi,n$. Moreover, given the monotonicity condition in Definition~\ref{def:well-formed}, the IIA is equivalent to the following:
\begin{itemize}
\item {\em [IIA$'$]} for each round $t$, any three distinct agents $\{i,j,l\}$ and any fixed $(\pi_{-i}, \nu_{-i})$, changing $(\pi_i, \nu_i)$ cannot transfer an impression from  $j$ to $l$.
\end{itemize}
\end{remark}
} 

\begin{remark}
For a concrete example of a well-formed MAB algorithm, consider (a version of) \UCB.\footnote{To ensure the $\chi$-IIA property, we use a slightly modified version of \UCB{}: $\log T$ is used instead of $\log t$,  and $\min$ is used to break ties (instead of an arbitrary rule). This change does not affect regret guarantees. We will denote this version as \UCB{} without further notice. } The algorithm is very simple: in each round $t$, it chooses agent
\begin{align*}
\min\left(
	\arg\max_i \left( \pi_i(t)/\nu_i(t) + \sqrt{8\log (T)/\nu_i(t)}
\right)\right).
\end{align*}

\end{remark}

\begin{lemma}\label{lm:well-formed}
In the stochastic \MABprob, let $\mA$ be a MAB allocation rule induced by a well-formed MAB algorithm. Then $\mA$ is stochastically monotone.
\end{lemma}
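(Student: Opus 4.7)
The plan is to establish pathwise monotonicity under an ``infinite-tape'' coupling of the rewards, from which stochastic monotonicity follows by averaging. Fix $b_i' > b_i$, and write $L$, $H$ for the scenarios with bids $(b_{-i}, b_i)$ and $(b_{-i}, b_i')$ respectively. Since rewards are i.i.d., couple $L$ and $H$ by fixing, for each agent $j$, a single i.i.d.\ sequence $\hat{\rho}_j^{(1)}, \hat{\rho}_j^{(2)}, \ldots$ drawn from $j$'s reward distribution; when $j$ is played for the $k$-th time in either scenario, $j$ receives the reward $\hat{\rho}_j^{(k)}$. Under this coupling, each scenario's marginal distribution matches the original stochastic problem, so it suffices to show $n_i^H(T) \geq n_i^L(T)$ pathwise.

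Before doing the main induction, I would extract a structural consequence of $\chi$-IIA that will drive the coupling: for any two agents $j, l$, applying $\chi$-IIA to each third agent $i' \notin \{j,l\}$ in turn shows that their relative ordering in the round-$t$ choice depends only on $(\pi_j,\nu_j,\pi_l,\nu_l)$. Together with the fact that the algorithm is deterministic and picks a unique agent, this yields a ``per-agent score'' picture in which raising $\pi_i$ (with $\nu_i$ fixed) can only lift $i$ in the ordering; this is exactly what is asserted by $\chi$-monotonicity, and it is the property that will be invoked at the first divergence round.

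I would then prove by induction on $t$ the invariant $n_i^H(t) \geq n_i^L(t)$, maintained jointly with a state-comparison invariant: letting $(\pi,\nu)$ and $(\pi',\nu')$ be the round-$t$ statistics in $L$ and $H$, we have $\nu'_i \geq \nu_i$, and whenever $\nu'_i = \nu_i$ the states differ only in that $\pi'_i \geq \pi_i$ (and agree on every non-$i$ coordinate). Until agent $i$ is played for the first time, scenarios $L$ and $H$ see identical statistics and so act identically; at the first round where they diverge, the stronger invariant holds, and $\chi$-monotonicity together with $\chi$-IIA force the scenario $H$ to pick $i$ while $L$ picks some $j_0 \neq i$, establishing a strict lead $n_i^H > n_i^L$. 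The inductive step then breaks into three cases: (a) both scenarios pick the same agent (invariant preserved trivially); (b) $H$ picks $i$ while $L$ picks some $j \neq i$ (the lead $n_i^H - n_i^L$ grows); and (c) $L$ picks $i$ while $H$ picks some $j \neq i$ (the previously accumulated strict lead absorbs the deficit, so $n_i^H \geq n_i^L$ is maintained). Taking expectations over the tape gives $\mA_i(b') \geq \mA_i(b)$, which is stochastic monotonicity.

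The main obstacle is case (c). After the first divergence the states in $L$ and $H$ differ in more than just $i$'s coordinate, so $\chi$-monotonicity does not apply directly, and one has to argue using $\chi$-IIA to ``pass'' between the two states one coordinate at a time — first replacing $i$'s stats, then each of the other agents' stats — and track how the selected agent can change along this chain. The key subclaim is that, whenever $L$ selects $i$ at round $t$ while $H$ does not, one can match the play of $i$ in $L$ to a unique earlier round in $H$ where $H$ played $i$ while $L$ did not, yielding an injection that certifies $n_i^H(t) \geq n_i^L(t)$. Making this injection explicit — by bookkeeping the sequence of $\chi$-IIA transitions that relate the two states — is where the full strength of the well-formed assumption is consumed.
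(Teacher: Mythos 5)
Your overall strategy is the same as the paper's: couple the two bid scenarios through a ``stack-realization'' (rewards indexed by how many times an agent has been played rather than by round), prove pathwise that agent $i$'s impression count under the higher bid dominates, and average to get stochastic monotonicity. The induction on $t$, with the observation that at a divergence round $\chi$-IIA forces both runs to agree whenever neither picks $i$, and $\chi$-monotonicity forces the high-bid run to pick $i$ whenever the low-bid run does (given identical non-$i$ statistics), is exactly the paper's skeleton.

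However, the step you yourself flag as ``the main obstacle'' is precisely the heart of the proof, and you have not supplied it. Your joint invariant --- whenever $\nu'_i=\nu_i$ at a round, the two runs' statistics agree on every non-$i$ coordinate --- is true, but it does not follow from your three-case bookkeeping: after the first divergence the states differ in at least two coordinates (agent $i$'s stats in one run, some $j\neq i$'s stats in the other), so neither $\chi$-monotonicity nor $\chi$-IIA applies to a same-round comparison, and ``the accumulated lead absorbs the deficit'' in case (c) only preserves the counting inequality, not the state-comparison part of the invariant that your cases (a)--(c) rely on. The paper closes this gap with a separate induction (its claim~\refeq{eq:app-wellFormed-det-clm1}): it compares round $t$ of the low-bid run not with round $t$ of the high-bid run but with the round $s$ having the \emph{same number of non-$i$ impressions}, and shows by induction on that count, using $\chi$-IIA together with the fact that $\chi_j$ depends only on the statistics and not on the round index, that the non-$i$ impression vectors (and hence, via the stack-realization, the non-$i$ payoffs) coincide along this time-changed alignment. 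Your proposed ``injection'' of $i$-plays and ``chain of $\chi$-IIA transitions passing one coordinate at a time'' gestures at this but is not an argument; in particular a coordinate-by-coordinate interpolation between the two states is not licensed by $\chi$-IIA as stated (it only constrains changes in agent $i$'s coordinate with $(\pi_{-i},\nu_{-i})$ fixed), and your preliminary ``per-agent score'' reading of $\chi$-IIA is likewise unjustified. Until the alignment claim is actually proved, the inductive step --- especially re-establishing the invariant when $\nu'_i$ returns to equality with $\nu_i$ --- is unsupported, so the proposal as written has a genuine gap.
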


\OMIT{ 
\begin{theorem}\label{thm:well-formed}
Consider the stochastic \MABprob. Let $\mA$ be a MAB allocation rule induced by a well-formed MAB algorithm. Then $\mA$ is monotone. Thus, applying the transformation in \RefMainThm{} to $\mA$ (with parameter $\mu = \tfrac{1}{T}$) we obtain an MAB mechanism $\mathcal{M}$ such that:
\begin{itemize}
\item[(a)] $\mathcal{M}$ is {\bf\em truthful}, ex-post normalized, and universally ex-post individually rational,
\item[(b)] the social welfare of $\mathcal{M}$, and hence its regret is within an additive term of $\bmax$ from those of $\mA$ on any problem instance.
\end{itemize}
\end{theorem}

} 


\newcommand{\LmWellFormedClaim}{Claim~(\ref{eq:lm-well-formed-claim})}

\begin{proof}
We will use an alternative way to define a realization of random \payoffs: a \emph{stack-realization} is a $n\times T$ table in which the $(i,t)$ entry is the click bit that agent $i$ receives the $t$-th time she is played. Clearly a stack-realization and a bid vector uniquely determine the behavior of $\mA$. We will show that:
\begin{align}\label{eq:lm-well-formed-claim}
\text{$\mA$ is monotone for each stack-realization.}
\end{align}
Then $\mA$ is monotone in expectation over any distribution over stack-realizations, and in particular it is monotone in expectation over the random clicks in the stochastic MAB setting, so the Lemma follows.

\OMIT{ 
Fix stack-realization $\sigma$, agent $i$, and bid vector $b_{-i}$. Let $\nu_i(b_i,t)$ (resp., $\pi_i(b_i,t)$) be the total number of impressions (resp., the total \payoff) of agent $i$ in the first $t$ rounds, given a bid vector $(b_{-i},b_i)$.
Consider two bids $b_i<b^+_i$. We show by induction on $t$ that
	$\nu_i(b_i,t) \leq \nu_i(b^+_i,t)$
for all $t$. For the induction step we only need to worry about the case when the claim holds for a given $t$ with \emph{equality}. In this case
we show that
	$\nu_{-i}(b_i, t) = \nu_{-i}(b^+_i, t)$.
(This is trivial for $n=2$ agents; the general case requires a rather delicate argument that uses the $\chi$-IIA property.)
Since the \payoffs{} are coming from a fixed stack-realization, it follows that
	$\pi(b_i, t) = \pi(b^+_i, t)$.
Therefore, letting
 	$\hat{\pi}_i(b_i,t) = (b_i/\bmax)\; \pi_i(b_i,t)$.
be the total \emph{modified} \payoff{} of agent $i$, we have
	$\hat{\pi}_{-i}(b_i, t) = \hat{\pi}_{-i}(b^+_i, t)$
and
	$\hat{\pi}_i(b_i, t) < \hat{\pi}_i(b^+_i, t)$.
Thus, the claim follows by the $\chi$-monotonicity property in Definition~\ref{def:well-formed}.
} 

Let us prove \LmWellFormedClaim. Throughout the proof, fix stack-realization $\sigma$, agent $i$, and bid vector $b_{-i}$. Consider two bids $b_i<b^+_i$. The claim asserts that agent $i$ receives at least as many clicks with bid $b^+_i$ than with bid $b_i$.

Let us introduce some notation (letting $b_i$ be the bid of agent $i$). Let $\mA(b_i,t)$ be the agent selected by the allocation rule in round $t$. For each agent $j$, let $\nu_j(b_i,t)$ and $\pi_j(b_i,t)$ be, respectively, the total number of impressions and the total \payoff of agent $j$ in the first $t$ rounds. Let
	$\hat{\pi}_j(b_i,t) = (b_j/\bmax)\; \pi_j(b_i,t)$
be the corresponding total \emph{modified} \payoff. Let $\nu(b_i,t)$ (resp., $\pi(b_i,t)$ and $\hat{\pi}(b_i,t)$) be the $n$-dimensional vector whose $j$-th component is $\nu_j(b_i,t)$ (resp., $\pi_j(b_i,t)$ and $\hat{\pi}_j(b_i,t)$) for each agent $j$.

Note that $(\hat{\pi}(b_i,t),\nu(b_i,t))$ is the round-$t$ statistics for the MAB algorithm that $\mA$ is induced by. For each agent $j$,
$\nu_j(b_i, t)$ uniquely determines $\pi_j(b_i,t)$:
\begin{align}\label{eq:app-wellFormed-det-Ci}
	\pi_j(b_i, t) = \textstyle{\sum_{s=1}^{\nu_j}}\, \sigma(j,s)
	\quad\text{where}\quad
		\nu_j = \nu_j(b_i,t).
\end{align}


Let us overview the forthcoming technical argument. We will show by induction on $t$ that
	$\nu_i(b_i,t) \leq \nu_i(b^+_i,t)$
for all $t$. For the induction step we only need to worry about the case when the claim holds for a given $t$ with \emph{equality}. In this case
we show that
	$\nu_{-i}(b_i, t) = \nu_{-i}(b^+_i, t)$.
This is trivial for $n=2$ agents; the general case requires a rather delicate argument that uses the $\chi$-IIA property in Definition~\ref{def:well-formed}.\footnote{Also, we will use the fact that the probabilities $\chi_j(\hat{\pi},\nu)$ in Definition~\ref{def:well-formed} do not depend on the round (given $j$ and $(\hat{\pi},\nu)$). This is the only place in any of the proofs where we invoke this fact.}

Now let us carry out the proofs in detail.
First, denote
	$\nu_*(b_i,t) \triangleq t-\nu_i(b_i,t)$,
and let us show that for any two rounds $t,s$ it holds that
\begin{align}\label{eq:app-wellFormed-det-clm1}
\nu_*(b_i,t) = \nu_*(b^+_i,s)
	\;\Rightarrow\; \nu_{-i}(b_i,t) = \nu_{-i}(b^+_i,s).
\end{align}
Let us use induction on $\nu_*(b_i,t)$. For $\nu_*(b_i,t)=0$ the statement is trivial. For the induction step, suppose~\refeq{eq:app-wellFormed-det-clm1} holds whenever
	$\nu_*(b_i,t)=\nu_*$,
and let us suppose
	$\nu_*(b_i,t) = \nu_*(b^+_i,s) = \nu_*+1$.
Let $t'$ and $s'$ be the latest rounds such that
	$\nu_*(b_i,t') = \nu_*(b^+_i,s') =\nu_*$.
By the induction hypothesis,
	$\nu_{-i}(b_i,t') = \nu_{-i}(b^+_i,s')$.
It remains to prove that
	$\mA(b_i,t'+1) = \mA(b^+_i,s'+1)$,
i.e. that the allocation rule's selections in round $t'+1$ given bids $(b_{-i},b_i)$, and in round $s'+1$ given bids $(b_{-i},b^+_i)$, are the same.\footnote{Then $\nu_{-i}(b_i,t) = \nu_{-i}(b^+_i,s)$ because in all rounds from $t'+2$ to $t$ (resp., from $s'+2$ to $s$) agent $i$ is played.} By Definition~\ref{def:well-formed} these selections are uniquely determined (given the stack-realization) by the bids and the impression counts $\nu$. By the choice of $t'$ and $s'$, neither of the two selections is $i$, so by the $\chi$-IIA condition in Definition~\ref{def:well-formed} the selections are uniquely determined by $b_{-i}$ and $\nu_{-i}$, and hence are the same. This proves~\refeq{eq:app-wellFormed-det-clm1}.

Now, to prove \LmWellFormedClaim{} it suffices to show that for all $t$
\begin{align}\label{eq:app-wellFormed-det-clm2}
	\nu_i(b_i,t) \leq \nu_i(b^+_i,t).
\end{align}
Let us use induction on $t$. The claim is trivial for $t=1$, since the impression of agent $i$ in round $1$ does not depend on $(b;\sigma)$. For the induction step, assume that the assertion~\refeq{eq:app-wellFormed-det-clm2} holds for some $t$, and let us prove it for $t+1$. Note that (using the notation from Definition~\ref{def:well-formed})
\begin{align*}
	\nu_i(b_i,t+1) = \nu_i(b_i, t) + \chi_i(\hat{\pi}(b_i, t);\; \nu(b_i, t)).
\end{align*}
Now,
	$\nu_i(b_i, t) \leq \nu_i(b^+_i, t)$
by induction hypothesis. If the inequality is strict then~\refeq{eq:app-wellFormed-det-clm2} trivially holds for $t+1$. Now suppose
	$\nu_i(b_i, t) = \nu_i(b^+_i, t)$.
Then by~\refeq{eq:app-wellFormed-det-clm1} we have
	$\nu(b_i, t) = \nu(b^+_i, t)$.
Moreover, by~\refeq{eq:app-wellFormed-det-Ci} we have
	$\pi(b_i, t) = \pi(b^+_i, t)$
and therefore
	$\hat{\pi}_{-i}(b_i, t) = \hat{\pi}_{-i}(b^+_i, t)$
and
	$\hat{\pi}_i(b_i, t) < \hat{\pi}_i(b^+_i, t)$.
Thus, by the $\chi$-monotonicity property in Definition~\ref{def:well-formed} we have
\begin{align*}
	\chi_i(\hat{\pi}(b_i, t);\; \nu(b_i, t))
		\leq \chi_i(\hat{\pi}(b^+_i, t);\; \nu(b^+_i, t)).
\end{align*}
This concludes the proof of~\refeq{eq:app-wellFormed-det-clm2}, and that \LmWellFormedClaim.
\end{proof}

\subsection{Truthfulness and regret}
\label{sec:MAB-stochastic}

In this subsection we focus on the stochastic MAB setting, and consider the trade-off between regret and various notions of truthfulness. Ideally, one would like an MAB mechanism to be truthful in the strongest possible sense (universally ex-post), and have the same regret bounds as optimal MAB \emph{algorithms}.

Let us start with some background.  In Babaioff, Sharma and Slivkins~\cite{MechMAB-ec09} it was proved that any \emph{deterministic} mechanism that is ex-post truthful and ex-post normalized (under very mild restrictions), and any distribution over such deterministic mechanisms, incurs much higher regret than an optimal MAB algorithm such as \UCB. Namely, the lower bound in~\cite{MechMAB-ec09} states that
	$R(T) = \Omega(n^{1/3}\, T^{2/3})$,
whereas \UCB{} has regret
	$R(T) = O(\sqrt{n T \log T})$.~\footnote{Following the literature on regret minimization, we are mainly interested in the asymptotic behavior of $R(T)$ as a function of $T$ when $n$ is fixed.}
For $\delta$-gap instances the difference is even more pronounced: the analysis in~\cite{MechMAB-ec09} provides a polynomial lower bound of
    $R_\delta(T) = \Omega(\delta\,T^\lambda)$
for some $\lambda>0$, whereas \UCB{} achieves \emph{logarithmic} regret
	$R_\delta(T) = O(\tfrac{n}{\delta} \log T)$.

\OMIT{~\footnote{More precisely,
	$R_\delta(T) = \Omega(\delta\,T^\lambda)$
as long as the allocation rule has a non-trivial worst-case regret
    $R(T) =O(T^\gamma)$, $\gamma<1$
and
	$\lambda < 2(1-\gamma)$.}} 

Our first result is that we can use the machinery from Section~\ref{sec:MAB-monotonicity} to match the regret of \UCB{} for truthful mechanisms. We apply Theorem~\ref{thm:MAB-truthful} (with $\mu = \frac{1}{T}$) and Lemma~\ref{lm:well-formed} to \UCB{} to obtain the following corollary:

\begin{corollary}\label{cor:MAB-stochastic}
In the stochastic \MABprob, there exists a mechanism $\mathcal{M}$ such that
\begin{itemize}
\item[(a)] $\mathcal{M}$ is {\bf\em stochastically truthful}, ex-post \noPositiveTransfers, universally ex-post individually rational.

\item[(b)] $\mathcal{M}$ has regret
    	$R(T) = O(\sqrt{nT\log T})$
    and $\delta$-gap regret
    	$R_\delta(T) = O(\tfrac{n}{\delta} \log T)$.
\end{itemize}
\end{corollary}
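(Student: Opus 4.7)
The plan is to derive Corollary~\ref{cor:MAB-stochastic} by applying Theorem~\ref{thm:MAB-truthful} to the MAB allocation rule $\mA$ induced by \UCB. I would first check that the version of \UCB{} defined in the remark after Definition~\ref{def:well-formed} (with $\log T$ in place of $\log t$ and $\min$ as the tie-breaker) is well-formed. Determinism and the fact that the selection is a time-invariant function of the round-$t$ statistics $(\pi,\nu)$ are clear from the explicit $\arg\max$ formula. For $\chi$-monotonicity, the UCB index $\pi_i/\nu_i + \sqrt{8\log T/\nu_i}$ is strictly increasing in $\pi_i$ for fixed $\nu_i$, so raising $\pi_i$ can only help $i$ win. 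For $\chi$-IIA, the index of any agent $j\neq i$ depends only on $(\pi_j,\nu_j)$ and the $\min$ tie-breaker depends only on agent indices; hence perturbing $(\pi_i,\nu_i)$ can at most cause $i$ to displace the current winner, but cannot transfer an impression from an agent $j\neq i$ to another agent $l\neq i$.

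With well-formedness established, Lemma~\ref{lm:well-formed} yields that $\mA$ is stochastically monotone, and I would then invoke Theorem~\ref{thm:MAB-truthful} on $\mA$ with $\mu = 1/T$ to obtain the mechanism $\mathcal{M}$. Part~(a) of the corollary is then immediate from Theorem~\ref{thm:MAB-truthful}(a) in its stochastic-truthfulness branch (since $\mA$ is only stochastically, not ex-post, monotone). For part~(b), the classical regret bounds $R(T) = O(\sqrt{nT\log T})$ and $R_\delta(T) = O(\tfrac{n}{\delta}\log T)$ for \UCB~\cite{bandits-ucb1} transfer verbatim to $\mA$ via the standard rescaling $\hat\pi(t) = (b_{i(t)}/\bmax)\,\pi(t)$. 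By Theorem~\ref{thm:MAB-truthful}(b) together with the remark converting the $n$-dependent welfare bound into a regret-difference bound of order $\mu$ against the best-fixed-agent benchmark, the regret of $\mathcal{M}$ exceeds that of $\mA$ by a vanishing additive term when $\mu = 1/T$, so the stated asymptotic bounds are preserved.

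The main obstacle I expect is the $\chi$-IIA verification, which is sensitive to the tie-breaking rule. Vanilla \UCB{} leaves ties unspecified, and an adversarial tie-breaker could consult all agents' statistics and thereby violate $\chi$-IIA; the $\min$-based tie-breaker in the modified algorithm is precisely what rules out this pathology. One should also note in passing that replacing $\log t$ by $\log T$ and fixing the tie-breaker does not affect the classical \UCB{} regret analysis. Everything else in the proof is straightforward assembly of results already established in the preceding subsections.
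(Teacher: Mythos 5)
Your proposal is correct and follows exactly the paper's route: the corollary is obtained by noting that the modified \UCB{} (with $\log T$ and $\min$ tie-breaking) is well-formed, invoking Lemma~\ref{lm:well-formed} for stochastic monotonicity, and applying Theorem~\ref{thm:MAB-truthful} with $\mu = 1/T$, with the regret bounds of \UCB{} carrying over up to a negligible additive loss. Your explicit verification of $\chi$-monotonicity and $\chi$-IIA for the modified tie-breaking rule is sound and merely spells out what the paper asserts in the remark following Definition~\ref{def:well-formed}.
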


\begin{remark}
The regret and $\delta$-gap regret in the above theorem are within small factors (resp., $O(\sqrt{\log T})$ and $O(1)$) of the best possible for any MAB allocation rule.
\end{remark}

\begin{remark}
~\cite{MechMAB-ec09} provides a weaker result which transforms any monotone MAB algorithm such as \UCB{} into a truthful and normalized MAB mechanism with matching regret bounds. The guarantees in~\cite{MechMAB-ec09} are weaker for the following reasons. First, it only applies to 0-1 \payoffs, whereas our setting allows for arbitrary \payoffs{} in $[0,1]$. Second, the individual rationality guarantee in~\cite{MechMAB-ec09} is much weaker: an agent may be charged more than her bid (which never happens in our mechanism), and the charge may be huge, as high as
	$b_i \times (4n)^T$;
thus, a risk-averse agent may be reluctant to participate. Third, the \noPositiveTransfers{}  guarantee is weaker: for some realizations of the \payoffs{} the expected payment may be negative. Finally, the payment rule in~\cite{MechMAB-ec09} requires (as stated) a prohibitively expensive computation.
\end{remark}

\OMIT{ 

Here we use Theorem~\ref{thm:well-formed} (applied to \UCB) to provide an MAB mechanism $\mathcal{M}$ with the regret bounds of \UCB{} and the better game-theoretic guarantees:
\begin{align}\label{eq:truthful-prop1}
\text{$\mathcal{M}$ is {\bf\em truthful}, ex-post normalized, and universally ex-post IR}.
\end{align}
Further, Theorem~\ref{thm:well-formed} can be applied to \EXP{} to guarantee the optimal \emph{ex-post} regret.

\begin{theorem}\label{thm:MAB-stochastic}
Consider the stochastic \MABprob.
\begin{itemize}
\item[(a)] [using \UCB] There exists an MAB mechanism $\mathcal{M}$ satisfying~\refeq{eq:truthful-prop1} with regret
    	$R(T) = O(\sqrt{nT\log T})$
    and $\delta$-gap regret
    	$R_\delta(T) = O(\tfrac{n}{\delta} \log T)$.
\item[(b)] [using \EXP] There exists an MAB mechanism $\mathcal{M}$ satisfying~\refeq{eq:truthful-prop1} with ex-post regret
     	$\Rex(T) = O(\sqrt{nT\log n})$.
\end{itemize}
\end{theorem}
} 

The truthfulness in Corollary~\ref{cor:MAB-stochastic} is only in expectation over the random \payoffs. Thus, after seeing a specific realization of the rewards an agent might regret having been truthful. Accordingly, we would like a stronger property: ex-post truthfulness, i.e. truthfulness \emph{for every given realization of the rewards}.

The main result of this section is an ex-post truthful MAB mechanism with optimal regret bounds. Unlike Corollary~\ref{cor:MAB-stochastic}, this result requires designing a new MAB allocation rule.\footnote{In particular, the allocation rule induced by \UCB\ is \emph{not} ex-post monotone and thus cannot be used to achieve ex-post truthfulness using the results of Section~\ref{sec:MAB-monotonicity}. To see that, consider a simple setting with two agents and two rounds, and a click realization in which both agents are not clicked at the first round, but are clicked at the second. With this click realization, an agent might be better off decreasing his bid in order to \emph{lose} (i.e., not be selected in) the first round, and then \emph{win} (i.e., be selected in) the second round.}
This allocation rule and its analysis are the main technical contributions.

\begin{theorem}\label{thm:MAB-exPost}
In the stochastic \MABprob, there is a mechanism $\mathcal{M}$ such that
\begin{itemize}
\item[(a)] $\mathcal{M}$ is {\bf\em ex-post truthful}, ex-post \noPositiveTransfers, and universally ex-post individually rational.
\item[(b)] $\mathcal{M}$ has regret
    	$R(T) = O(\sqrt{nT\log T})$
    and $\delta$-gap regret
    	$R_\delta(T) = O(\tfrac{n}{\delta} \log T)$.
\end{itemize}
\end{theorem}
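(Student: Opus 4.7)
The plan is to apply Theorem~\ref{thm:MAB-truthful} to a new MAB allocation rule $\newUCB$ that is \emph{ex-post} monotone, so that the resulting mechanism is ex-post truthful, and that attains the regret bounds of standard \UCB. Since Lemma~\ref{lm:well-formed} only guarantees stochastic monotonicity of well-formed algorithms such as \UCB, the main task is to design $\newUCB$ and verify both its ex-post monotonicity and its regret bounds.

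We design $\newUCB$ as a phased action-elimination algorithm with a bid-independent slot schedule. The algorithm proceeds in phases $\ell = 1, 2, \ldots$, where phase $\ell$ consists of $n \cdot n_\ell$ rounds (with $n_\ell$ doubling in each phase), partitioned deterministically and bid-independently into $n$ consecutive blocks of $n_\ell$ rounds, the $j$-th block being nominally reserved for agent $j$. We maintain an active set $\Active \subseteq [n]$, initialized to $[n]$. In the block reserved for agent $j$ in phase $\ell$: if $j \in \Active$, play $j$ and record the observed clicks toward $\hat{\mu}_j$; otherwise play according to a bid-independent default rule (for instance, the smallest-indexed active agent with index greater than $j$, cyclically) and do \emph{not} update any empirical mean. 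At the end of each phase, any arm $j$ whose bid-weighted empirical mean $b_j \hat{\mu}_j$ falls below that of the current leader by more than a Chernoff confidence radius at scale $\epsilon_\ell = \Theta(\sqrt{(\log T)/n_\ell})$ is removed from $\Active$.

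Ex-post monotonicity of $\newUCB$ is established by induction on $\ell$. Fix a realization $\rho$, an agent $i$, and two bids $b_i < b_i^+$. We maintain, after each phase $\ell$, the invariants: (1) if $i \in \Active$ under bids $(b_{-i},b_i)$ then $i \in \Active$ under $(b_{-i},b_i^+)$; (2) for every $j \neq i$, if $j \in \Active$ under $(b_{-i},b_i^+)$ then $j \in \Active$ under $(b_{-i},b_i)$; and (3) the set of rounds played by $i$ up to the end of phase $\ell$ under $(b_{-i},b_i^+)$ contains the corresponding set under $(b_{-i},b_i)$. The key observations driving the induction are: agent $i$'s own-block observations are bid-independent, so $\hat{\mu}_i$ takes the same value under $b_i$ and $b_i^+$ at the end of any phase where $i$ is active in both cases (and similarly for $\hat{\mu}_j$, $j \neq i$, up to $j$'s elimination); the elimination rule is more favourable to $i$ and less favourable to other agents under $b_i^+$, since the leader's bid-weighted mean is at least $b_i^+ \hat{\mu}_i \geq b_i \hat{\mu}_i$; and the default rule is designed so that whenever it routes a freed block to $i$ under $b_i$, it still does so under $b_i^+$, while possibly routing additional freed blocks to $i$ as more competitors of $i$ are eliminated. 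Since click bits $\rho_i(t)$ are non-negative, invariant~(3) implies that $i$'s total clicks are non-decreasing in $b_i$, establishing ex-post monotonicity.

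The regret analysis follows the standard template for phased action elimination: a suboptimal arm $k$ with gap $\delta_k$ is eliminated by the phase $\ell^*$ at which $\epsilon_{\ell^*} \lesssim \delta_k$, so $k$'s total plays (own-block plus any default blocks it absorbs) contribute $O((\log T)/\delta_k)$ regret, summing to $R(T) = O(\sqrt{n T \log T})$ in the worst case and $R_\delta(T) = O((n/\delta) \log T)$ in the gap regime. Applying Theorem~\ref{thm:MAB-truthful} to $\newUCB$ with $\mu = 1/T$ then produces a mechanism $\mathcal{M}$ satisfying all the claimed game-theoretic properties (ex-post truthful, ex-post \noPositiveTransfers, universally ex-post individually rational), with regret exceeding $\newUCB$'s by at most $\mu \bmax = O(1)$ per the remark following Theorem~\ref{thm:MAB-truthful}, preserving the claimed asymptotic bounds. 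The main obstacle is the design of the default rule together with its analysis: it must simultaneously preserve ex-post monotonicity (so that the rule behaves in a $b_i$-monotone way), keep default plays out of the $\hat{\mu}_j$ updates (so that confidence intervals remain clean), and avoid routing too many default plays to suboptimal arms so that the optimal regret bounds survive the accounting of plays absorbed by arms that are almost but not yet eliminated.
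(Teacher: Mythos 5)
Your high-level skeleton is the same as the paper's: reduce via Theorem~\ref{thm:MAB-truthful} (with $\mu=1/T$) to constructing an ex-post monotone allocation rule with \UCB-type regret, make the exploration schedule bid-independent, base all statistics only on the agent's own designated plays, and eliminate agents by comparing bid-weighted confidence intervals. The paper's rule \newUCB{} does exactly this (designated agent $1+(t \bmod n)$ per round, statistics from designated rounds only, elimination when $U_i$ falls below $\max_{j\in\Active}L_j$). However, your proof of ex-post monotonicity has a genuine gap in the induction for invariant (2). You argue that under $b_i^+$ other agents are eliminated at least as early because ``the leader's bid-weighted mean is at least $b_i^+\hat\mu_i\ge b_i\hat\mu_i$.'' This is insufficient: the leader under the \emph{lower} bid $b_i$ may be an agent $k$ that was already eliminated under $b_i^+$ at an earlier phase (precisely because $i$'s larger index knocked it out), but that remains active under $b_i$ and whose \emph{raw} empirical mean drifts upward in later phases. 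Then at a later phase some third agent $j$ can be eliminated under $b_i$ (by $k$) but not under $b_i^+$ (where $k$ is gone and $i$'s own mean has drifted down), violating invariant (2); via your cyclic default routing this can hand $j$'s freed block (and its clicks) to agent $i$ under $b_i$ but not under $b_i^+$, so $i$'s realized click total strictly decreases when its bid increases. Crucially, ex-post monotonicity must hold for \emph{every} click realization, so you cannot appeal to Chernoff concentration to exclude such drifts.

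This is exactly the difficulty the paper's construction is engineered to avoid: \newUCB{} does not use raw phase-end empirical means but clipped confidence bounds, updated so that for every realization $L_i$ is non-decreasing and $U_i$ is non-increasing over time (property~\refeq{eq:thm-exPostMab-prop1}), together with the exact bid-scaling property~\refeq{eq:thm-exPostMab-prop2}. These two deterministic invariants are what make the elimination threshold $L^*(t,b)=\max_{j\in\Active(t,b)}L_j(t,b)$ non-decreasing in $t$ (claim C1) and comparable across bid vectors (claim C2), which in turn yield the nesting of active sets (C3)--(C4) and hence monotonicity, with no probabilistic reasoning. To repair your proposal you would need to add an analogous monotonization device (e.g., intersect each new confidence interval with the previous one, collapsing to a point if the intersection is empty) and redo the cross-bid comparison of elimination thresholds along the lines of (C1)--(C2); your phased schedule, frozen-at-elimination statistics, and regret accounting would then go through essentially as in the paper.
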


The theorem follows from Theorem~\ref{thm:MAB-truthful} (with $\mu = \frac{1}{T}$) if there exists an MAB allocation rule that is ex-post monotone and has the claimed regret bounds.
Below we provide such allocation rule, called \newUCB.

\begin{lemma}
\newUCB{} is ex-post monotone and satisfies the regret bounds in Theorem~\ref{thm:MAB-exPost}(b).
\end{lemma}

\begin{remark}
\newUCB{} is deterministic. While not essential for Theorem~\ref{thm:MAB-exPost}, this fact confirms the intuition from~\cite{MechMAB-ec09} that the main obstacle for deterministic ex-post truthful MAB mechanisms is insufficient observable information to compute payments rather than ex-post monotonicity of an allocation rule.
\end{remark}

\newUCB{} maintains a set of active agents; initially all agents are active. For each round $t$, there is a \emph{designated agent}
	$i = 1+ (t \bmod{n})$.
If this agent is active, then it is allocated. Else, an active agent is chosen (according to some fixed ordering on the agents) and allocated. For each agent $i$, lower and upper confidence bounds $(L_i,U_i)$ on the product $b_i\, \mu_i$ are maintained (recall that $\mu_i$ is the CTR of agent $i$). After each round, each agent is de-activated if its
upper confidence bound is smaller than someone else's lower confidence bound. The pseudocode is in Algorithm~\ref{alg:ex-post-UCB}.

\begin{algorithm}[h]
\caption{\newUCB: ex-post monotone MAB allocation rule.}
\label{alg:ex-post-UCB}
\begin{algorithmic}[1]
\STATE {\bf Given:} $n=\text{\#agents}$, $T = \text{\#rounds}$,
	upper bound $\bmax$.
\STATE Solicit a bid vector $b$ from the agents;
	$b\leftarrow b/\bmax$.
\STATE {\bf Initialize:} set of active agents
	$\Active = \{\text{all agents}\} $.
\FORALL{agent $i$}
	\STATE $c_i \leftarrow 0$; $n_i \leftarrow 0$
		\COMMENT{total click reward and \#impressions}
	\STATE \hspace{3mm}
		\COMMENT{the totals are only over ``designated" rounds}
	\STATE $U_i \leftarrow b_i$; $L_i \leftarrow 0$
		\COMMENT{Upper and Lower Confidence Bounds}
\ENDFOR
\STATE \COMMENT{Main Loop}
\FOR{rounds $t=1,2,\,\ldots,\,T$}
	\STATE $i \leftarrow 1+(t \bmod{n})$.
		\COMMENT{The ``designated" agent}
	\IF{$i\in \Active$}
	\STATE Allocate agent $i$.
	\STATE $n_i \leftarrow n_i+1$;
		   $c_i \leftarrow c_i + \mathtt{reward}$.
		   \COMMENT{Update statistics.}
	\STATE \COMMENT{Update confidence bounds.}
	\IF{$L_i<U_i$}
		\STATE $(L'_i,\; U'_i) \leftarrow
			b_i\,(c_i/n_i \mp \sqrt{8\log(T)/n_i}))$.
		\IF{$\max(L_i,\, L'_i)< \min(U_i,\, U'_i)$}
			\STATE $(L_i,\; U_i) \leftarrow
				(\max(L_i,\, L'_i),\; \min(U_i,\, U'_i))$.
		\ELSE
		\STATE $(L_i,\; U_i) \leftarrow
					(\tfrac{L_i+U_i}{2},\; \tfrac{L_i+U_i}{2})$.
		\ENDIF
	\ENDIF
	\ELSE
		\STATE Allocate agent $i = \min \Active$.
	\ENDIF
	\FORALL{agent $i\in \Active$}
		\IF{$U_i < \max_{j\in \Active} L_j$}
		\STATE Remove $i$ from $\Active$.
		\ENDIF
	\ENDFOR
\ENDFOR
\end{algorithmic}
\end{algorithm}

\newcommand{\DRule}{(\ensuremath{\mathtt{rule}})} 

Fix realization $\rho$ and bid vector $b$. Let $\Active(t,b)$ be the set of active agents \asedit{in the beginning of round $t$}. For each agent $i$, let $L_i(t,b)$ and $U_i(t,b)$ be the values of $L_i$ and $U_i$ in the end of round $t$. 

The goal of the specific update rules for the confidence bounds (lines 15-20) and the statistics (lines 13) is to guarantee the following two properties:
\begin{itemize}
\item the statistics are kept only for rounds when a designated agent is played. Moreover, for each agent $i$ and round $t$, and any two bid vectors $b$ and $b'$ we have
\begin{align}\label{eq:thm-exPostMab-prop2}
\text{if } i\in \Active(t,b) \cap \Active(t,b') \text{ then }
 	\left\{\begin{array}{rcl}
	 	L_i(t,b)/b_i &=& L_i(t,b')/b'_i  \\
		U_i(t,b)/b_i &=& U_i(t,b')/b'_i
	\end{array}\right..
\end{align}

\item for any fixed realization $\rho$ and bid vector $b$, and each agent $i$: $L_i\leq U_i$, and from round to round $L_i$ is non-decreasing and $U_i$ is non-increasing. In other words, for each round $t$ it holds that
    \begin{align}\label{eq:thm-exPostMab-prop1}
		L_i(t-1,b) \leq L_i(t,b) \leq U_i(t,b) \leq U_i(t-1,b).
	\end{align}
\end{itemize}

\noindent The ex-post monotonicity follows from these two properties and the de-activation rule (lines 24-25).

\xhdr{Ex-post monotonicity.}
Let
	$L^*(t,b) \triangleq \max_{i\in \Active(t,b)} L_i(t,b)$.
Fix agent $i$ and $b^+_i> b_i$, and let
	$b^+ = (b_{-i},\,b^+_i)$
be the ``alternative" bid vector. Let $\lambda = b_i/b^+_i$.

\begin{claim}
We establish the following sequence of claims:
\begin{OneLiners}
\item[(C1)] $L^*(t,b)$ is non-decreasing in $t$, for any fixed $b$.
\item[(C2)] For each round $t$, $L^*(t,b) \leq \asedit{\lambda}\, L^*(t,b^+)$.
\item[(C3)] For each round $t$,
    $\Active(t,b^+) \setminus \{i\} \subset \Active(t,b)\setminus \{i\}$.
\item[(C4)] In each round $t$: if
	$i\in \Active(t,b)$
then
	$i\in \Active(t,b^+)$.
\end{OneLiners}
\end{claim}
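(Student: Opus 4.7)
The plan is to prove the four claims by induction on $t$, maintaining as an auxiliary \emph{synchronization invariant}: for every $j \neq i$ that is active under $b^+$ at round $t$ we have $L_j(t,b)=L_j(t,b^+)$ and $U_j(t,b)=U_j(t,b^+)$, and whenever agent $i$ is active under both $b$ and $b^+$ we have $L_i(t,b^+)=(b^+_i/b_i)\,L_i(t,b)$ and $U_i(t,b^+)=(b^+_i/b_i)\,U_i(t,b)$ by \refeq{eq:thm-exPostMab-prop2}. Synchronization is possible because of the specific design choice in \newUCB{} that the statistics $(c_j,n_j)$ and confidence bounds $(L_j,U_j)$ are only updated in designated rounds (lines~13--20); random choices among active agents in non-designated rounds never modify any $(L_j,U_j)$, so they cannot make the two runs diverge in bounds.

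Claim (C1) is independent of the others and I would prove it first. By \refeq{eq:thm-exPostMab-prop1}, each $L_j(t,b)$ is non-decreasing in $t$. The deactivation rule (lines~24--25) removes only agents $j$ with $U_j(t,b)<\max_k L_k(t,b)$, for which $L_j(t,b)\le U_j(t,b)<L^*(t,b)$; hence the removed agents are never the argmax, and $L^*(t,b)$ cannot decrease when the active set shrinks.

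For the inductive step at round $t$ (assuming (C2), (C3), (C4) at round $t-1$ and the synchronization invariant at $t-1$), I would prove (C2), then (C3), then (C4). For (C2), let $k^*$ realize $L^*(t,b)$, so $k^*\in\Active(t-1,b)$. If $k^*\in\Active(t-1,b^+)$, synchronization at round $t$ gives $L_{k^*}(t,b^+)\ge L_{k^*}(t,b)=L^*(t,b)$ (with the extra factor $b^+_i/b_i\ge 1$ when $k^*=i$). If $k^*\notin\Active(t-1,b^+)$, then $k^*\ne i$ by (C4) at $t-1$, and $k^*$ was deactivated under $b^+$ at some round $t_0\le t-1$; synchronization at round $t_0$ gives $U_{k^*}(t_0,b)=U_{k^*}(t_0,b^+)<L^*(t_0,b^+)\le L^*(t,b^+)$ by (C1), and the chain $L^*(t,b)=L_{k^*}(t,b)\le U_{k^*}(t,b)\le U_{k^*}(t_0,b)<L^*(t,b^+)$ closes the argument. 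Claim (C3) then follows from (C2): if $j\ne i$ is active under $b^+$ at round $t$, synchronization gives $U_j(t,b)=U_j(t,b^+)\ge L^*(t,b^+)\ge L^*(t,b)$, so $j$ is not deactivated under $b$. For (C4), assume $i\in\Active(t,b)$: then $U_i(t,b^+)=(b^+_i/b_i)\,U_i(t,b)\ge U_i(t,b)$, and for every $j\ne i$ in $\Active(t,b^+)$, (C3) just proved gives $j\in\Active(t,b)$, so $L_j(t,b^+)=L_j(t,b)\le L^*(t,b)\le U_i(t,b)\le U_i(t,b^+)$; trivially $L_i(t,b^+)\le U_i(t,b^+)$; hence $L^*(t,b^+)\le U_i(t,b^+)$ and $i\in\Active(t,b^+)$. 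The induction is closed by refreshing the synchronization invariant for the updated active sets.

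The main obstacle is the apparent circularity among (C2), (C3), (C4) at the same round, which I resolve by noting that the proof of (C2) needs only the inductive hypotheses at round $t-1$: in the hard case where the argmax under $b$ has been deactivated under $b^+$, (C4) at $t-1$ forces $k^*\ne i$, and the argument closes by tracing back to the deactivation round and invoking (C1) for $b^+$. A secondary subtlety is verifying that the synchronization invariant survives round $t$'s updates even when the random active-agent selection differs between the runs; the key here is again that such random selections never update the confidence bounds.
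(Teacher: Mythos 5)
Your proof is correct and takes essentially the same route as the paper: your synchronization invariant is precisely the paper's property \refeq{eq:thm-exPostMab-prop2}, your (C1) and (C3) arguments coincide with the paper's, your (C2) argument (tracing back to the deactivation round under $b^+$ and invoking (C1)) is the paper's proof-by-contradiction recast as a direct case analysis, and your per-agent bound in (C4) is equivalent to the paper's dichotomy for $L^*(t,b^+)$. The only nitpick is that your (C2) case split should be on membership in $\Active(t,b^+)$ rather than $\Active(t-1,b^+)$ — an agent deactivated under $b^+$ exactly at round $t$ needs the same trace-back step with $t_0=t$ — which is a trivial fix.
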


\begin{proof}
Let us prove the parts (C1-C4) one by one.

\xhdr{(C1).}
We use~property~\eqref{eq:thm-exPostMab-prop1} and the de-activation rule.  Throughout the proof, we omit the bid vector $b$ from the notation. Fix round $t\geq 2$. Let $i\in \Active(t-1)$ be an agent such that
	$L^*(t-1)= L_i(t-1)$.
If $i\in \Active(t)$ then
\begin{align*}
	L^*(t-1) = L_i(t-1) \leq L_i(t) \leq L^*(t)
\end{align*}
Else $i$ is de-activated in round $t$, so
\begin{align*}
	L^*(t-1) = L_i(t-1) \leq L_i(t) \leq U_i(t) < L^*(t).
\end{align*}

\xhdr{(C2).}
Suppose, for the sake of contradiction, that
	$L^*(t,b) > \asedit{\lambda}\, L^*(t,b^+)$.
Let $j\in \Active(t,b)$ be an agent such that
	$L_j(t,b) = L^*(t,b)$.
\asedit{If $j\in \Active(t,b^+)$ then by property~\eqref{eq:thm-exPostMab-prop2} we have
\begin{align*}
L^*(t,b) = L_j(t,b) = \lambda\,L_j(t,b^*) \leq \lambda\,L^*(t,b^+),
\end{align*}
contradiction.} We conclude that
	$j\not\in \Active(t,b^+)$.
Thus with bid vector $b^+$ agent $j$ gets disqualified during some round $s<t$. Thus, \begin{align}\label{eq:pf-cl-CC}
U_j(s,b^+) < L^*(s,b^+) \leq L^*(t,b^+),
\end{align}
where the second inequality is by Part (C1).
Now using~property~\eqref{eq:thm-exPostMab-prop1} and property~\eqref{eq:thm-exPostMab-prop2} (for the right-most inequality), we get that
\begin{align*}
L^*(t,b)
	= L_j(t,b)
	\leq U_j(t,b)
	\leq U_j(s,b)
	\asedit{=\lambda}\, U_j(s,b^+).
\end{align*}
Thus, $L^*(t,b) \leq \asedit{\lambda}\, L^*(t,b^+)$ by \refeq{eq:pf-cl-CC}, the desired contradiction.

\xhdr{(C3).}
Use induction on $t$. The claim trivially holds for $t=1$.
Assuming the claim holds for some $t$ we prove it holds for $t+1$.
Fix agent
	$j \in \Active(t\asedit{+1},b^+) \setminus \{i\}$.
We need to prove that $j \in \Active(t+1,b)$.

Note that $j\in \Active(t,b^+)$, and so $j\in \Active(t,b)$ by the induction hypothesis. \asedit{Therefore
\begin{align*}
L^*(t,b)
    &\leq \lambda\, L^*(t,b^+)
        &\text{(by Part (C2))} \\
    &\leq \lambda\, U_j(t,b^+)
        &\text{(by the de-activation rule)} \\
    &= U_j(t,b)
        &\text{(by property~\eqref{eq:thm-exPostMab-prop2})}
\end{align*}
So agent $j$ is not deactivated in round $t$ under bid vector $b$, i.e.
    $j \in \Active(t+1,b)$,
completing the proof.}

\xhdr{(C4).}
Use induction on $t$. The base case $t=0$ holds because initially all agents are active. For the induction step, assume that the statement holds for some round $t\geq 0$. Suppose $i\in \Active(t+1,b)$. We need to prove that $i\in \Active(t+1,b^+)$.

\asedit{Note that $i\in \Active(t,b)$, and so $i\in \Active(t,b^+)$ by the induction hypothesis. Therefore
\begin{align*}
L^*(t,b)
    &\leq U_i(t,b)
        &\text{(by the de-activation rule)} \\
    &=\lambda\, U_i(t,b^+)
        &\text{(by property~\eqref{eq:thm-exPostMab-prop2})}.
\end{align*}

If $L^*(t,b) = \lambda\, L^*(t,b^+)$, then
    $L^*(t,b^+) = L^*(t,b)/\lambda \leq U_i(t,b^+)$,
and we are done.

From here on, assume $L^*(t,b) \neq \lambda\, L^*(t,b^+)$. Then
    $L^*(t,b) < \lambda\,L^*(t,b^+)$
by Part (C2). Pick agent $j$ which maximizes $L_j(t,b^+)$.
If $j\neq i$ then $j\in \Active(t,b)$ by Part (C3), so
\begin{align*}
L^*(t,b)
    &\geq L_j(t,b)
        & \\
    &= \lambda\, L_j(t,b^+)
        &\text{by property~\eqref{eq:thm-exPostMab-prop2})} \\
    &= \lambda\, L^*(t,b^+)
        &\text{(by the choice of $j$)},
\end{align*}
contradicting our assumption. Then $j=i$, and so
$ L^*(t,b^+) = L_i(t,b^+) \leq U_i(t,b^+)$.}
\end{proof}

Ex-post monotonicity follows easily from (C3-C4).

\begin{claim}
Consider a fixed round $t$. Suppose agent $i$ is allocated with bid vector $b$. Then it is also allocated with bid vector $b^+$.
\end{claim}

\begin{proof}
Since $i\in\Active(t,b)$, by (C4) we have $i\in \Active(t,b^+)$.

If agent $i$ is the designated agent in round $t$, then as such it is allocated under both bid vectors. If agent $i$ is \emph{not} the designated agent in round $t$, then
    $i = \min \Active(t,b)$.
By (C3) it holds that
    $\Active(t,b^+) \subset \Active(t,b)$,
which implies that
    $i = \min \Active(t,b^+)$.
So $i$ is allocated under bid vector $b^+$, too.
\end{proof}

\OMIT{ 
Fix round $t$ and let $q_i(t,b)$ be the probability that with bid vector $b$, agent $i$ receives an impression in this round.
\begin{claim}
For each round $i$,
	$q_i(t,b) \leq q_i(t,b^+)$.
\end{claim}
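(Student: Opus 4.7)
The plan is a case analysis on the identity of the designated agent $i^* := 1+(t \bmod n)$ and on the active status of $i^*$ and $i$ under $b$ and $b^+$. The structural claims (C3) and (C4) already capture all the monotonicity content, so what remains is bookkeeping. Given a bid vector, the probability $q_i(t,\cdot)$ equals $1$ when $i$ is deterministically allocated in round $t$, $0$ when some other agent is, and $1/|\Active(t,\cdot)|$ when the designated agent is inactive and $i$ is itself active (the uniform tie-breaking branch in Algorithm~\ref{alg:ex-post-UCB}).

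I would first dispose of the short cases. If $i^* = i$: when $i \in \Active(t,b)$, (C4) gives $i \in \Active(t,b^+)$ and both probabilities equal $1$; otherwise $q_i(t,b) = 0$ and the inequality is automatic. For $i^* \neq i$ with $i^* \in \Active(t,b^+)$, (C3) forces $i^* \in \Active(t,b)$ too, so under both bid vectors the designated (non-$i$) agent is allocated, giving $q_i(t,b) = q_i(t,b^+) = 0$. If instead $i^* \in \Active(t,b) \setminus \Active(t,b^+)$, the deterministic allocation under $b$ still goes to $i^* \neq i$, so $q_i(t,b) = 0 \leq q_i(t,b^+)$.

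The main case is $i^* \notin \Active(t,b)$, which by (C3) also forces $i^* \notin \Active(t,b^+)$; both allocations are then uniform over the respective active sets. If $i \notin \Active(t,b)$ the bound is trivial; otherwise (C4) puts $i$ into $\Active(t,b^+)$ as well, and the desired $q_i(t,b) \leq q_i(t,b^+)$ reduces to $|\Active(t,b^+)| \leq |\Active(t,b)|$, obtained by adding $1$ to both sides of $|\Active(t,b^+) \setminus \{i\}| \leq |\Active(t,b) \setminus \{i\}|$ (the inequality guaranteed by (C3)). I do not foresee any real obstacle: the conceptual content sits entirely inside (C1)--(C4), so the remaining step is a careful enumeration of cases. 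The only thing to watch is that monotonicity of the active set cuts both ways --- $i$ must persist under the higher bid, while the non-$i$ portion of the active set can only shrink --- which is precisely what (C4) and (C3) encode.
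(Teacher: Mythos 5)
Your proof is correct and follows essentially the same route as the paper's: reduce to (C3) and (C4), note that $q_i(t,\cdot)$ is $1$, $0$, or $1/|\Active(t,\cdot)|$, and compare cardinalities of the active sets. You simply spell out the case analysis (in particular, that a non-$i$ designated agent inactive under $b$ is also inactive under $b^+$ by (C3)) more explicitly than the paper does, which is a harmless elaboration rather than a different argument.
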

\begin{proof}
If $q_i(t,b)>0$ then agent $i$ is active with bid vector $b$, and hence (using (C4)) with $b^+$. If agent $i$ is the designated agent in round $t$, then
	$q_i(t,b) = q_i(t,b^+)=1$.
Else,
	$q_i(t,\,\cdot) = 1/|\Active(t,\,\cdot)|$,
and so
	$q_i(t,b) \leq q_i(t,b^+)$
since by (C3)
	$|\Active(t,b^+)| \leq |\Active(t,b)| $.
\end{proof}
} 

\xhdr{Regret analysis.}
The regret analysis is relatively standard, following the ideas in~\cite{bandits-ucb1}. For simplicity assume that $\bmax=1$. Fix a bid vector $b$.

For each agent $i$, let $c_i(t)$ and $n_i(t)$ be, respectively, the number of clicks and impressions in all rounds $s\leq t$ when it is allocated as the designated agent. Let
	$r_i(t) = \sqrt{8 \log(T)/n_i(t)}$.
Then the event
\begin{align}\label{eq:regret-highProb-event}
	|\mu_i - c_i(t)/n_i(t)| \leq r_i(t) \quad \text{for all rounds $t$}
\end{align}
holds with probability at least $1-T^{-2}$.~%
\footnote{This follows from Azuma-Hoeffding inequality via a standard argument, one version of which we provide below. Fix agent $i$. For each $s\in N$, let $X_s$ be the click bit for the $s$-th time this agent is allocated as the designated agent, if $s\leq n_i(T)$, and otherwise define $X_s$ to be an independent 0-1 random variable with expectation $\mu_i$. Then the random variables $Y_s=X_s-\mu_s$, $s\in \N$ form a martingale. Applying Azuma-Hoeffding inequality to $Y_1 ,\, \ldots\,, Y_N$, for any given $N$, we obtain that the event
    $|\sum_{s=1}^N Y_s| \leq \sqrt{8N\log(T)}$
holds with probability at least $1-T^{-3}$. Taking the Union Bounds over all $N\leq T$, and noting that
    $c_i(t) = \sum_{s=1}^{n_i(t)} X_s $,
it follows that the event \eqref{eq:regret-highProb-event} holds with probability at least $1-T^{-2}$.}
In what follows, let us assume that this event holds for all agents $i$. (The regret accumulated if this event fails is negligible.)

Then it easily follows from the specs of \newUCB{} that for each agent $i$,
\begin{align*}
\left\{\begin{array}{l}
	L_i(t,b) \leq b_i\,\mu_i \leq U_i(t,b) \\
	U_i(t,b) - L_i(t,b) \leq 2\,r_i(t)    
\end{array}\right.
\end{align*}

Let
    $i^* \in \argmax_i b_i \mu_i$
be a best agent. Note that
    $U_{i^*}(t,b) \geq b_{i^*} \mu_{i^*} \geq b_i \mu_i \geq L_i(b,t)$
for all agents $i$ and rounds $t$. It follows that $i^*$ is never de-activated by the algorithm.

Consider some agent $i$ with
	$\Delta_i \triangleq b_{i^*}\mu_{i^*}  - b_i\,\mu_i>0$.
Then $r_i(t)<\Delta_i$ after $O(\Delta_i^{-2} \log T)$ rounds in which this agent is allocated as the designated agent. After such round $t$,
\[ U_i(b,t)  \leq b_i\mu_i + r_i(t)
        < b_i\mu_i + \Delta_i
         = b_{i^*}\mu_{i^*}
         \leq L_{i*}(b,t), \]
and therefore agent $i$ is deactivated. It follows that agent $i$ is allocated as the designated agent at most $O(\Delta_i^{-2} \log T)$ times. Therefore it is de-activated after at most $O(k\,\Delta_i^{-2} \log T)$ rounds. This, in turn, implies the claimed regret bound.

\subsection{The power of randomization}
\label{sec:separation}

A by-product of Theorem~\ref{thm:MAB-exPost} is a separation between the power of  deterministic and randomized mechanisms, in terms of regret for MAB mechanisms that are ex-post truthful and ex-post normalized. The lower bound for deterministic mechanisms is from~\cite{MechMAB-ec09}.

One challenge here is to ensure that the upper and lower bounds talk about exactly the same problem; as stated, Theorem~\ref{thm:MAB-exPost} and the main lower bound result from~\cite{MechMAB-ec09} do not. To bypass this problem, we focus on the case of two agents, and use a more general version of the lower bound: Theorem C.1 in the full version of~\cite{MechMAB-ec09}. Further, to match~\cite{MechMAB-ec09} we extend the mechanism from Theorem~\ref{thm:MAB-exPost} to a setting in which $\bmax$ is not known a priori.

We formulate the separation theorem as follows. Denote
 	$R(T,\bmax) \triangleq \max R(T;b;\mu)$,
where the maximum is taken over all CTR vectors $\mu$ and all bid vectors $b$ such that $b_i\leq \bmax$ for all $i$.

\begin{theorem}\label{thm:MAB-separation}
Consider the stochastic \MABprob{} with two agents. Assume  $\bmax$ is not known a priori to the mechanism. Suppose $\mathcal{M}$ is an MAB mechanism that is (i) ex-post truthful and ex-post normalized,
and (ii) has regret
  $R(T,\bmax) = \tilde{O}(\bmax\, T^\gamma)$
for some $\gamma$ and any $\bmax$. Then:
\begin{itemize}
\item[(a)] \cite{MechMAB-ec09}  If $\mathcal{M}$ is deterministic then $\gamma\geq\tfrac23$.
\item[(b)] There exists such \emph{randomized} $\mathcal{M}$ with $\gamma=\tfrac12$.
\end{itemize}
\end{theorem}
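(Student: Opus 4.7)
Part (a) is cited from~\cite{MechMAB-ec09}, so I focus on part (b). The plan is to extend the mechanism of Theorem~\ref{thm:MAB-exPost} to handle unknown $\bmax$, exploiting the fact that in an MAB mechanism all bids are elicited up front, before the allocation rule begins: the mechanism may therefore compute $\bmax := \max(b_1,b_2)$ directly from the submitted bids. Let \newUCB$'$ denote the allocation rule obtained by using this bid-dependent $\bmax$ on line~2 of Algorithm~\ref{alg:ex-post-UCB}, and let $\mathcal{M}'$ be the mechanism obtained by applying Theorem~\ref{thm:MAB-truthful} to \newUCB$'$ with resampling probability $\mu=1/T$.

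The crux is the \emph{scale-invariance} of \newUCB$'$: after normalization on line~2, all subsequent computations depend only on $b/\bmax \in [0,1]^2$ (whose maximum coordinate is always exactly~$1$), so scaling both bids by a common positive constant leaves the output distribution unchanged, and the per-round allocation $\mA_1^t(b;\rho)$ depends on $b$ only through the ratio $b_1/b_2$. I would use this to prove ex-post monotonicity of \newUCB$'$ as follows. Fix a realization $\rho$, take $i=1$ by symmetry, and consider raising $b_1$ to $b_1^+$; interpolate through $b_1' := \min(b_1^+,b_2)$. On the step $(b_1,b_2) \to (b_1',b_2)$, the value $\bmax=b_2$ is unchanged and only agent~$1$'s normalized bid increases, so the ex-post monotonicity of \newUCB{} established in Section~\ref{sec:MAB-stochastic} gives $\mA_1^t(b_1',b_2;\rho) \ge \mA_1^t(b_1,b_2;\rho)$. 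If $b_1^+ \le b_2$ we are done; otherwise the second step goes $(b_2,b_2) \to (b_1^+,b_2)$, and by scale invariance I may rescale to normalized bids $(1,1)$ and $(1,b_2/b_1^+)$ respectively, so that only agent~$2$'s normalized bid has decreased. In the two-agent \newUCB{} exactly one or two agents are active every round (a single active agent cannot be deactivated, since it would be compared only with its own lower bound), hence $\mA_1^t+\mA_2^t=1$; applying the symmetric version of ex-post monotonicity to agent~$2$ shows that its allocation probability drops, so agent~$1$'s rises, closing the argument.

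With ex-post monotonicity in hand, Theorem~\ref{thm:MAB-truthful}(a) makes $\mathcal{M}'$ ex-post truthful, ex-post \noPositiveTransfers, and universally ex-post individually rational. For the regret bound: on normalized bids in $[0,1]$, \newUCB{} achieves $R(T) = O(\sqrt{nT\log T})$ by Theorem~\ref{thm:MAB-exPost}, so on inputs with bids in $[0,\bmax]$ the rule \newUCB$'$ has $R(T,\bmax) = O(\bmax\sqrt{nT\log T})$ (cf.\ footnote~\ref{ft:Bmax}). The remark following Theorem~\ref{thm:MAB-truthful} adds at most $\mu\,\bmax = \bmax/T$ in regret, which is absorbed into $\tilde O(\bmax\sqrt{T})$, yielding $\gamma=\tfrac12$.

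The main obstacle is precisely the monotonicity step in the regime where raising $b_1$ crosses the threshold $b_1 = b_2$: the normalization scale $\bmax$ itself changes with the bid, so one cannot just invoke monotonicity of \newUCB{} on a fixed normalized problem. Scale invariance together with the two-agent identity $\mA_1^t+\mA_2^t=1$ is what makes the reduction clean; extending the argument to $n\ge 3$ agents would require a genuinely different idea, since complementarity no longer pins the allocation probabilities down.
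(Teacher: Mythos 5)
Your proposal is correct and follows essentially the same route as the paper: the paper also normalizes the bid vector by $\max_i b_i$, feeds it to the ex-post monotone rule of Theorem~\ref{thm:MAB-exPost} with $\bmax=1$, proves ex-post monotonicity of the composite rule by the same two-case argument (direct monotonicity when the raised bid stays below the other, and monotonicity applied to the \emph{other} agent plus two-agent complementarity when the raised bid is the maximum), and then invokes Theorem~\ref{thm:MAB-truthful} with $\mu=1/T$. Your explicit interpolation through $\min(b_1^+,b_2)$ and the regret bookkeeping are just slightly more detailed versions of what the paper leaves implicit.
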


\begin{proof}[of part (b)]
Let $\mA'$ be the ex-post monotone MAB allocation rule in Theorem~\ref{thm:MAB-exPost}, for $\bmax=1$. Define an MAB allocation rule $\mA$ as a rule that inputs the bid vector $b$ and passes the modified bid vector
	$b' = b/ (\max_i b_i)$
to $\mA'$. We claim that $\mA$ is ex-post monotone, too. Indeed, w.l.o.g. assume $b_1>b_2$. If $b_2$ increases (to a value $\leq b_1$), then $b'_2$ increases while $b'_1$ stays the same. Thus, the total \payoff{} of agent $2$ increases. If $b_1$ increases then $b'_2$ decreases while $b'_1$ stays the same, so the total \payoff{} of agent $2$ does not increase, which implies that the total \payoff{} of agent $1$ does not decrease. Claim proved. Now part (b) follows from Theorem~\ref{thm:MAB-truthful}  (with $\mu = \frac{1}{T}$).
\end{proof}


\OMIT { 
Theorem~\ref{thm:well-formed} follows from the following two lemmas, using \RefMainThm{} and Theorem~\ref{thm:improved-approximation} (the latter is needed to bound the loss in welfare). Note that Lemma~\ref{lm:wellFormed-equiv} is not needed if the well-formed allocation rule in question is deterministic.

\begin{lemma}\label{lm:wellFormed-equiv}
Let $\mA$ be a well-formed (randomized) MAB algorithm. Then $\mA$ is distributionally equivalent to a distribution over deterministic well-formed MAB algorithms.
\end{lemma}

\begin{lemma}\label{lm:wellFormed-det}
Let $\mA$ be a MAB allocation rule induced by a deterministic well-formed MAB algorithm. Then $\mA$ is monotone.
\end{lemma}

In the interest of space, we only sketch the proofs. The full proofs are in Appendix~\ref{app:wellFormed-equiv} and Appendix~\ref{app:wellFormed-det}, respectively.

\begin{proof}[Proof Sketch of Lemma~\ref{lm:wellFormed-equiv}]
Fix round $t$. In any given run of $\mA$, the algorithm chooses an agent $i$ from the distribution given by $p_1(\pi,\nu) , \,\ldots\, , p_n(\pi,\nu)$, where
$(\pi,\nu)$ is the round-$t$ statistics. Thus, in any given round the algorithm makes such choice for a single pair $(\pi,\nu)$. We need to choose a single random seed $r^t$ that would deterministically define the selection for any given pair $(\pi,\nu)$. We define
	$r^t = (r_1 ,\,\ldots\,,r_k)$,
where each $r_i$ is a number selected independently, uniformly at random from the interval $(0,1)$. Given a specific pair $(\pi,\nu)$, this random seed is interpreted as follows: select agent
\begin{align*}
	i^*(\pi,\nu) = \arg\min_i \ln(1-r_i)/p_i(\pi,\nu).
\end{align*}
This rule relies on a well-known fact about exponential distributions which states that
	$\Pr[i^*(\pi,\nu) = i] = p_i(\pi,\nu)$
for each agent $i$. Each vector of random seeds
	$(r^1, \,\ldots\,, r^T) $
defines a deterministic MAB algorithm $\mA^*$. In Appendix~\ref{app:wellFormed-equiv} we use the IIA property of $\mA$ to show that any such $\mA^*$ satisfies the monotonicity and IIA properties in Definition~\ref{def:well-formed}.
\end{proof}
} 

\section{Extension to multi-parameter domains}
\label{sec:multiParam}

\renewcommand{\type}{\mathbf{x}}
\newcommand{\bid}{\mathbf{b}}

Our general transformation from Section~\ref{sec:generic} can be extended to multi-parameter mechanisms.%
It is known that a multi-parameter allocation rule is truthfully implementable if and only if it satisfies a property called ``cycle-monotonicity''. Similar to the single-parameter case, we present a general procedure to take any cycle-monotone allocation rule $\mA$ and transform it into a randomized mechanism that is truthful-in-expectation, implements the same outcome as $\mA$ with probability arbitrarily close to $1$, and requires evaluating that allocation rule only once. The technical contribution here is that we find a reduction from the multi-parameter setting to the single-parameter case.

This section is self-contained. For more background on multi-parameter mechanisms for a CS-oriented audience, please refer to \cite{ArcherKleinberg-sigecom08,ArcherKleinberg-ec08}. An Economics-oriented background for this area can be found in~\cite{Ashlagi-econometrica10}.

\subsection{Preliminaries: multi-parameter domains}
\label{sec:multiParam-prelims}

\xhdr{Generalized types.}
In the full generality, multi-parameter mechanisms are defined as follows. There are $n$ agents and a set $\outcomes$ of outcomes. Each agent $i$ is characterized by his \emph{type}
    $\type_i: \outcomes \to \Re$,
where $\type_i(o)$ is interpreted as the agent's valuation for the outcome $o\in\outcomes$. For each agent $i$ there is a set of feasible types, denoted $\types_i$. Denote
    $\types = \types_1 \times \ldots \times \types_n$
and call it the \emph{type space};  call $\types_i$ the type space of agent $i$. The mechanism knows $(n,\outcomes,\types)$, but not the actual types $\type_i$;  each type $\type_i$ is known only to the corresponding agent $i$. Formally, a problem instance, also called a \emph{multi-parameter domain}, is a tuple $(n,\outcomes,\types)$.

Using this general notion of types, we define truthful mechanisms in essentially the same way as in Section~\ref{sec:prelims}, with minimal syntactic changes. A (direct revelation) mechanism $\mathcal{M}$ consists of the pair $(\mA, \mP)$,
where $\mA:\types\rightarrow \outcomes$ is the {\em allocation rule} and $\mP:\types\rightarrow \Re^n$ is the {\em payment rule}. Both $\mA$ and $\mP$ can be randomized. Each agent $i$ reports a type $\bid_i\in \types_i$ to the mechanism, which is called the {\em bid} of this agent. We denote the vector of bids by $\bid = (\bid_1 \LDOTS \bid_n) \in \types$. The mechanism receives the bid vector $\bid\in\types$, selects an outcome $\mA(\bid)$, and charges each agent $i$ a payment of $\mP_i(\bid)$.
The utilities are quasi-linear and agents are risk-neutral: if agent $i$ has type $\type_i \in \types_i$ and the bid vector is $\bid\in \types$, then this agent's  utility is
\begin{align}\label{eq:multiParam-util-defn}
u_i(\type_i;\bid) = \E_{\mathcal{M}} \left[ \type_i(\mA(\bid)) - \mP_i(\bid) \right].
\end{align}

For each type $\type_i\in \types_i$ of agent $i$ we use a standard notation $(\bid_{-i},\type_i)$ to denote the bid vector $\hat{\bid}$ such that $\hat{\bid}_i=\type_i$ and $\hat{\bid}_j = \bid_j$ for every agent $j\neq i$.

\xhdr{Special case: dot-product valuations.}
For intuition, consider \emph{dot-product valuations}, an important special case where the type $\type\in \types_i$ of each agent $i$ can be decomposed as a dot product
    $\type(o) = \beta_{\type}\cdot a_i(o)$,
for each outcome $o\in\outcomes$, where
    $\beta_{\type}, a_i(o)\in \Re^{d}$
are some finite-dimensional vectors.
Here the term $a_i(o)$ is the same for all types $\type\in \types_i$ (and known to the mechanism), whereas $\beta_{\type}$ is the same for all outcomes $o\in \outcomes$ and is known only to agent $i$. The term $a_i(o)$ is usually called an ``allocation'' of agent $i$ for outcome $o$, and $\beta_{\type}$ is called the ``private value''. The single-parameter domains defined in Section~\ref{sec:prelims} correspond to the case $d = 1$.

Note that the type $\type$ of each agent $i$ is determined by the corresponding private value $\beta_{\type}$, and his type space $\types_i$ is determined by
    $D_i = \{ \beta_{\type}:\; \type\in\types_i\} \subset \Re^d$.
Because of this, in the literature on dot-product valuations the term ``type'' often refers to $\beta_{\type}$. To avoid ambiguity, in this section we will refer to $\beta_{\type}$ as ``private value'' rather than ``type'', and call $D_1\times \ldots \times D_n$ the \emph{private value space}.

\xhdr{Game-theoretic properties.}
Truthfulness and individual rationality are defined exactly as in Section~\ref{sec:prelims} if expressed in terms of the agents' utility:
\begin{itemize}
\item A mechanism is {\em truthful} if for every agent $i$ truthful bidding is a {\em dominant strategy}:
\begin{equation}\label{eq:multiParam-truthful}
u_i(\type_i; (\bid_{-i}, \type_i))
    \geq u_i(\type_i; \bid)
    \quad \forall \type_i\in \types_i, \;\bid \in \types.
\end{equation}
An allocation rule is called \emph{truthfully implementable} if it is the allocation rule in some truthful mechanism.

\item A mechanism is {\em individually rational (IR)} if each agent $i$  never receives negative utility by participating in the mechanism and bidding truthfully:
    \begin{equation}\label{eq:multiParam-IR}
        u_i(\type_i; (\bid_{-i}, \type_i)) \geq 0
            \quad \forall \type_i\in \types_i, \;\bid_{-i} \in \types_{-i}.
    \end{equation}
\end{itemize}

The right-hand side in~\eqref{eq:multiParam-IR} represents the maximal guaranteed utility of an ``outside option'' (i.e., from not participating in the mechanism). For example, our definition of IR is meaningful whenever this utility is $0$, which is a typical assumption for most multi-parameter domains studied in the literature.

\xhdr{Our assumptions.}
We make two assumptions on the type space $\types$:
\begin{itemize}
\item non-negative types: $\type_i(o)\geq 0$ for each agent $i$, each type $\type_i\in \types_i$, and each outcome $o\in\outcomes$.

\item \emph{rescalable types}: $\lambda \type_i\in \types_i$
for each agent $i$, each type $\type_i\in \types_i$, and any parameter $\lambda\in[0,1]$.

\OMIT{\item The type space is \emph{convex}: for each agent $i$, any types $\type, \type'\in \types_i$, and any parameter $\lambda\in[0,1]$ it holds that
    $\lambda \type + (1-\lambda) \type' \in \types_i $.}
\end{itemize}

\noindent For dot-product valuations, types are rescalable if and only if it holds that
    $\beta_{\type}\in D_i \Rightarrow \lambda \beta_{\type}\in D_i $
for each $\lambda \in [0,1]$. Thus, assuming rescalable types is equivalent to assuming that the set $D_i$ is star-convex at $0$. To ensure non-negative types, it suffices to assume that
    $D_i\subset \Re_+^{d}$ for each agent $i$, and all allocations are non-negative: $a_i(o) \in \Re_+^{d}$ for all $o \in \outcomes$.

In particular, for each agent $i$ there exists a \emph{zero type}: a type $\type_i\in \types_i$ such that $\type_i(\cdot)\equiv 0$. Let us say that a mechanism is \emph{normalized} if for each agent $i$, the expected payment of this agent is $0$ whenever she submits the zero type.

\xhdr{Truthfulness characterization.}
We will use the following characterization of truthful mechanisms. A (randomized) allocation rule $\mA$  is \emph{cycle-monotone} if the following property holds: for each bid vector $\bid\in\types$, each agent $i$, each $k\geq 2$, and each $k$-tuple
    $\type_{i,0},\; \type_{i,1} \LDOTS \type_{i,k} \in \types_i$
of this agent's types, we have
\begin{align}\label{eq:CMON}
\E_{\mA}\left[ \sum_{j=0}^k
    \type_{i,j}\left( o_{i,j} \right)
    -     \type_{i,\,(j-1) \bmod{k}}\left( o_{i,j} \right)
    \right] \geq 0,\quad
    \text{ where }
    o_{i,j} = \mA\left(\bid_{-i},\, \type_{i,j} \right)\in \outcomes.
\end{align}

\begin{theorem}[\citeN{Rochet-1987}]
Consider an arbitrary multi-parameter domain $(n,\outcomes,\types)$. A (randomized) allocation rule $\mA$ is truthfully implementable if and only if it is cycle-monotone. Assuming rescalable types, for any cycle-monotone allocation rule $\mA$, a mechanism $(\mA,\mP)$ is truthful and normalized if and only if
\begin{align}\label{eq:multiParam-myerson}
\E_{\mA}\left[ \mP_i(\bid) \right] =
   \E_{\mA}\left[
    \bid_i(\mA(\bid)) -
     \int_{t=0}^1  \bid_i(\mA(\bid_{-i},\, t\, \bid_i )) \, dt
    \right].
\end{align}
\end{theorem}

Note that this theorem generalizes Theorem~\ref{thm:myerson} for single-parameter mechanisms, as applied to single-parameter domains with private value space $[0,1]^n$. In particular,~\refeq{eq:multiParam-myerson} generalizes the Myerson payment rule for single-parameter mechanisms.

\subsection{The multi-parameter transformation}

Consider allocation rule $\mA$, bid vector $\bid\in \types$, and the rescaling vector $\lambda \in [0,1]^n$. Denote
    $$\lambda \otimes b = (\lambda_1 \bid_1 \LDOTS \lambda_n \bid_n ) \in \types. $$
In other words, $\lambda\otimes b$ is the ``rescaled'' bid vector where the bid of each agent $i$ is $\lambda_i \bid_i$; this bid vector is well-defined because we assumed the rescalable types property. Note that for each $\bid$ the subset
    $$\types_{\bid} = \{ \lambda\otimes b:\; \lambda \in [0,1]^n\} \subset \types$$
forms a single-parameter type space where each agent $i$ has private value $\lambda_i\in [0,1]$ and allocation $b_i(o)$ for every outcome $o$. By abuse of notation, let us treat the allocation / payment rules for $\types_{\bid}$ as functions from the private value space $[0,1]^n$ rather than the type space $\types_{\bid}$.

Consider an allocation rule
   $\mA_{\bid}(\lambda) = \mA(\lambda\otimes \bid)$
for the single-parameter type space $\types_{\bid}$.
If the original allocation rule $\mA$ is truthfully implementable for type space $\types$ using payment rule $\mP$, then $\mA_{\bid}$ is truthfully implementable for type space $\types_{\bid}$ using payment rule
   $\mP_{\bid}(\lambda) = \mP(\lambda\otimes \bid)$,
because restricting the allocation and payment rules to $\types_{\bid}$ only
limits the set of possible misreports of an agent. Essentially, the idea will be to apply our single-parameter transformation to $\mA_{\bid}$.

Let $\mathbf{f}_{\mu} = (f_1 \LDOTS f_n)$ be the $n$-tuple of canonical self-resampling procedures with resampling probability $\mu$, for some fixed $\mu\in (0,1)$. (See Algorithm~\ref{alg:canon-proc} on page~\pageref{alg:canon-proc}.) For each bid vector $\bid\in\types$, let
\begin{align}\label{eq:multiParam-reduction-1}
 (\tA_{\bid},\tP_{\bid}) = \GenericMech(\mA_{\bid},\mu,\mathbf{f}_{\mu})
\end{align}
be the single-parameter mechanism for type space $\types_{\bid}$ obtained by applying our single-parameter transformation from Section~\ref{sec:generic} to allocation $\mA_{\bid}$.%
\footnote{Note that the transformed mechanism depends on $\mu$. We do not make this dependence explicit, to simplify the notation.}

The transformed multi-parameter mechanism is defined as
\begin{align}\label{eq:multiParam-reduction}
    \left( \tA(\bid),\; \tP(\bid) \right) = \left( \tA_{\bid}(\vec{1}\,),\; \tP_{\bid}(\vec{1}\,) \right)
        \text{ for every } b\in \types.
\end{align}
This completes the description of our multi-parameter transformation. The useful properties of this transformation are captured in the theorem below.

\begin{theorem}\label{thm:multiParam}
Consider an arbitrary multi-parameter domain $(n,\outcomes,\types)$ with rescalable, non-negative types. Let $\mA$ be a cycle-monotone allocation rule. Let $\mathcal{M}_\mu= (\tA,\tP)$ be the transformed mechanism defined by Equations(\ref{eq:multiParam-reduction-1}-\ref{eq:multiParam-reduction}), for some parameter $\mu\in(0,1)$. Then $\mathcal{M}_\mu$ has the following properties:
\begin{itemize}
\item[(a)] $\mathcal{M}_\mu$ is truthful and normalized.

\item[(b)] $\mathcal{M}_\mu$ is universally ex-post individually rational and ex-post \noPositiveTransfers. Moreover, given a bid vector $\bid$, it never pays any agent $i$ more than
	$\bid_i(o) (\frac{1}{\mu}-1)$, where $o=\mA(\bid)\in\outcomes$.

\item[(c)] For any bid vector $\bid\in \types$ (and any fixed random seed of nature) allocations $\tA(\bid)$ and $\mA(\bid)$ are identical with probability at least $1-n\mu$.

\item[(d)] If $\mA$ is $\alpha$-approximate (for social welfare) then $\tA$ is $\alpha/\left( 1-\tfrac{2}{1-\mu} \right)$-approximate.
\end{itemize}
\end{theorem}

\begin{proof}
Parts (b) and (c) follow immediately from Theorem~\ref{thm:maim-g}, and part (d) follows immediately from Theorem~\ref{thm:improved-approximation}. Thus, it remains to prove part (a).

Note that  the single-parameter allocation rule $\tA_{\bid}$ has the following property: for each agent $i$ the single-parameter bid $\lambda_i$ is rescaled by the (randomly chosen) factor $\chi_i\in[0,1]$ which does not depend on the bid, and then $\mA_{\bid}$ is called. Therefore, letting $\chi = (\chi_1\LDOTS \chi_n)$, it holds that
\begin{align}\label{eq:lm:multiParam-reduction-assumption}
    \tA_{\bid}(\lambda) = \mA( \chi\otimes (\lambda \otimes \bid) )\quad
        \text{ for all $\bid\in \types$ and $\lambda\in [0,1]^n$}.
\end{align}

We claim that $\tA$ is cycle-monotone. Indeed, fix bid vector $\bid\in\types$, agent $i$, some $k\geq 2$, and a $k$-tuple
 $\type_{i,0},\; \type_{i,1} \LDOTS \type_{i,k} \in \types_i$
of this agent's types. Let us consider a fixed realization of the random vector $\chi\in [0,1]^n$. For each type $\type_{i,j}$, note that (by Equations~\eqref{eq:lm:multiParam-reduction-assumption} and~\eqref{eq:multiParam-reduction}) we have
$$ \tA(\type_{i,j}, \bid_{-i})
    = \tA_{(\type_{i,j}, \bid_{-i})}(\,\vec{1}\,)
    = \mA\left( \chi \otimes (\bid_{-i},\,\type_{i,j}) \right) \in \outcomes.
$$
Denote this outcome by $o_{i,j}(\chi)$. Apply the cycle-monotonicity of $\mA$ for bid vector $\chi \otimes (\type_{i,j}, \bid_{-i})$:
\begin{align}\label{eq:proof-CMON}
\E_{\mA}\left[ \sum_{j=0}^k \type_{i,j}(o_{i,j}(\chi)) -
    \type_{i,\;(j-1) \bmod{k}}(o_{i,j}(\chi)) \right]\geq 0.
\end{align}
Recalling that
    $o_{i,j}(\chi) = \tA(\type_{i,j}, \bid_{-i})$,
we observe that for this fixed realization of $\chi$,~\refeq{eq:proof-CMON} is exactly the inequality in the definition of cycle-monotonicity for $\tA$. Therefore taking expectation over $\chi$, we obtain the desired inequality~\eqref{eq:CMON} for $\tA$. Claim proved.%
\footnote{Note that the proof of cycle-monotonicity of $\tA$ did not use any other property of the canonical self-resampling procedures $\mathbf{f}_{\mu}$ other than~\refeq{eq:lm:multiParam-reduction-assumption}. The truthfulness properties of $\mathbf{f}_{\mu}$ are used in the forthcoming argument about payments.}

\newcommand{\hA}{\widehat{\mA}}
\newcommand{\hP}{\widehat{\mP}}

It remains to prove that in the transformed mechanism $(\tA,\tP)$, the payment rule satisfies~\refeq{eq:multiParam-myerson}. Fix bid vector $\bid$ and consider the transformed single-parameter mechanism $(\tA_{\bid},\tP_{\bid})$ for the single-parameter type space $\types_{\bid}$. In the terminology of single-parameter domains, each agent $i$ receives an allocation
    $\tA_{\bid,\,i}(\lambda) = b_i(\tA_{\bid}(\lambda))$
whenever the bid vector is
    $\lambda \in [0,1]^n$.
Since this is a truthful and normalized single-parameter mechanism, it follows that
\begin{align*}
\E\left[ \tP_{\bid}(\lambda) \right]  =
\E\left[ \lambda_i\, \tA_{\bid,\,i}(\lambda) -
 \int_0^{\lambda_i}\,  \tA_{\bid,\,i}(\lambda_{-i}, u) \,du\right],
 \quad \forall \lambda\in [0,1]^n.
\end{align*}
Plugging in $\lambda = \vec{1}$ and~\refeq{eq:multiParam-reduction},
we obtain the desired~\refeq{eq:multiParam-myerson}.
\end{proof}


\section{Open Questions}
\label{sec:open-questions}

This paper gives rise to a number of open questions.  As discussed in Section~\ref{sec:followup}, some of these questions have been partially addressed in the follow-up work. Here we present the current status.

\xhdr{Variance vs. expectation tradeoff.}
Randomized mechanisms constructed via our general transformation exhibit an explicit tradeoff between the variance in payments and the loss in expected welfare compared to the optimal allocation rule. Since the variance in payments can be very high,
it is desirable to optimize this tradeoff (to complement the expectation-only guarantees).

The worst-case optimality result in \citeN{SingleCall-ec12}, discussed in Section~\ref{sec:followup}, does not resolve this question, since it does not rule out a reduction which achieves a better tradeoff for some (but not all) monotone allocation rules. Further, the optimal tradeoff for a given domain could be achieved by a mechanism that cannot be presented as a reduction from some welfare-optimal allocation rule.

Our informal conjecture is that the tradeoff in this paper is optimal for any given single-parameter domain with ``informational obstacle", i.e. whenever payment computation for welfare-optimal allocation rule is impossible
due to the insufficient observable information.

A specific formal conjecture is that our tradeoff is optimal for MAB mechanisms. To take an extreme version, what welfare loss can be achieved if no rebates (i.e., no positive transfers) are allowed?

\xhdr{The power of randomization.}
We have a separation result for randomized vs. deterministic ex-post truthful MAB mechanisms. Can one obtain similar separation results for other single-parameter domains? The positive side for any such hypothetical separation result is provided by our general reduction, so it remains to produce the corresponding negative result for deterministic mechanisms. However, such negative results are not likely to be easy, considering the difficulties faced by~\cite{MechMAB-ec09,DevanurK09} for MAB mechanisms. One specific target would be the router scheduling problem proposed in~\cite{Parkes-netecon12}.

\xhdr{MAB allocation rules.}
This paper opens up the problem of designing monotone MAB allocation rules, which is a new angle in the rich literature on MAB (also see~\cite{MonotoneMAB-colt11}). While we have focused on stochastic MAB, many other MAB settings have been studied in the literature, making various assumptions on payoff evolution over time (e.g.,~\cite{bandits-exp3,DynamicMAB-colt08,Hazan-soda09}), dependencies between arms (e.g.,~\cite{FlaxmanKM-soda05,yahoo-bandits-icml07,LipschitzMAB-stoc08,GPbandits-icml10}), and side information available to the algorithm (e.g.,~\cite{LipschitzMAB-stoc08,Langford-nips07,contextualMAB-colt11}).
\OMIT{a frequent theme is to take advantage of some ``benign" feature of the problem instance while maintaining the optimal ``worst-case" performance (e.g.,~\cite{LipschitzMAB-stoc08,Hazan-soda09,contextualMAB-colt11}).}
For most such settings one could meaningfully define the corresponding mechanism design problem; we have reduced this problem to that of designing monotone MAB allocation rules. In particular, for any given MAB setting one could ask whether monotone MAB allocation rules can achieve optimal regret.

One appealing target here is the adversarial MAB setting (with oblivious adversary). The ex-post truthful mechanism in~\cite{MechMAB-ec09} achieves regret $\tilde{O}(k^{1/3}\,T^{2/3})$ for this setting, whereas the best known MAB algorithms achieve regret $O(\sqrt{kT})$~\cite{bandits-exp3,Bubeck-colt09}; it is not clear what is the tight regret bound.

\xhdr{More applications.}
In addition to the applications presented in this paper and the follow-up work, what other domains can our general reduction (and the multi-parameter extension thereof) be fruitfully applied to? In particular, one could consider two generalizations of MAB mechanisms: to multiple ads per agent and to multiple ad slots with slot-dependent values-per-click.

\appendix

\section*{APPENDIX: $\SecondAlg$ is equivalent to $\FirstAlg$ (proof of Proposition~\ref{prop:srsp})}
\setcounter{section}{1}
\label{app:canon-equiv}

\OMIT{The following lemma provides a useful equivalent description of the
sampling distribution of $\SecondAlg$.}

\newcommand{\ThirdAlg}[1]{{\mbox{\sc bdr}_{#1}}}

Let us compare the sampling procedures defined by
$\FirstAlg$ and $\SecondAlg$.
To simplify the notation, we will omit the subscript $i$ from the description of the procedures. That is, a self-resampling procedure inputs a scalar bid $b$ and a random seed $w$, and outputs two numbers $(x,y)$.
To prove that $\SecondAlg$ is equivalent to $\FirstAlg$, we
analyze a family of sampling rules that uses bounded-depth
recursion to ``interpolate'' between $\SecondAlg$ and $\FirstAlg$.
Specifically,
define $\ThirdAlg{k}$ to be the following family
of sampling algorithms
parameterized by $k \in \mathbb{N} \cup \{\infty\}$,
where $k-1$ is interpreted as $\infty$ when $k=\infty$.
\begin{center}
\begin{algorithm}[h]
\caption{The sampling algorithm $\ThirdAlg{k}$: Bounded Depth Recursion.}
\label{alg:bd-recur}
\begin{algorithmic}[1]
\STATE {\bf Input:} bid $b\in [0,\infty]$, parameter $\mu\in(0,1)$.
\STATE {\bf Output:} $(x,y)$ such that $0\leq x\leq y \leq b$.
\vspace{1mm}
\STATE \WProb{$1-\mu$}
\STATE \MyTab $x \leftarrow b$, $y \leftarrow b$.
\STATE {\bf else}
\STATE \MyTab {\bf if} $k=0$
\STATE \MyTab \MyTab Pick $\gamma_1, \gamma_2 \in [0,1]$ indep., uniformly at random.
\STATE \MyTab \MyTab $x \leftarrow b \cdot \gamma_1^{1/(1-\mu)}$,~~
		 $y \leftarrow b \cdot \max\{ \gamma_1^{1/(1-\mu)}, \gamma_2^{1/\mu} \}$.
\STATE \MyTab {\bf else} \MyTab {\em $\sslash \; k>0$}
\STATE \MyTab \MyTab Pick $b' \in [0,b]$ uniformly at random.
\STATE \MyTab \MyTab $(x',y') = \ThirdAlg{k-1}(b',\mu)$.
\STATE \MyTab \MyTab $x \leftarrow x'$, $y \leftarrow b'$.
\end{algorithmic}
\end{algorithm}
\end{center}
The reader may easily verify that $\ThirdAlg{k}$ is equal to
$\SecondAlg$ when $k=0$ and that it is equal to
$\FirstAlg$ when $k=\infty$. Furthermore, for any
$k < k' $ (where $k'\leq \infty$) there is an obvious coupling of $\ThirdAlg{k}$
with $\ThirdAlg{k'}$ such that the
two algorithms have probability at most
$\mu^{k+1}$ of  outputting different results:
simply let the two executions share the
same randomness until the recursion depth equals $k$.
Thus, as $k \to \infty$, the output distribution of $\ThirdAlg{k}$ converges,
in total variation distance, to that of $\ThirdAlg{\infty}$.
We will prove that for every finite $k$ the algorithms
$\ThirdAlg{k}$ and $\ThirdAlg{k+1}$ have identical
output distributions, from which it follows that their
output distribution is identical to that of $\ThirdAlg{0}$
and, therefore, that $\ThirdAlg{\infty}$ also has the same
output distribution as $\ThirdAlg{0}$, confirming
Proposition~\ref{prop:srsp}.

Couple $\ThirdAlg{k}$ and $\ThirdAlg{k+1}$ so
that they use shared randomness until the two algorithms
reach differing points in their control flow. This occurs
when the first algorithm is executing a call to
$\ThirdAlg{0}$ and the second algorithm is executing
a call to $\ThirdAlg{1}$ on the same input $(\beta,\mu)$.
(We are denoting the input in this step of the recursive
algorithms by $(\beta,\mu)$ rather
than $(b,\mu)$, to distinguish $\beta$ from the value
of $b$ on which the two algorithms $\ThirdAlg{k},
\ThirdAlg{k+1}$ were originally called.) At this point,
with probability $1-\mu$
both algorithms output $(x,y)=(\beta,\beta)$.
Conditional on this event \emph{not} taking place,
$\ThirdAlg{0}$ outputs $(x,y) = (\gamma_1^{1/p} \beta,
\max \{\gamma_1^{1/p},\gamma_2^{1/q}\} \beta)$
where $p = 1-\mu, \, q = \mu$.
Instead $\ThirdAlg{1}$ computes $\beta' = \gamma_3 \beta$
where $\gamma_3 \in [0,1]$ is uniformly random,
and it outputs $(x,y) = (\beta',\beta')$ with probability $p$
and otherwise $(x,y) = (\gamma_1^{1/p} \beta', \beta')$.
Lemma~\ref{lem:oneshot-equiv} tells us that these two
output distributions are the same.

\begin{lemma} \label{lem:oneshot-equiv}
Let $\gamma_1,\gamma_2,\gamma_3$ be mutually independent
random variables, each uniformly distributed in $[0,1]$.
Let $p,q > 0$ be numbers such that $p + q = 1$. Define
random variables $x,y,z$ by:
\begin{align*}
x &= \gamma_1^{1/p} \\
y &= \max\{\gamma_1^{1/p}, \gamma_2^{1/q}\} \\
z &= \begin{cases}
\gamma_3 & \mbox{if $\gamma_2 < p$} \\
\gamma_1^{1/p} \gamma_3 & \mbox{if $\gamma_2 \geq p$}
\end{cases}
\end{align*}
Then the pairs $(x,y)$ and $(z,\gamma_3)$ are identically
distributed.
\end{lemma}
\begin{proof}
We will show, equivalently, that
the pairs $(y,x/y)$ and $(\gamma_3,z/\gamma_3)$
are identically distributed. The distribution of
$(\gamma_3,z/\gamma_3)$ is completely characterized
by the following facts which are immediate from the
definition of $z$.
\begin{enumerate}
\item $\gamma_3$ and $z/\gamma_3$ are independent;
\item $\gamma_3$ is uniformly distributed in $[0,1]$;
\item $z/\gamma_3$ is equal to $1$ with probability $p$,
and conditional on $z/\gamma_3 \neq 1$, the distribution
of $(z/\gamma_3)^p$ is uniform on $[0,1)$.
\end{enumerate}
To finish the proof of the lemma, we shall prove the
corresponding facts about $y$ and $x/y$. Let $I_1,I_2 \subseteq [0,1]$
be any pair of intervals (open, closed, or half-open).
Let $a,b$ be the endpoints of $I_1$ and $c,d$ the endpoints
of $I_2$. To compute
$\Pr(y \in I_1, \; x/y \in I_2)$ it suffices to make the following
two observations:
\begin{align*}
\Pr(y \in I_1, \; x/y=1) &= \Pr
\left( a \leq \gamma_1^{1/p} \leq b, \;
0 \leq \gamma_2^{1/q} \leq \gamma_1^{1/p} \right) \\
&= \Pr \left( a^p \leq \gamma_1 \leq b^p, \;
0 \leq \gamma_2 \leq \gamma_1^{q/p} \right) \\
&= \int_{a^{p}}^{b^{p}} t^{q/p} \, dt = \int_{a^p}^{b^p} t^{1/p - 1} \, dt
= p (b - a) \\[2em]
\Pr(y \in I_1, \; x/y \in I_2 \setminus \{1\}) &=
\Pr(a \leq \gamma_2^{1/q} \leq b, \;
c \gamma_2^{1/q} \leq \gamma_1^{1/p} < d \gamma_2^{1/q}) \\
&=
\Pr(a^q \leq \gamma_2 \leq b^q, \;
c^p \gamma_2^{p/q} \leq \gamma_1 \leq d^p \gamma_2^{p/q}) \\
&= \int_{a^q}^{b^q} (d^p - c^p) t^{p/q} \, dt
= (d^p-c^p) \int_{a^q}^{b^q} t^{1/q - 1} \, dt
= q (d^p-c^p) (b-a).
\end{align*}
Therefore,
\[
\Pr(y \in I_1, \; x/y \in I_2) =
(b-a) \cdot \begin{cases}
q (d^p - c^p) & \mbox{if $1 \not\in I_2$} \\
p + q (d^p - c^p) & \mbox{if $1 \in I_2$}
\end{cases}.
\]
From this formula it follows that
$y$ and $x/y$ are independent,
$y$ is uniformly distributed, $\Pr(x/y=1)=p$,
and the distribution of $(x/y)^p$ conditional
on $x/y \neq 1$ is uniform on $[0,1)$.
\end{proof}

\section*{Acknowledgments}
We are indebted to Tim Roughgarden for suggesting that for positive types, an improved bound on the social welfare is possible (see Section~\ref{sec:welfare}). We would like to acknowledge that the preliminary form of a generalization of our ``generic transformation" to negative bids, and (the preliminary form of) the applications to offline mechanism design, have been derived jointly with Jason Hartline.

\bibliographystyle{ACM-Reference-Format-Journals}
\bibliography{bib-abbrv,bib-bandits,bib-slivkins,bib-AGT}

\end{document}